\newtheorem{theorem}{Theorem}
\newtheorem{lemma}[theorem]{Lemma}
\newtheorem{corollary}[theorem]{Corollary}
\newtheorem{assumption}{Assumption}
\newtheorem{proposition}[theorem]{Proposition}
\newtheorem{remark}{Remark}
\newtheorem{definition}{Definition}
\renewenvironment{proof}[1][\proofname]{{\noindent \bfseries\itshape #1.}}{\qed}
\newcommand{\onep}{\mathbbm{1}_{t_{k+1}}^+}
\newcommand{\onem}{\mathbbm{1}_{t_{k+1}}^-}
\newcommand{\onepj}{\mathbbm{1}_{t_{j+1}}^+}
\newcommand{\onemj}{\mathbbm{1}_{t_{j+1}}^-}
\newcommand{\onedj}{\mathbbm{1}_{t_{j+1}}^{\delta}}
\newcommand{\onen}{\mathbbm{1}_{t_{k+1}}}
\newcommand{\cp}{c_{t_{k+1}}^{+}}
\newcommand{\cm}{c_{t_{k+1}}^{-}}
\newcommand{\cpj}{c_{t_{j+1}}^{+}}
\newcommand{\cmj}{c_{t_{j+1}}^{-}}
\newcommand{\cdj}{c_{t_{j+1}}^{\delta}}
\newcommand{\Pp}{p_{t_{k+1}}^{+}}
\newcommand{\Pm}{p_{t_{k+1}}^{-}}
\newcommand{\Ppj}{p_{t_{j+1}}^{+}}
\newcommand{\Pmj}{p_{t_{j+1}}^{-}}
\newcommand{\Pdj}{p_{t_{j+1}}^{\delta}}
\newcommand{\lp}{L_{t_{k}}^{+}}
\newcommand{\lm}{L_{t_{k}}^{-}}
\newcommand{\lpj}{L_{t_{j}}^{+}}
\newcommand{\lmj}{L_{t_{j}}^{-}}
\newcommand{\ldj}{L_{t_{j}}^{\delta}}
\newcommand{\skk}{S_{t_{k+1}}}
\newcommand{\sk}{S_{t_{k}}}
\newcommand{\sjj}{S_{t_{j+1}}}
\newcommand{\sj}{S_{t_{j}}}
\newcommand{\tkk}{{t_{k+1}}}
\newcommand{\tk}{{t_{k}}}
\newcommand{\tjj}{{t_{j+1}}}
\newcommand{\tj}{{t_{j}}}
\newcommand{\tu}{{t_{u}}}
\newcommand{\F}{\mathcal{F}_{t_{k}}}
\newcommand{\Fj}{\mathcal{F}_{t_{j}}}
\newcommand{\Gt}{\mathcal{G}_{t_{k+1}}}
\newcommand{\Gtj}{\mathcal{G}_{t_{j+1}}}
\newcommand{\Hkkk}{\mathcal{H}_{t_{k+1}}^\varpi}
\newcommand{\Hkk}{\mathcal{H}_{t_k}^\varpi}
\newcommand{\Hjj}{\mathcal{H}_{t_j}^\varpi}
\newcommand{\bx}{\mathbf{\hat{x}}}
\newcommand{\by}{\mathbf{\hat{y}}}
\newcommand{\bz}{\mathbf{\hat{z}}}
\newcommand{\FKK}{\mathcal{F}_{t_{k+1}}}
\newcommand{\E}{\mathbb{E}}
\newcommand{\muo}{\mu_c}
\newcommand{\muoo}{\mu_{cp}}
\newcommand{\mut}{\mu_{c^2}}
\newcommand{\muto}{\mu_{c^2p}}
\newcommand{\mutt}{\mu_{c^2p^2}}
\newcommand{\Aone}{{}^{\scaleto{(1)}{5pt}}\!A_{\tk}}
\newcommand{\etk}{\pmb{e}_\tk}
\newcommand{\etkk}{\pmb{e}_\tkk}
\newcommand{\R}{\mathbb{R}}
\newcommand{\A}{\mathcal{A}}
\newcommand{\Ex}{\mathbb{E}}
\newcommand{\Px}{\mathbb{P}}
\newcommand{\sbt}{\, \begin{picture} (-1,1)(-1,-3)\circle*{3}\end{picture}\ }
\definecolor{blue}{HTML}{1F77B4}
\definecolor{orange}{HTML}{FF7F0E}
\definecolor{green}{HTML}{2CA02C}
\definecolor{Red}{rgb}{1,0,0}
\definecolor{DRed}{rgb}{0,0,0}
\newcommand{\DRed}{\color{DRed}}
\definecolor{Green}{rgb}{0.2,0.5,0.2} 
\definecolor{Blue}{rgb}{0,0,0}
\newcommand{\Blue}{\color{Blue}}
\definecolor{Orange}{rgb}{0.605,0.311,0.084} 
\definecolor{Black}{rgb}{0,0,0} \newcommand{\Black}{\color{Black}}
\definecolor{Grey}{rgb}{0.52, 0.52, 0.51} 
\definecolor{Purple}{rgb}{0.333,0.036, 0.631} \newcommand{\Purple}{\color{Purple}}
\definecolor{Aqua}{rgb}{0.018,0.592,0.39}
\begin{document}

\title{Adaptive Optimal Market Making Strategies with Inventory Liquidation Cost\footnote{A preprint of this paper was distributed under the title of ``Market Making with Stochastic Liquidity Demand: Simultaneous Order Arrival and Price Change Forecasts". The present paper extends the results in the referred preprint,  which will remain as an unpublished manuscript.}}
%
\author{Jonathan Ch\'{a}vez-Casillas\thanks{Department of Mathematics and Applied Mathematical Sciences, University of Rhode Island, USA ({\tt jchavezc@uri.edu}).}, Jos\'e E. Figueroa-L\'opez\thanks{Department of Statistics and Data Science, Washington University in St. Louis, St. Louis, MO 63130, USA ({\tt figueroa-lopez@wustl.edu}). Research supported in part by the NSF Grants: DMS-2015323, DMS-1613016.}, Chuyi Yu\thanks{Department of Mathematics, Washington University in St. Louis, St. Louis, MO 63130, USA ({\tt chuyi@wustl.edu}).}, and  Yi Zhang\thanks{Department of Statistics, UIUC, IL, USA ({\tt yiz19@illinois.edu}).}}
%
%
%
\maketitle 
\begin{abstract}
A novel high-frequency market-making approach in discrete time is proposed that admits closed-form solutions. By taking advantage of demand functions that are linear in the quoted bid and ask spreads with random coefficients, we model the variability of the partial filling of limit orders posted in a limit order book (LOB). As a result, we uncover new patterns as to how the demand's randomness affects the optimal placement strategy. We also allow the price process to follow general dynamics without any Brownian or martingale assumption as is commonly adopted in the literature. The most important feature of our optimal placement strategy is that it can react or adapt to the behavior of market orders online. Using LOB data, we train our model and reproduce the anticipated final profit and loss of the optimal strategy on a given testing date using the actual flow of orders in the LOB. Our adaptive optimal strategies outperform the non-adaptive strategy and those that quote limit orders at a fixed distance from the midprice.
\end{abstract}

\section{Introduction}\label{ch1}

\subsection{Overview}
In a financial market, a market maker (MM) provides liquidity to the market by repeatedly placing bid and ask orders into the market and profiting from the bid-ask spread of her orders. {\Blue The literature of market making is extensive (see, e.g., the monographs of \cite{algo} and \cite{GueantBook} for references in the subject). In Subsection \ref{OtherRelWrk} below, we mention a few important works in addition to those more closely related to our model, which are reviewed through this part.} 
In the context of a Limit Order Book based market, we consider an intraday high-frequency MM who quotes both bid and ask limit orders {(LO)} at some prespecified discrete times and liquidates her inventory at the end of the trading period. {As is} often assumed in the literature, the terminal liquidation cost {\Blue or price impact}, originated {\DRed from} the use of a market order, is modeled as $I_T(S_T-\lambda I_T)$, with $S_T$ and $I_T$ {\DRed respectively} denoting the final fundamental stock price and {the} MM's inventory. Here, $\lambda$ is a constant `penalization' parameter. We aim to maximize the final Profit and Loss (PnL), $W_T+I_T(S_T-\lambda I_T)$, at the end of the trading period $T$, where $W_T$ is the MM's final wealth. Her wealth and inventory trajectory are determined by the prices of her quotes and the number of shares that are filled or lifted from her orders at these prices.

Modeling the number of lifted shares between consecutive actions is a key element of our framework. In continuous-time control problems, a common approach is to model the probability with which an incoming market order (MO) can lift one share of the MM's LO in the book (known as `lifting probability'). This approach, rooted in the seminal work of  \cite{Ho_Stoll_1981}, was popularized by the work of \cite{avellaneda2008high} and later on by other important works including \cite{gueant2013dealing},  \cite{cartea2014buy}, among others. 
For instance, in the seminal work of \cite{cartea2015risk}, it is assumed that MOs arrive according to a Poisson process and the lifting probability is modeled as the exponential of the negative distance of the MM's quote from the fundamental price times a constant. 

{An alternative approach
is to directly model the number of lifted} shares {between actions} via a liquidity demand function.  
For instance, in their work on price pressures, \cite{hendershott2014price} assume that the liquidity demand is normally distributed with a mean {parameter} that is linear in the bid and ask {\DRed quoted prices} {and constant variance}. 
{
\cite{adrian2016intraday}  propose a demand function that decreases linearly with the \emph{distance} of the quotes from a reference price, though their demand function is further restricted to be deterministic.} {\DRed We refer the reader to Remark \ref{CnctCntDscr} below for some further discussion about possible connections between the lifting probabilities- and the demand function-based approaches.}

Our work extends existing models of high-frequency market-making in several ways. As in  \cite{adrian2016intraday}, 
we assume the demand to be linear {\DRed in the spreads} when modeling the number of filled shares from the MM's limit orders. However, {in our case, the demand} is not {\it deterministic but {stochastic}}. This means that the actual number of shares bought or sold varies over time, even if the distances of quotes from the reference stock price stay the same. The resulting optimal placement strategy does not boil down to simply replacing the constant demand slope and reservation price in the optimal strategy obtained in \cite{adrian2016intraday} with their respective average values, but also depends on their `second-order' information and their mutual correlation. {The proposed randomization not only allows for greater flexibility and better fit to empirically observed order flows, but also uncovers novel properties of the resulting optimal placement strategies. For instance, it is known from \cite{adrian2016intraday} that, under a constant demand slope, the {required} inventory adjustment in the optimal placement at any given time decreases with the size of the slope. We show that the variance of the slope further reduces {\DRed the strength of this} adjustment. This implies that assets with more volatile demand profiles require less strict inventory adjustments. We also find that the optimal placement spreads (i.e., the distances between the optimal bid and ask prices and the fundamental price) increase with the correlation between the demand slope and investors' reservation price. 

Another distinguishing feature of our study is that we allow {a general reference stock price process without assuming that its dynamics follow that of a martingale or some kind of parametric specification (e.g., a Brownian It\^o semimartingale)}. We obtain a parsimonious formula that describes how the investor should adjust {online her LO placements based on her ongoing forecasts of future asset prices}. Intuitively, if the MM expects future price changes to be negative, {\Blue she will reduce the ask spread and increase the bid spread, proportionally to the expected price change}. The proportionality constant depends on the model parameters in a non-trivial way, which we characterize precisely. This feature could also enable the investor to take advantage of sophisticated time series or machine learning-based forecast procedures {for} asset prices {and incorporate them} into the intraday market-making process.

One of the key factors that affects the performance of any placement strategy is the arrival intensity of MOs on either side of the book. A {\DRed successful} placement strategy should incorporate, in an {online manner},  current information about the intensity level of MOs or about imbalances in the likelihood of {buying and selling MOs}. In other words, it is desirable that the placement strategy {adapts} to the local behavior of MOs. In {many} works, the intensity of MOs is fixed a priori as a deterministic function of time.
{In} reality, the intensity of MO is highly random and `rough' as indicated by Fig.~\ref{fig1111} below, where the {averages} of the indicators signaling {the} arrivals of buy MOs in a rolling window are plotted. However, as {\DRed shown} by the same figure, the intensity's level can be tracked or predicted quite well in an adaptive or online manner (see {\DRed Subsection \ref{EstPiSS0} for details as to how to perform this prediction}) {and it would be desirable that the optimal strategy incorporates this information {\Blue online}}. 
 
\begin{figure}[h]
	\centering
	\vspace{.5 cm}
	\includegraphics[width=0.8\textwidth]{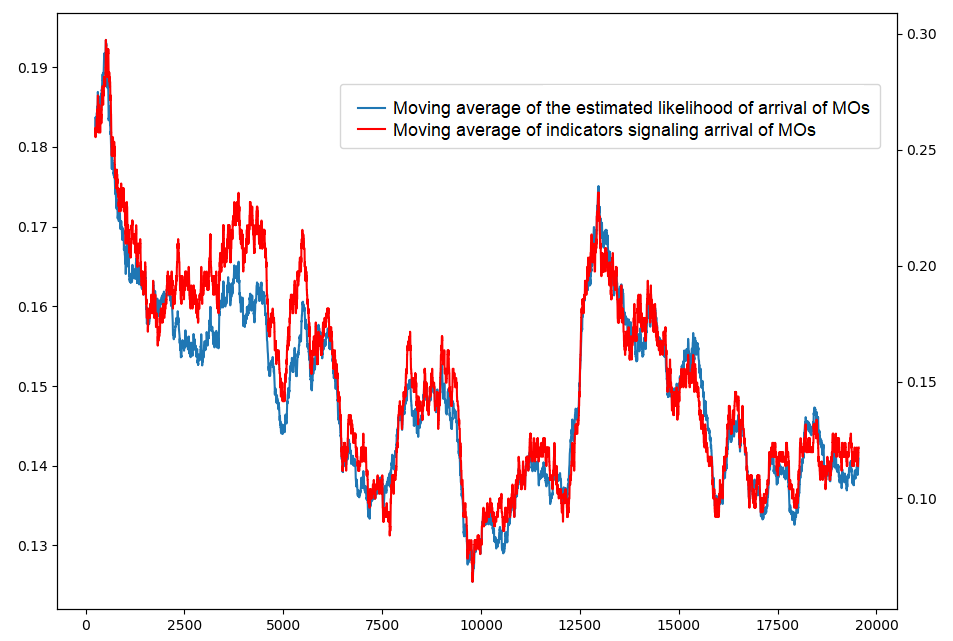}
	\vspace{.05 cm}
	\caption{Moving average of adaptive arrival probabilities of buy MO and the moving average of the indicators signaling arrival of MOs in consecutive intervals of 1 second based on LOB MSFT data on July 11th, {\Blue 2019}. {The window size of the rolling moving average is 500. {\Blue Note that the scale for the blue (red) curve is shown on the left (right) side of the figure.}}}\label{fig1111}
	\vspace{.5 cm}
\end{figure}

By construction, it would seem that adaptive placement strategies are not feasible since the dynamic programming problem to find them is solved in a backward manner in time, which contradicts the direction of a natural learning process that proceeds forward in time. However, in this work, we offer a natural approach to {resolve} this riddle. Essentially, we propose to create a `catalog' of optimal strategies depending not only on the current asset price and inventory level as well as future price forecasts but also on the recent history of MOs arrivals. In some figurative sense, we create parallel `universes', one for each possible combination or scenario of past MOs events, and solve the optimal placement in each of those universes. We do this by making the conditional probabilities of arrivals of MOs dependent on the recent history of MOs. When implementing the placements strategy, the MM observes the recent history or combinations of MOs to determine in which `universe' or scenario she is in, and places her LOs {accordingly} using the catalog of optimal strategies. 

To put our proposed approach to the test, we implement our optimal placement strategy using actual LOB data. Specifically, for a given testing day, we start calibrating the model parameters using LOB data of the past few days and then compute, backward in time, the optimal placement strategy for each possible scenario of consecutive MOs. Next, we roll forward our optimal placement strategy using the actual LOB events of the testing day to determine in an online {manner} the scenario we are at and choose the optimal placements accordingly. We then compute the MM's cash flows and inventory changes using the actual flow of MOs {and {the LOB} state}. {At the end of the trading period}, the MM submits a MO to liquidate its final inventory and determine the actual cost taking into account the state of the LOB. We repeat this procedure for each day {of a} 1-year time span. 
We find that our optimal placement yields, on average, larger revenue compared to those where the intensity of MO is assumed to be deterministic (time-dependent). Our empirical analysis also lends strong support to demand stochasticity: the slope coefficient has a standard deviation that is about 200\% larger than the average demand level, and a correlation of about 20\% with the investors' reservation price. Moreover, using real LOB data we estimate the optimal placement strategy based on a simple one-step ahead {price process} forecast {\DRed and compare it to} the one that presumes a martingale price evolution.

\subsection{Other relevant works}\label{OtherRelWrk}
{\Blue Optimal market making problems have a long history. In this part we mention a few important works that have not previously been discussed. Early contributions include those of \cite{Bradfield}, who analyze the increasing price variability induced by strategies that target a flat end-of-day inventory level, and \cite{OharaOldfield}, who consider a repeated optimal market making problem, in which each day consists of several trading periods, and the market maker maximizes utility over an infinite number of trading days while facing end-of-day inventory costs. More recently, \cite{Guilbaud} studied the performance of a MM submitting buy/sell LOs at the best two bid and ask levels, while \cite{Guilbaud2} also considered agents that can submit market orders. Both of these works assume a constant spread of one tick (the so-called large-tick stocks) and price dynamics that move 1 tick at a time. Other works in the same vein include \cite{Fodra} and \cite{Fodra2}. In all these works, there is no partial filling and the MM's LOs are filled in its totality when they are lifted. 
 
Stochastic demand functions of different type have been considered in other works. As mentioned above, both \cite{cartea2015risk} and \cite{cartea2014buy} modeled the demand during a given time interval via filling probabilities, which depend on the distance between the quotes and the fundamental price. While in the first of those two works, the features of these probabilities are assumed to be deterministic, in the second work, those are assumed to be stochastic. As explained in Remark \ref{CnctCntDscr} below, there is a possible connection between the linear demand assumption adopted in this work and the fill probabilities approach, but, in general, the two models are not equivalent. It is worth mentioning that liquidity models with stochastic features have also been considered in other problems of algorithmic trading. For instance, \cite{Lorig} and \cite{Berecher} both assume stochastic price impact of trades in optimal liquidation problems.

More recent works in the area have also incorporated other model features such as price impact (\cite{cartea2014buy}, \cite{Lorig}), model ambiguity (\cite{carteaDonnellyetal}, \cite{Nystrom}), and latency (\cite{CarteaLatency}, \cite{Gao}). More recently, \cite{Bergault} introduce a different modeling approach to incorporate different transaction sizes and the possibility for the MM to respond to requests with different sizes using marked point processes}.

\subsection{Outline of the paper}
{The rest of the paper is organized as follows. In Section 2, we present the model setup {and our assumptions together with the Bellman equation for our problem and its explicit solutions}. In Section 3, we assess the performance of our market-making strategy against real LOB data. {\Blue We finish with a conclusion section.} We {defer the proofs to an appendix section}.}

\section{A Finite-Horizon Optimal Control Problem for a Market Maker} \label{ch2}

{In this section, 
we introduce the model along with the relevant notation and its assumptions. Then, by using the Dynamic Programming Principle (DPP), we propose an adaptive trading strategy and by using the verification theorem, it is shown that, indeed, the  solution is optimal. Finally, its admissibility is {investigated}. 
All the proofs of this section will be deferred to Appendix \ref{appen:section:2}}.

\subsection{The Model and its Assumptions}

{We assume {that} a high-frequency (HF) Market Maker (MM)} places simultaneously buy and sell LOs at {some preset discrete} times $0=t_0<t_1<\ldots<t_N$, where $t_N<t_{N+1}:=T$ {and hereafter $T$ represents the terminal time of the trading}. {All} the random variables used in the model are defined on the same probability space $(\Omega,\mathcal{F},\mathbb{P})$ equipped with {an information} filtration $\{\mathcal{F}_t\}_{t\in\mathcal{T}}$, where $\mathcal{T} = \{t_0,t_{1},\ldots, t_{N+1}\}$. { For $k=0,\dots,N$, let {$\onep\in\mathcal{F}_{t_{k+1}}$} ({$\onem\in\mathcal{F}_{t_{k+1}}$}) {be Bernoulli random} variables} indicating whether at least one buy (sell) MO arrived during the time period $[t_k,t_{k+1})$, i.e.,
\begin{equation}\label{eq:MOind}
\begin{aligned}
\onep &= \mathbbm{1}_{\{\text{At least one buy MO arrives during }[t_k,t_{k+1})\}},\\
\onem &= \mathbbm{1}_{\{\text{At least one sell MO arrives during }[t_k,t_{k+1})\}}.
\end{aligned}
\end{equation}
Let $\varpi$ be a fixed positive integer and define the lag-$\varpi$ recent history of MOs at time {$t_k$ as
\begin{equation}\label{eq:defn:etk}
\begin{aligned}
\pmb{e}_{t_{k}}&=(\mathbbm{1}_{t_{k}}^+,\mathbbm{1}_{t_{k}}^-,
\dots,\mathbbm{1}_{t_{k-\varpi+1}}^+,\mathbbm{1}_{t_{k-\varpi+1}}^-) \in \{0,1\}^{2\varpi}.
\end{aligned}
\end{equation} 
We often use the shorthand notation $\pmb{e}_{t_{k}}=(\mathbbm{1}_{t_{k}}^{\pm},\dots, \mathbbm{1}_{t_{k-\varpi+1}}^{\pm})$. 
Above, we are assuming that $t_0=0$ is the beginning of the MM's trading {and that there} is a sufficiently large ``burn-out'' period prior to it. In particular, the indicators (\ref{eq:MOind}) are also defined for $k=0,-1,\dots$ by setting some times $0>t_{-1}>t_{-2}>\dots$ before the beginning of the trading at time $0$. Thus, for instance, $\mathbbm{1}_{t_{0}}^{\pm}$ is $1$ if at least one MO arrived during the time period $(t_{-1},0]$ and $\pmb{e}_{t_{0}}=(\mathbbm{1}_{t_{0}}^{\pm},\mathbbm{1}_{t_{-1}}^{\pm},\dots,\mathbbm{1}_{t_{-\varpi+1}}^{\pm})$ represents the indicators of MOs in the {\Blue $\varpi$} time periods previous to $0$}.
For future reference, let us introduce some notation. For  any adapted process $u=\{u_{t_k}\}_{k\geq{}0}$, {we set} 
\begin{equation}\label{Dfnalh0}
\begin{aligned}
\begin{array}{ll}
{u}_\tkk^0 :=\mathbb{E}\left[\left.{u}_\tkk \,\right|\,\F\right], \\
{{u}_\tkk^{i\pm} :=\mathbb{E}\left[\left.{u}_\tkk \,\right|\,\F,\mathbbm{1}^\pm_\tkk=i\right], \quad i\in\{0,1\}},\\
{{u}_\tkk^{i,j} :=\mathbb{E}\left[\left.{u}_\tkk \,\right|\,\F,\mathbbm{1}^+_\tkk=i,\mathbbm{1}^-_\tkk=j\right],\quad i,j\in\{0,1\}},
\end{array}
\end{aligned}
\end{equation}
{which can be considered some {sort} of one-step ahead forecasts of $u$}. Note that all these processes are adapted to the information process $\{\mathcal{F}_t\}_{t\in\mathcal{T}}$.

As mentioned before, at each time $t_k$, the {MM} will place simultaneously a buy and a sell LO. {\Blue Her sell} LO will be submitted with an execution price of $a_{t_k}${,} while {\Blue her buy} LO will have an execution price of $b_{t_k}$. The volume of these orders {is typically set to be} the average volume {of submitted LO in} the stock of interest. 
Both $a_{t_k}$ and $b_{t_k}$ are the MM's `controls'. It {will} be important to reparameterize the controls relative to a reference price $S_{t_k}$ associated with the stock such as the midprice or other related proxy. 
The MM's buy (sell) LOs submitted at time $t_k$ ($k=0,\dots,N$) will be matched against the sell (buy) MOs submitted during the period $[t_k,t_{k+1})$ as follows. Let us first consider the ask side. If $S_\tk$ denotes the reference {or fundamental} price of the stock at time $t_k$ and the {MM}'s ask LO is placed $L_\tk^+$ above {$S_{t_k}$} (i.e.{,} $a_{t_k}=S_{t_k}+L_\tk^+$), then the total number of shares {from the {MM} that are {sold} during $[t_k,t_{k+1})$ is denoted by $Q^+_{t_{k+1}}$ and is given {by}} 
\begin{equation}\label{pdemand}
\begin{aligned}
Q^+_{t_{k+1}}&:= \onep\cp\big[(\sk+\Pp)-(\sk+\lp)\big] =\onep\cp(\Pp-\lp),
\end{aligned}
\end{equation}
where {\Blue $\Pp,\cp \in \mathcal{F}_{\tkk}$ are nonnegative random variables. Broadly (but not literally) $\Pp$ is related to the maximum depth that buy MOs {walk} into the LOB {during $[t_k,t_{k+1})$} and $\cp$ is such that $\cp\Pp$ indicates the executed volume of a sell LO if this {were} placed at the {same level as the} reference price (that is, if $\lp=0$).} 
{We can also interpret $p^+_{t_{k+1}}$ as the buyers' reservation price in the market; i.e., the highest price that `buyers' in the market are willing to pay for the stock.} {\Blue These interpretations should not be taken literally as explained in points 2-4 of Remark \ref{InterpDemnd} below.}

We analogously define the {corresponding} quantities for the bid side of the book. That is, provided that at least one sell MO arrives during the time interval $[t_k,t_{k+1})$, $Q^-_\tkk$ will be the executed volume of the {MM's} buy LO placed at time $t_{k}$ at the price level $b_\tk$. {{Similarly to $a_\tk$}, we reparameterize $b_\tk$ in terms of the {\DRed distance} $L_\tk^-$ below the reference price $S_\tk$ {so that} $b_\tk=S_\tk-L_\tk^-$}. Similarly to \eqref{pdemand}, $Q^-_\tkk$ is {modeled as}
\begin{equation}\label{mdemand}
\begin{aligned}
Q^-_\tkk&:= \onem\cm\big[(\sk-\lm)-(\sk-\Pm)\big] =\onem\cm(\Pm-\lm),
\end{aligned}
\end{equation}
where $\Pm \in \mathcal{F}_{\tkk}$ and $\cm \in \mathcal{F}_{\tkk}$ has analogous interpretations as  $\Pp$ and $\cp$. Above, both $L^{+}_{t_k}$ and $L^{-}_{t_k}$ are the {MM `controls', while the reference price $S_{t_k}$ is exogenously determined by market conditions {independently from} the MM actions.} 
The {\Blue form of the function of} $Q^\pm_{\tkk}$ is illustrated in Figure \ref{fig:linearQ}. 

\begin{remark}\label{InterpDemnd}
	Some comments are in order to clarify our model assumptions \eqref{pdemand}-\eqref{mdemand} and contrast to earlier work:
	\begin{enumerate}
	\item  {\Blue As mentioned in the introduction, in a continuous-time setting, \cite{adrian2016intraday} considered demand functions similar to \eqref{pdemand}-\eqref{mdemand}, but with $c$ and $p$ being known deterministic constants. Then, the actual numbers of shares bought or sold over $[t_k,t_{k+1})$ depend only on the spreads $L_{t_{k}}^\pm$. Introducing randomness is more realistic since the actual demand during $[t_k,t_{k+1}]$, not only depend on the spreads of the quotes, but also on the initial state of the book, which is hard to summarize and incorporate given its high-dimensionality and variability, and on the flow of orders during the interval, which is extremely unpredictable at time $t_k$. One may argue that for the purposes of market making all what matters in the average demand during the given trading period $[0,T]$. That is, all what we need is to take the deterministic demand functions $\mathbbm{1}_{t_{k+1}}^{\pm}\mu_{c}^\pm(\mu_{p}^{\pm} -L_{t_k}^{\pm})$ or $\mathbbm{1}_{t_{k+1}}^{\pm}(\mu_{cp}^{\pm} -\mu_{c}^\pm L_{t_k}^{\pm})$, where $\mu_{c}^{\pm}$, $\mu_{p}^{\pm}$, $\mu_{cp}^\pm$ are the average values of $c$, $p$, and $cp$ over $[0,T]$, respectively. As explained in the items 1 and 4 of page 18 and 19 (see formulas \eqref{WNTSqrI} and \eqref{DfnTldL0bcNFFbAp} therein), the randomness of $c$ and the correlation between $c$ and $p$ play key roles in the optimal MM strategy. This is further verified in our numerical/empirical Section \ref{sec:Data} (see Tables \ref{tab:TerminalRewardaba}-\ref{tab:TerminalRewardbb} therein).}
	\item As mentioned above, the actual demand during $[t_k,t_{k+1})$ corresponding to placements $L_{t_k}^{\pm}$ would depend on the shape of the book at time $t_k$ as well as the volume and timing of MMs and the arrival of other LOs and cancellations during that interval. Figure \ref{DemandPlotDemo} shows the approximate demand during a prototypical time interval where at least one MM order arrived (see Subsection \ref{DmdFncEsD} for details about how to estimate such a demand). Then, in a nutshell, $c^\pm_{t_{k+1}}$ and $p^\pm_{t_{k+1}}$ are chosen so that the resulting linear model $c^\pm_{t_{k+1}}(p^\pm_{t_{k+1}}-L^\pm_{t_{k+1}})$ is close to the `actual' demand. 
In other words, \eqref{pdemand}-\eqref{mdemand} are viewed as the {`best'} linear fit for the actual demand during a given time period.

	\item One of the obvious drawbacks of \eqref{pdemand}-\eqref{mdemand} is that, in principle, we are assuming the possibility of negative demands (negative number of units sold or bough), which is obviously not realistic. But, this would happen only if the LO quotes $L^{\pm}_{t_k}$ were large compare to $S_{t_{k}}$. This assumption would then be an issue if the resulting optimal placements were, at times, far away from the reference price, since, in that case, the optimal strategy may be favoring or looking for negative demands, which in reality are not feasible. In the context of Figure \ref{DemandPlotDemo}, this would be the case if the optimal placements were more than about 6 ticks away from $S_{t_k}$ that is when the linear demand functions start to produce negative values. However, our empirical implementation in Section \ref{sec:Data} shows us that the resulting optimal placements are almost never far away from the reference price (almost always less than or equal to 4 ticks away) and, thus, the linear demand assumption is not an issue in practice. 
	
	\item Above it was mentioned that $\Pp$ is connected to the maximum depth that the buy MOs {walk} into the LOB during $[t_k,t_{k+1})$. This is because, under this interpretation, we obviously have that if  $L^+_{t_{k}}>p^+_{t_{k+1}}$, then the corresponding demand should be $0$. However, as mentioned above, it is more accurate to see $p^+_{t_{k+1}}$ as the value for which $c^\pm_{t_{k+1}}(p^\pm_{t_{k+1}}-L^\pm_{t_{k+1}})$ provides a good fit for the demand when $L^\pm_{t_{k+1}}$ is small.
	\end{enumerate}
\end{remark}

\tikzset{every picture/.style={line width=0.75pt}} 
\begin{figure}
\centering
\tikzset{every picture/.style={line width=0.75pt}} 

\begin{tikzpicture}[x=0.75pt,y=0.75pt,yscale=-1,xscale=1]

\draw  (50,142.02) -- (418.54,142.02)(75.83,19.72) -- (75.83,166.22) (411.54,137.02) -- (418.54,142.02) -- (411.54,147.02) (70.83,26.72) -- (75.83,19.72) -- (80.83,26.72)  ;
\draw [line width=1.5]    (232.5,67.8) -- (99.07,142.02) ;
\draw [line width=1.5]    (233.5,45.8) -- (398.92,142.02) ;
\draw   (173.28,142.02) -- (168.05,142.02)(170.67,139.4) -- (170.67,144.63) ;
\draw   (316.52,142.02) -- (311.28,142.02)(313.9,139.4) -- (313.9,144.63) ;
\draw   (170.67,158) .. controls (170.62,158.15) and (172.93,160.51) .. (177.6,160.56) -- (192.15,160.7) .. controls (198.82,160.77) and (202.13,163.14) .. (202.08,165) .. controls (202.13,163.14) and (205.48,160.84) .. (212.15,160.91)(209.15,160.88) -- (226.71,161.06) .. controls (231.38,161.11) and (233.73,158.8) .. (233.78,156) ;
\draw   (234.44,156) .. controls (234.47,158.81) and (236.82,161.12) .. (241.49,161.08) -- (264.23,160.89) .. controls (270.9,160.84) and (274.25,163.14) .. (274.28,165) .. controls (274.25,163.14) and (277.56,160.78) .. (284.22,160.73)(281.22,160.75) -- (306.96,160.54) .. controls (311.63,160.51) and (313.94,158.16) .. (313.9,158) ;
\draw [line width=1.5]  [dash pattern={on 5.63pt off 4.5pt}]  (233,29.5) -- (232.82,142.67) ;
\draw  [dash pattern={on 0.84pt off 2.51pt}]  (311.5,89.8) -- (313.79,141.57) ;
\draw  [dash pattern={on 0.84pt off 2.51pt}]  (170.5,101.8) -- (170.67,141.36) ;
\draw   (235.52,142.02) -- (230.28,142.02)(232.9,139.4) -- (232.9,144.63) ;
\draw   (101.52,142.02) -- (96.28,142.02)(98.9,139.4) -- (98.9,144.63) ;
\draw   (400.52,142.02) -- (395.28,142.02)(397.9,139.4) -- (397.9,144.63) ;
\draw  [dash pattern={on 0.84pt off 2.51pt}]  (75.5,67.8) -- (232.5,67.8) ;
\draw  [dash pattern={on 0.84pt off 2.51pt}]  (76.5,45.8) -- (233.5,45.8) ;

\draw (239,152) node  [inner sep=0.75pt]  [font=\scriptsize,rotate=-359,xslant=0]  {$S_{\tk}$};
\draw (393.63,144.31) node [anchor=north] [inner sep=0.75pt]  [font=\scriptsize]  {$S_{\tk} +p_{\tkk}^{+}$};
\draw (106,144.31) node [anchor=north] [inner sep=0.75pt]  [font=\scriptsize]  {$S_{\tk} -p_{\tkk}^{-}$};
\draw (170.34,152)  node  [font=\scriptsize]  {$b_{\tk}$};
\draw (314.23,152) node  [font=\scriptsize]  {$a_{\tk}$};
\draw (202.39,174.14) node  [font=\scriptsize]  {$L_{\tk}^{-}$};
\draw (275.19,173.28) node  [font=\scriptsize]  {$L_{\tk}^{+}$};
\draw (322.14,84.49) node  [font=\scriptsize]  {$Q_{\tkk}^{+}$};
\draw (160.94,93.09) node  [font=\scriptsize]  {$Q_{\tkk}^{-}$};
\draw (48.14,44.49) node  [font=\scriptsize]  {$c^{+}_{t_{k+1}} p^{+}_{t_{k+1}}$};
\draw (48.14,66.49) node  [font=\scriptsize]  {$c^{-}_{t_{k+1}} p^{-}_{t_{k+1}}$};
\draw (423.07,131.16) node [align=left] {{\scriptsize Price}};
\draw (83.81,4.44) node [align=left] {Number of filled shares};

\end{tikzpicture}
\caption{$S_{\tk}-p_{\tkk}^-$ is the lowest price that a sell market order can attain, and $S_{\tk}+p_{\tkk}^+$ is the highest price that a buy market order can attain during the time interval $[\tk,\tkk)$. The number of filled shares increase as the market maker places limit orders closer to the fundamental price $S_\tk$.}  \label{fig:linearQ}
\vspace{.5 cm}
\end{figure}
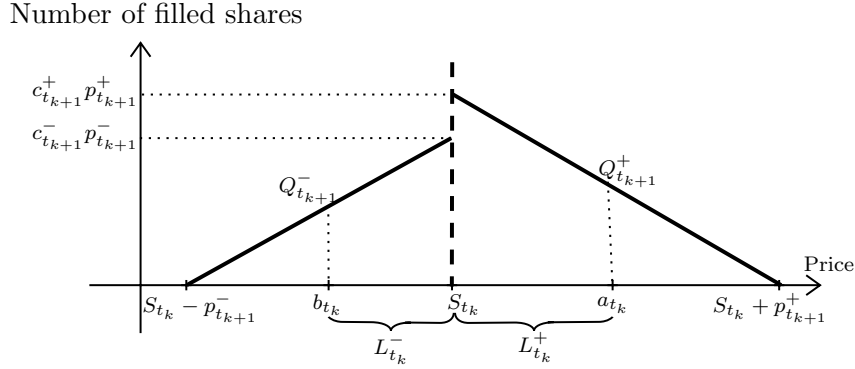

\begin{figure}
    \centering
  \includegraphics[width=1\textwidth]{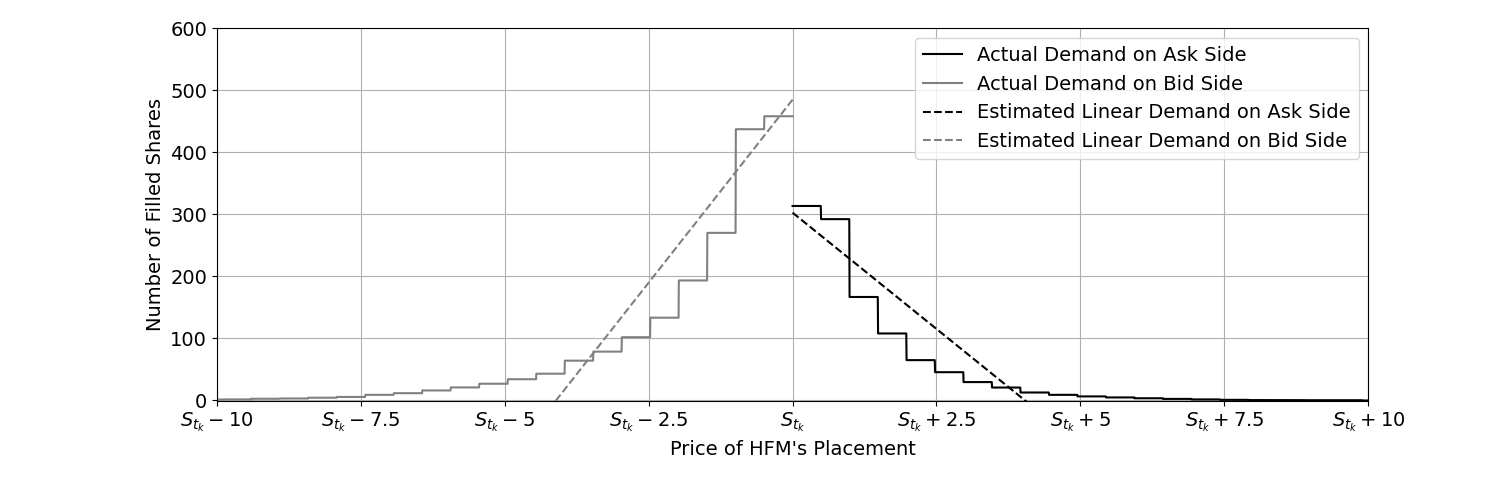}
  \caption{\textbf{{Prototypical Plot of the Actual Demand vs. Estimated Linear Demand over Time Interval $[\tk,\tkk)$.}}}
  \label{DemandPlotDemo}
  \vspace{.5 cm}
\end{figure}
Next we introduce the main assumptions on the distribution of the random variables $\onep,\onem,$ $c^+_{t_{k+1}},p^+_{t_{k+1}},c^-_{t_{k+1}}$, and $p^-_{t_{k+1}}$.

\begin{assumption} \label{assump:cp} For $k\in\{0,1,\ldots,N\}$, let {\DRed $\Gt:=\sigma(\F,\onep,\onem)$}. Then, 
\begin{enumerate}
		\item[{\rm (i)}] $(c^+_{t_{k+1}},p^+_{t_{k+1}})$ and $(c^-_{t_{k+1}},p^-_{t_{k+1}})$ are conditionally independent given $\Gt$.
		\item[{\rm (ii)}] The conditional distribution of $(c^+_{t_{k+1}},p^+_{t_{k+1}})$ given $\Gt$ is a measurable function depending solely on $\onep$, and it does not depend on $k$.
		\item[{\rm (iii)}] The conditional {distribution} of $(c^-_{t_{k+1}},p^-_{t_{k+1}})$ given $\Gt$ is a measurable function depending solely of $\onem$ that does not depend on {$k$.}
		\item[{\rm (iv)}] Let $\varpi$ be the fixed number defined in Eq.~\eqref{eq:defn:etk} and define $\Hkk=\sigma(\pmb{e}_{t_k})\subset\F$. Then, for some $d\in{\mathbb{N}}$, 
		{we assume the existence of} functions $g:\{0,1\}^{{2 \varpi}}\to\R^d$ and  $f,f^\pm:\R^d\to[0,1]$ such that for all $k\in\{0,1,2,\ldots,N\}$,
		\begin{equation}\label{eqn:gfunc:pi}
		\begin{aligned}
		\pi^\pm_{\tkk} &:= \mathbbm{P}(\mathbbm{1}_\tkk^\pm=1|\F)= \mathbbm{P}(\mathbbm{1}_\tkk^\pm=1|\Hkk) =f^\pm(g(\pmb{e}_\tk)),\\
		\pi_{\tkk}(1,1) &:= \mathbbm{P}(\onep=1,\onem=1|\F)=\mathbbm{P}(\onep=1,\onem=1|\Hkk)=f(g(\pmb{e}_\tk)).
		\end{aligned}
		\end{equation}
	\end{enumerate}
\end{assumption}

{By virtue of conditions (ii) and (iii) above, for each $m,n\in\mathbb{N}$, there exist functions $h^{+}_{m,n}:\{0,1\}\to\mathbb{R}$  and $h^{-}_{m,n}:\{0,1\}\to\mathbb{R}$ such that
\begin{equation}\label{eqn:mucpa0}
\Ex\Big[(c_{t_{k+1}}^\pm)^m(p_{t_{k+1}}^\pm)^n\Big|\mathcal{G}_{t_{k+1}}\Big]= \mathbb{E}\Big[(c_{t_{k+1}}^{\pm})^m(p_{t_{k+1}}^\pm)^n\Big|\mathbbm{1}_{t_{k+1}}^\pm\Big]=h^{\pm}_{m,n}(\mathbbm{1}_{t_{k+1}}^\pm),
\end{equation}
for any $k\in\{0,1,\ldots,N\}$. The optimal placement strategy will depend on the following nonrandom quantities:
\begin{equation}\label{eqn:mucp}
\mu_{c^mp^n}^{\pm}:=h^{\pm}_{m,n}(1)=\mathbb{E}\Big[(c_{t_{k+1}}^{\pm})^m(p_{t_{k+1}}^\pm)^n\Big|\mathbbm{1}_{t_{k+1}}^\pm=1\Big].
\end{equation}
When $m$ or $n$ are $0$, we simply write $\mu_{c^m}^{\pm}:=\mu_{c^mp^{0}}^{\pm}$ and $\mu_{p^n}^{\pm}:=\mu_{c^{0}p^n}^{\pm}$, and omit the exponents if $m$ and/or $n$ are $1$.} {Note also that 
\begin{align}\nonumber
	\pi_{\tkk}(1,0)&:=\Px\Big[\mathbbm{1}_{t_{k+1}}^+=1,\mathbbm{1}_{t_{k+1}}^-=0\,\Big|\,\F\Big]
	=\pi^+_{\tkk}-\pi_{\tkk}(1,1),\\
	\label{eq1234}	
	\pi_{\tkk}(0,1)&:=\Px\Big[\mathbbm{1}_{t_{k+1}}^+=0,\mathbbm{1}_{t_{k+1}}^-=1\,\Big|\,\F\Big]
	=\pi^-_{\tkk}-\pi_{\tkk}(1,1),\\
	\nonumber
	\pi_{\tkk}(0,0)&:=\Px\Big[\mathbbm{1}_{t_{k+1}}^+=0,\mathbbm{1}_{t_{k+1}}^-=0\,\Big|\,\F\Big]
	=1-\pi^+_{\tkk}-\pi^-_{\tkk}+\pi_{\tkk}(1,1),
\end{align}
and, thus, they all satisfy representations similar to \eqref{eqn:gfunc:pi} and, in particular, can be written as functions of $g(\pmb{e}_\tk)$.}

\begin{remark}\label{AdpHistMMD}
It is important to stress the {relevance} of the assumption given by Eq.~\eqref{eqn:gfunc:pi}. In \cite{adrian2016intraday} {\Blue and our earlier preprint \cite{zoe},} it is {assumed} that the probabilities {$\pi^\pm_{\tkk}$ and $\pi_{\tkk}(1,1)$} are {deterministic smooth functions of time, fixed} throughout the trading day. {In that case, for implementation purposes,} these functions {have to be} estimated at the beginning of the trading day {from}, for instance, historical data or other type of preliminary market analysis, {but once they are chosen, {they} cannot be changed through the trading day}. In the present work these probabilities are allowed to `react' to the `recent' history of buy/sell market orders $\pmb{e}_{t_k}=(\mathbbm{1}_{t_{k}}^\pm,\mathbbm{1}_{t_{k-1}}^\pm,\dots,\mathbbm{1}_{t_{k-\varpi+1}}^\pm)$, through a chosen function $g$. {The purpose of the function $g$ is two-fold: it summarizes the information contained in $\pmb{e}_{t_k}$ and it allows us to {alleviate} the computational burden by reducing the dimension of past information}. {This novelty enables the MM to adapt or adjust her trading strategy to the recently observed ``trades" in the market, which as shown empirically in {Section \ref{EstPisH}}, can provide a good forecast for the likelihood of a MO arriving on a given interval in either side of the book.} 
In our framework, the hyperparameter functions $f^{\pm}$ and $f$ in Eq,~\eqref{eqn:gfunc:pi} {will then} have to be calibrated at the beginning of the trading day based on historical data. We can think of each value of $g$ as a possible `scenario' of the recent MOs history. At the beginning of the trading day, we calibrate the probabilities $\pi^\pm_{\tkk}$ and $\pi_{\tkk}(1,1)$ for each possible scenario. This will allow us to choose the best possible placement strategy for each possible scenario. {We give further details in Subsection \ref{EstPisH}.}
\end{remark}

\begin{remark}
	{Under our conditions stated in Assumption \ref{assump:cp}, the average number of lifted shares from the MM will depend on the constants $\mu_{cp}^{\pm}$ and $\mu_{c}^{\pm}$ and the probabilities (\ref{eqn:gfunc:pi}). In particular, these will be adapted to the recent history of MOs, $\pmb{e}_{t_k}=(\mathbbm{1}_{t_{k}}^\pm,\mathbbm{1}_{t_{k-1}}^\pm,\dots,\mathbbm{1}_{t_{k-\varpi+1}}^\pm)$, at each time $t_k$. For instance, if the MM placed her sell LO at the the same level as the reference price at time $t_k$ ($L^{+}_{t_k}=0$), she would expect that $\pi_{t_{k+1}}^{+}\mu_{cp}^{+}$ shares of her order would be sold. In general, if $L_{t_k}^{\pm}=i$ ticks away from the reference price $S_{t_k}$, she would expect $\pi_{t_{k+1}}^{\pm}(\mu_{cp}^{\pm}-i\mu_{c}^{\pm})$ shares of her LO to be lifted during $(t_k,t_{k+1}]$. Indeed, from the formula (\ref{pdemand}) and notations (\ref{eqn:gfunc:pi})-(\ref{eqn:mucp}),} 
\begin{align*}
	\mathbb{E}\left[\left.Q_{t_{k+1}}^{\pm}\right|\mathcal{F}_{t_{k}}\right]&=\mathbb{E}\left[\left.\onep\mathbb{E}\left[\left. \cp(\Pp-i)\right|\mathcal{F}_{t_{k}},\mathbbm{1}_{t_{k+1}}^\pm\right]\right|\mathcal{F}_{t_{k}}\right]\\
	&=\mathbb{E}\left[\left.\onep [h_{cp}^{\pm}(\onep)-ih_{1,0}(\onep)]\right|\mathcal{F}_{t_{k}}\right]=\pi_{t_{k+1}}^{\pm}(\mu^{\pm}_{cp}-i\mu^{\pm}_{c}).
\end{align*}
\end{remark}

\begin{remark}
\label{CnctCntDscr}
{\DRed There is a possible connection between the approach based on exponential lifting probabilities (cf. \cite{cartea2015risk}) and that based on linear demand functions. Specifically, if $\lambda^{\pm}$ is the arrival intensity of MO's and the lifting probability is set to be $\exp(-{\kappa}^{\pm} L^{\pm})$, where $L^{\pm}$ is the distance between the LO quote and the {fundamental price}, then, during a time span of $\Delta$, we expect that ${\Delta \lambda^{\pm} \exp(-\kappa^{\pm} L^{\pm})}$ times a MO will lift a LO placed at distance $L^{\pm}$. Since in this stream of literature, it is typically assumed that only `one' share of the order is lifted at a time, when $L^{\pm}$ is small (as it is commonly the case), the expected number of shares filled during a time span {$\Delta$} is approximately equal to $\Delta \lambda^{\pm}-\Delta\lambda^{\pm} \kappa^{\pm} L^{\pm}$, which is precisely linear in $L^{\pm}$. Since we are allowing actions to take place only at discrete times, we believe that the modeling based on stochastic linear demand functions {provides} greater flexibility.}
\end{remark}


{As in \cite{algo}, \cite{adrian2016intraday}, and others, for} the performance criterion of our placement strategy, we use $W_T+S_T I_T -\lambda I_T^2$, where $W_t$ and $I_t$ {respectively} represent the MM's cash holding and stock inventory at time $t${,} and $\lambda$ is a constant penalization term. Note that at time $T=t_{N+1}$, the last two terms can be rewritten as, $S_T I_T -\lambda I_T^2=I_T(S_T-\lambda I_T)$, which may be interpreted as the MM's end-of-the-day cash flow incurred when liquidating her inventory $I_T$ using a {MO.} {\DRed Overall, the latter interpretation seems to be a good approximation of reality as shown by our empirical analysis of Section \ref{NumRsltSect} (compare Tables \ref{tab:TerminalRewardb} and \ref{tab:TerminalRewarda})\footnote{{\Blue Related to this, some recent works have proposed equilibrium models to deduce the price impact of a market order (see, e.g., \cite{Cetin}). See also \cite{Bhattacharya} for further insights about the relation between price impact and the depth of the book or the arrival frequencies of trades}.}.} {The optimal control problem {\DRed then} consists of finding the} {adapted placement positions} $L^\pm = (L_{t_0}^\pm,L_{t_1}^\pm,\dots,L_{t_N}^\pm)$ that maximize 
\begin{equation}\label{eq112}
\mathbb{E}[W_T+S_T I_T -\lambda I_T^2].
\end{equation}
{For future reference, note} that, {in light of Eqs.~(\ref{pdemand})} and (\ref{mdemand}), {we have, for} $k\in\{0,1,\dots,N\}$,
\begin{align}\label{eqi11}
I_{t_{k+1}}&=I_{t_{k}}-\onep\cp(\Pp-\lp) +\onem\cm(\Pm-\lm),\\
W_{t_{k+1}}&=W_{t_{k}}+(S_\tk+\lp)\onep\cp(\Pp-\lp) - (S_\tk-\lm)\onem\cm(\Pm-\lm).\label{eqw11}
\end{align}

\subsection{Optimal Placement Strategy for a Martingale Midprice Process} \label{sec:optimal:martingale}

For ease of exposition and to establish the main ideas, in this subsection we {first} present the solution of the optimal placement problem under the assumption that the {reference price process $\{S_{t_{k}}\}_{k\geq{}0}$} is a martingale. The case of a general price process 
is presented in the following subsection. The results herein {will enable} us to give a more tractable presentation of the general case. All the proofs in this subsection are deferred to Appendix \ref{appdx:martingale:price}. 

In order to proceed, we need to make an {additional} assumption on the distribution of the increments {of} the price process.
 
\begin{assumption}\label{assump:price}
{\rm (i)} For any $k \in\{0,1,\ldots,N\}$, the price increments $S_{t_{k+1}}-S_{t_{k}}$  and the random vector $(\onep,\onem,c^+_{t_{k+1}},p^+_{t_{k+1}},c^-_{t_{k+1}},p^-_{t_{k+}})$ are conditionally independent given $\F$, {and {\rm (ii)} $\{S_{t_{k+1}}-S_{t_{k}}\}_{k=0,\dots,N}$ is a martingale, i.e.,   $\E[S_{t_{k+1}}-S_{t_k}|\F]=0$, for any $k=0,\dots,N$.}
\end{assumption}

We now specify when a strategy will be admissible. {We specify two types of admissibility}.

\begin{definition} \label{defn:admissible}
	For any $k\in\{0,1,\ldots,N\}$, a strategy $(L^\pm_{t_k},\dots,L^\pm_{t_N})$ running from time $t_k$ to time $t_N$ is {said to be} admissible if, for every $j\geq k$, $L^\pm_{t_j} \in\mathcal{F}_{t_j}$. 
	{If, in addition, we have $L^+_{t_j}+L^-_{t_j}>0$, for all $j\geq k$, we say that the strategy is strictly admissible. The set of all (strictly) admissible strategies running from time $t_k$ to time $t_N$ is denoted by ($\bar{\A}_{{t_k,t_N}}$) $\A_{{t_k,t_N}}$.}
\end{definition}

Note that the {strict} admissibility condition $L^+_{t_j}+L^-_{t_j}>0$  is equivalent to {\DRed $a_{t_j}>{}b_{t_j}$}, i.e., the {selling} price $a_{t_{j}}=S_{t_{j}}+L^{+}_{t_{j}}$ of the {MM} is higher than her {buying} price $b_{t_j}=S_{t_{j}}-L^{-}_{t_{j}}$ at all future times. We don't require that $L^+$ and $L^-$ are nonnegative because $S_{t_k}$ is not {necessarily} seen as the midprice $S^{mid}_{t_{k}}$, but rather as the `fundamental' price of the stock. In practice, the placement {\DRed will always be} set at the tick right above (below) the midprice if $a_{t_k}$ ($b_{t_k}$) is found to be {below (above)} $S^{mid}_{t_{k}}$.

{In accordance to performance criterion \eqref{eq112}}, we can then write the value function at time $t_k$ as
\begin{equation}\label{OOCP0}
V_{t_k}:=\sup_{(L^\pm_{t_k},\dots,L^\pm_{t_N})\in \mathcal{A}_{{t_k,t_N}}}\mathbb{E}[\left.W_T+S_T I_T -\lambda I_T^2\right|\F], \qquad  k=0,1,\dots,N.
\end{equation}
To solve the optimal control problem, {we first} assume that $V_{t_k}$ follows the ansatz
\begin{equation}\label{eqn:ansatz}	
V_{t_k}=W_\tk+\sk I_\tk+\alpha_{\tk}I^2_\tk+h_{\tk}I_\tk + g_{t_k},
\end{equation}
where $\alpha_{\tk},h_{\tk},g_{\tk}$ are some 
$\F$-adapted real-valued random variables to be determined from the dynamical programming principle (see {Theorem} \ref{prop:measurability} below). The ansatz is motivated by the specific form of the performance criterion in \eqref{eq112} and the dynamic principle given in Eq.~\eqref{eqn:optimalprob} below.

{The Dynamic Programming Principle {\Blue (see, e.g., \cite{hernandez} and \cite{Rieder})} associated with \eqref{OOCP0} can then be written as
\begin{equation}\label{eqn:optimalprob}
V_{t_k} = \sup_{L^\pm_{t_k} \in \mathcal{A}_{t_k}} \mathbb{E}\Big[V_{t_{k+1}}\,\Big|\,\F\Big],\quad k=0,\dots,N,
\end{equation}
where} we set $V_{T}:=V_{t_{N+1}}:=W_T+S_T I_T -\lambda I_T^2$ {and $\mathcal{A}_{t_k}$ consists of all $L^\pm_{t_k} \in\mathcal{F}_{t_k}$}. Using the ansatz \eqref{eqn:ansatz}, we can rewritten (\ref{eqn:optimalprob}) as
\begin{equation}\label{eq113}
\begin{aligned}
&W_\tk+\alpha_{\tk}I_\tk^2+\sk I_\tk+h_{\tk}I_\tk +g_{\tk}\\
&\quad =\sup_{L^\pm_{t_k} \in \mathcal{A}_{t_k}} \mathbb{E}\Big[W_\tkk+\alpha_{\tkk}I_\tkk^2+\skk I_\tkk +{\Blue h_{\tkk}I_\tkk} +g_{\tkk}\,\Big|\,\F\Big].
\end{aligned}
\end{equation}
By plugging the recursions (\ref{eqi11})-(\ref{eqw11}) in (\ref{eq113}), we will be able to find a candidate for the optimal placement strategy ({Theorem} \ref{prop:measurability} and Corollary \ref{cor:candidates} below). 
{It is until Theorem \ref{VeriThrm1} when we shall verify that our candidate is indeed the solutions to our original optimal control problem \eqref{OOCP0}. In Section \ref{sec:admissibility}, we study the strict admissibility of the optimal strategy.}

{To write explicit formulas for the optimal placement strategy, we introduce the following terminology:
\begin{align}
\label{eqn:gamma:eta}
\begin{split}
		\rho_{t_k}^\pm &:={\pi^+_{\tkk}}{\pi^-_{\tkk}}\muo^\pm(\alpha^{1\mp}_{\tkk}\mut^\mp-\muo^\mp),\\
		\psi_{t_k}^\pm&:={\pi^\mp_\tkk}{\pi_\tkk(1,1)}\alpha^{1,1}_\tkk\muo^\pm(\muo^\mp)^2,\\
	\gamma_{t_{k}}&:=\left({\pi_\tkk(1,1)}\alpha^{1,1}_\tkk\mu^{-}_{c}\mu^{+}_{c}\right)^2-\frac{\rho_{t_{k}}^{+}\rho_{t_{k}}^{-}}{{\pi^+_{\tkk}}{\pi^-_{\tkk}}\mu^{+}_{c}\mu^{-}_{c}},
\end{split}
	\end{align}
where above $\alpha^{1\pm}_{\tkk}$ and $\alpha^{1,1}_{\tkk}$ are defined using the notation \eqref{Dfnalh0}. 	
We will prove below that the optimal spreads for the ask and bid side can be written as $L_{\tk}^{+,*}:={}^{\scaleto{(1)}{5pt}}\!A^+_{\tk}I_\tk+{}^{\scaleto{(2)}{5pt}}\!A^+_{\tk}+{}^{\scaleto{(3)}{5pt}}\!A^+_{\tk}$ and $L_{\tk}^{-,*}=-\,{}^{\scaleto{(1)}{5pt}}\!A^-_{\tk}I_\tk-{}^{\scaleto{(2)}{5pt}}\!A^-_{\tk}+{}^{\scaleto{(3)}{5pt}}\!A^-_{\tk}$ with the coefficients:
\begin{align} \label{eq:A1}
	\begin{split}
	{}^{\scaleto{(1)}{5pt}}\!A^\pm_{\tk}&:=\frac{\alpha^{1\pm}_{\tkk}\rho^{\pm}_{t_{k}}-\alpha^{1\mp}_{\tkk}\psi_{t_{k}}^{\pm}}{\gamma_{t_{k}}},
	\\
	{}^{\scaleto{(2)}{5pt}}\!A^\pm_{\tk}&:=\frac{h^{1\pm}_{\tkk}\rho^{\pm}_{t_{k}}-h^{1\mp}_{\tkk}\psi_{t_{k}}^{\pm}}{2\gamma_{t_{k}}},
	\\ 
	{}^{\scaleto{(3)}{5pt}}\!A^\pm_{\tk}&:=\frac{\rho_{t_{k}}^{\pm}}{2\pi^\pm_\tkk\mu_{c}^{\pm}\gamma_{t_{k}}}\Bigg[\pi^\pm_{\tkk} \Big(\muoo^\pm-2\alpha^{1\pm}_{\tkk}\muto^\pm\Big) +2\frac{\psi^{\pm}_{t_{k}}\mu_{cp}^{\mp}}{\pi_{t_{k+1}}^{\mp}(\mu_{c}^{\mp})^2}\Bigg]\\
	&\qquad +\frac{\psi_{t_{k}}^{\pm}}{2\pi^\mp_\tkk\mu_{c}^{\mp}\gamma_{t_{k}}}\Bigg[\pi^\mp_\tkk \Big(\muoo^\mp-2\alpha^{1\mp}_{\tkk}\muto^\mp\Big)+2\frac{\psi^{\mp}_{t_{k}}\mu_{cp}^{\pm}}{\pi_{t_{k+1}}^{\pm}(\mu_{c}^{\pm})^2}\Bigg],
\end{split}	
\end{align}
where we again used \eqref{Dfnalh0} to define $h^{1\pm}_{\tkk}$. We first show that the maximization problem in \eqref{eq113} is indeed well-posed.}

\begin{theorem}\label{prop:measurability}
{Under the Assumptions \ref{assump:cp}  and \ref{assump:price},} the following statements hold:
\begin{itemize}
	\item[{\rm (i)}] {There exist coefficients  $\alpha_{t_{k}}$, $h_{t_k}$, and $g_{\tk}$ that solve \eqref{eq113} for $k=0,\dots,N$ with dynamics \eqref{eqi11}-\eqref{eqw11} and terminal conditions $\alpha_{t_{N+1}}=-\lambda$ and $h_{t_{N+1}}=g_{t_{N+1}}=0$.}
	\item[{\rm (ii)}] For $k=0,\dots,N+1$, the coefficients $\alpha_{t_k}$, $h_{t_k}$, and $g_{t_k}$ in {\rm (i)} are $\Hkk$-measurable random variables, {where recall} from Assumption \ref{assump:cp} that $\Hkk=\sigma(\pmb{e}_{t_k})=\sigma(\mathbbm{1}_{t_{k}}^\pm,\mathbbm{1}_{t_{k-1}}^\pm,\dots,\mathbbm{1}_{t_{k-\varpi+1}}^\pm)\subset\F$.
	\item[{\rm (iii)}] 
	For $k=0,\dots,N$, the {coefficients of {\rm(i)}} can be computed recursively by the equations:
	\begin{align}\label{eqn:alphak}
	\begin{split}
	\alpha_{\tk}&=\alpha^0_{\tkk}+\sum_{\delta=\pm} \pi^\delta_\tkk \Bigg[\Big(\alpha^{1\delta}_{\tkk}\mut^\delta-\muo^\delta\Big)\Big({}^{\scaleto{(1)}{5pt}}\!A^\delta_{\tk}\Big)^2+2\alpha^{1\delta}_{\tkk}\muo^\delta \; {}^{\scaleto{(1)}{5pt}}\!A^\delta_{\tk}\Bigg]\\
	&\qquad  \qquad+2\alpha^{1,1}_{\tkk}{\Black \pi_\tkk(1,1)}\muo^+\muo^- \; {}^{\scaleto{(1)}{5pt}}\!A^+_{\tk}\; {}^{\scaleto{(1)}{5pt}}\!A^-_{\tk},
	\end{split}
	\end{align} 	
	\begin{align}\nonumber
	h_{\tk}&= h^0_{\tkk}+\sum_{\delta=\pm} \pi^\delta_\tkk \Bigg\{2\Big(\alpha^{1\delta}_{\tkk}\mut^\delta-\muo^\delta\Big)\;{}^{\scaleto{(1)}{5pt}}\!A^\delta_{\tk}\left({}^{\scaleto{(2)}{5pt}}\!A^\delta_{\tk} + (\delta\; {}^{\scaleto{(3)}{5pt}}\!A^\delta_{\tk})\right) +2\alpha^{1\delta}_{\tkk}\muo^\delta\Big({(\delta\,{}^{\scaleto{(3)}{5pt}}\!A^\delta_{\tk})+ {}^{\scaleto{(2)}{5pt}}\!A^\delta_{\tk}}\Big)\\ \label{eqn:htk}
	&\qquad\qquad\qquad\qquad\quad -2(\delta\alpha^{1\delta}_{\tkk})\muoo^\delta
	+(\delta\; {}^{\scaleto{(1)}{5pt}}\!A^\delta_{\tk})\Big(\muoo^\delta+(\delta h^{1\delta}_{\tkk})\muo^\delta-2\alpha^{1\delta}_{\tkk}\muto^\delta\Big)\Bigg\}\\
	\nonumber
	&\quad -2\alpha^{1,1}_{\tkk}{\Black \pi_\tkk(1,1)}\muo^+\muo^-\Bigg[{}^{\scaleto{(1)}{5pt}}\!A^+_{\tk}\Big( {}^{\scaleto{(3)}{5pt}}\!A^-_{\tk}-{}^{\scaleto{(2)}{5pt}}\!A^-_{\tk}\Big)-{}^{\scaleto{(1)}{5pt}}\!A^-_{\tk} \Big({}^{\scaleto{(2)}{5pt}}\!A^+_{\tk}+{}^{\scaleto{(3)}{5pt}}\!A^+_{\tk}\Big)+ {}^{\scaleto{(1)}{5pt}}\!A^-_{\tk}\;\frac{\muoo^+}{\muo^+} -{}^{\scaleto{(1)}{5pt}}\!A^+_{\tk}\;\frac{\muoo^-}{\muo^-}\Bigg],
	\end{align}  
{and }	
\begin{align}       
 \nonumber
        g_{\tk}&=g_{\tkk}^{0}+\sum_{\delta=\pm}{\pi^\delta_\tkk}\Big[(\alpha^{1\delta}_{\tkk}\mut^\delta-\muo^\delta)({ {}^{\scaleto{(3)}{5pt}}\!A^\delta_{\tk}+(\delta\, {}^{\scaleto{(2)}{5pt}}\!A^\delta_{\tk}}))^2+\alpha_{\tkk}^{1\delta}\mutt^\delta-{(\delta h^{1\delta}_{\tkk})}\muoo^\delta\\
                \label{eq:g}
        &\quad\qquad\qquad\qquad\qquad\quad+(\muoo^\delta+{(\delta h^{1\delta}_{\tkk})}\muo^\delta-2\alpha^{1\delta}_{\tkk}\muto^\delta)({ {}^{\scaleto{(3)}{5pt}}\!A^\delta_{\tk}+{(\delta\,} {}^{\scaleto{(2)}{5pt}}\!A^\delta_{\tk}}))\Big]\\
        &\qquad\qquad-2\alpha^{1,1}_{\tkk}{\pi_\tkk(1,1)}\muo^+\muo^-\Big[({}^{\scaleto{(2)}{5pt}}\!A^+_{\tk}+{}^{\scaleto{(3)}{5pt}}\!A^+_{\tk})({ {}^{\scaleto{(3)}{5pt}}\!A^-_{\tk}-{}^{\scaleto{(2)}{5pt}}\!A^-_{\tk}})
              \nonumber
        \\
        &\quad\qquad\qquad\qquad\qquad\qquad\qquad\qquad-\frac{\muoo^+}{\muo^+}({{}^{\scaleto{(3)}{5pt}}\!A^-_{\tk}-{}^{\scaleto{(2)}{5pt}}\!A^-_{\tk}})-\frac{\muoo^-}{\muo^-}({}^{\scaleto{(2)}{5pt}}\!A^+_{\tk}+{}^{\scaleto{(3)}{5pt}}\!A^+_{\tk})+\frac{\muoo^+\muoo^-}{\muo^-\muo^+}\Big]{,}
        \nonumber
\end{align}	
where we used the notation \eqref{Dfnalh0}, \eqref{eqn:gamma:eta}, and \eqref{eq:A1}.
	\end{itemize}
\end{theorem}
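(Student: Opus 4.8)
The plan is a finite backward induction on $k\in\{N+1,N,\dots,0\}$, establishing the three assertions simultaneously, together with the auxiliary invariant $\alpha_{\tk}\le 0$ (this is exactly what makes the one-step optimization in \eqref{eq113} a concave maximization). For the base case $k=N+1$, matching $V_{T}=W_T+S_TI_T-\lambda I_T^2$ with the ansatz \eqref{eqn:ansatz} forces $\alpha_{t_{N+1}}=-\lambda$, $h_{t_{N+1}}=g_{t_{N+1}}=0$; these are constants, hence trivially nonpositive (for $\alpha$) and $\sigma(\pmb{e}_{t_{N+1}})$-measurable, so all claims hold. For the inductive step, assuming the statement and $\alpha_{\tkk}\le0$ at time $\tkk$, I would substitute the dynamics \eqref{eqi11}--\eqref{eqw11} into the right-hand side of \eqref{eq113} and evaluate $\Ex[V_\tkk\,|\,\F]$ by the tower property through the intermediate $\sigma$-field $\Gt=\sigma(\F,\onep,\onem)$. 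Conditionally on $\Gt$ the coefficients $\alpha_\tkk,h_\tkk,g_\tkk$ are measurable, the pairs $(\cp,\Pp)$ and $(\cm,\Pm)$ are independent (Assumption \ref{assump:cp}(i)), and their mixed moments are the functions $h^{\pm}_{m,n}(\mathbbm{1}^{\pm}_\tkk)$ of \eqref{eqn:mucpa0}; the price increment $\skk-\sk$ is conditionally independent of the order-flow vector given $\F$ and is a martingale increment (Assumption \ref{assump:price}), so every term containing it telescopes away after conditioning on $\F$. Taking the outer expectation $\Ex[\,\cdot\,|\,\F]$ then replaces $\onep,\onem,\onep\onem$ by $\pi^+_\tkk,\pi^-_\tkk,\pi_\tkk(1,1)$ and products such as $\alpha_\tkk\onep$, $\alpha_\tkk\onep\onem$ by $\alpha^{1+}_\tkk\pi^+_\tkk$, $\alpha^{1,1}_\tkk\pi_\tkk(1,1)$, in the notation \eqref{Dfnalh0}. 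The outcome is that the right-hand side of \eqref{eq113} equals a quadratic polynomial $F(\lp,\lm)$ in the two controls whose coefficients are built from the $\pi$'s, the constants $\mu^{\pm}_{c^mp^n}$ of \eqref{eqn:mucp}, and the one-step-ahead quantities $\alpha^{0/1\pm/1,1}_\tkk$, $h^{0/1\pm}_\tkk$, $g^{0}_\tkk$.

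The crux is to show this quadratic is strictly concave, so that the supremum in \eqref{eq113} is attained at the unique stationary point. Tracking the quadratic terms shows that the Hessian of $F$ has diagonal entries $2\pi^{\pm}_\tkk(\alpha^{1\pm}_\tkk\mut^{\pm}-\muo^{\pm})$ and off-diagonal entry $-2\alpha^{1,1}_\tkk\pi_\tkk(1,1)\muo^{+}\muo^{-}$, and that its determinant is $-4\gamma_{\tk}$ with $\gamma_{\tk}$ as in \eqref{eqn:gamma:eta}. By the inductive hypothesis $\alpha_\tkk\le 0$, so all of $\alpha^{0}_\tkk,\alpha^{1\pm}_\tkk,\alpha^{1,1}_\tkk$ are nonpositive, and (using, as is implicit throughout, that $\muo^{\pm}>0$ and $\pi^{\pm}_\tkk>0$) the diagonal entries are strictly negative. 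For the determinant one uses $\pi^{\pm}_\tkk|\alpha^{1\pm}_\tkk|\ge\pi_\tkk(1,1)|\alpha^{1,1}_\tkk|$ — which follows from the decomposition $\pi^{+}_\tkk\alpha^{1+}_\tkk=\pi_\tkk(1,1)\alpha^{1,1}_\tkk+\pi_\tkk(1,0)\alpha^{1,0}_\tkk$ (a sum of nonpositive terms, since $\pi_\tkk(1,0)=\pi^+_\tkk-\pi_\tkk(1,1)\ge0$ by \eqref{eq1234}), and symmetrically on the $-$ side — together with Jensen's inequality $\mut^{\pm}\ge(\muo^{\pm})^2$; combining these with $\muo^{\pm}>0$ yields $\gamma_{\tk}<0$, i.e.\ a negative-definite Hessian. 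Then the first-order conditions $\partial_{\lp}F=\partial_{\lm}F=0$ form a $2\times2$ linear system with (rescaled) determinant $\gamma_{\tk}$; solving it by Cramer's rule produces the optimal spreads in exactly the affine-in-$I_\tk$ form $L^{+,*}_\tk={}^{(1)}\!A^{+}_\tk I_\tk+{}^{(2)}\!A^{+}_\tk+{}^{(3)}\!A^{+}_\tk$ and $L^{-,*}_\tk=-{}^{(1)}\!A^{-}_\tk I_\tk-{}^{(2)}\!A^{-}_\tk+{}^{(3)}\!A^{-}_\tk$, with coefficients \eqref{eq:A1} — here $\rho^{\pm}_\tk$ and $\psi^{\pm}_\tk$ are precisely the combinations of Hessian and gradient entries generated by Cramer's rule. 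Since $I_\tk$ and the $\pi$'s are $\F$-measurable, these controls lie in $\mathcal{A}_{\tk}$, so the supremum is genuinely achieved in the admissible class.

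It then remains to substitute $L^{\pm,*}_\tk$ back into $F$ and collect the coefficients of $I_\tk^2$, $I_\tk$, and $1$, which yields $\alpha_\tk$, $h_\tk$, $g_\tk$ in the claimed forms \eqref{eqn:alphak}, \eqref{eqn:htk}, \eqref{eq:g}; this gives (i) and (iii). For the measurability claim (ii), note that by the inductive hypothesis $\alpha_\tkk$ is $\sigma(\pmb{e}_\tkk)$-measurable, hence given $\F$ it depends only on $(\onep,\onem)$, so by Assumption \ref{assump:cp}(iv) and \eqref{eq1234} the quantities $\alpha^{0/1\pm/1,1}_\tkk$ (and those of $h$ and $g$) are functions of $g(\pmb{e}_\tk)$ alone; since in addition the $\pi$'s are $\Hkk$-measurable and the $\mu$'s are constants, $\gamma_{\tk}$, $\rho^{\pm}_\tk$, $\psi^{\pm}_\tk$, ${}^{(j)}\!A^{\pm}_\tk$, and therefore $\alpha_\tk,h_\tk,g_\tk$, are all $\Hkk$-measurable. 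Finally, to close the induction one must verify $\alpha_\tk\le 0$; this I would read off from the completing-the-square identity $\alpha_\tk=\alpha^{0}_\tkk-\tfrac12\,v^{\top}H^{-1}v$, where $v$ is the $I_\tk$-gradient vector and $H$ the Hessian analysed above, using $\alpha_\tkk\le0$, the Jensen bound, and this time also the \emph{lower} Fréchet bound $\pi_\tkk(1,1)\ge\pi^{+}_\tkk+\pi^{-}_\tkk-1$.

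I expect the two sign facts — negative-definiteness of the Hessian (equivalently $\gamma_{\tk}<0$, so that \eqref{eq113} is well-posed) and the propagation $\alpha_\tk\le0$ — to be the real obstacle, since both hinge delicately on combining the Fréchet bounds for $\pi_\tkk(1,1)$, Jensen's inequality $\mut^{\pm}\ge(\muo^{\pm})^2$, and strict positivity of $\muo^{\pm}$; everything else is a lengthy but routine expansion of conditional expectations and a Cramer's-rule computation.
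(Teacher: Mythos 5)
Your overall route is the same as the paper's: backward induction, substitution of the wealth/inventory dynamics into \eqref{eq113}, evaluation of the conditional expectation by the tower property through $\Gt$ using Assumption \ref{assump:cp}(i)--(iii) and the martingale/independence Assumption \ref{assump:price}, reduction to a quadratic in $(\lp,\lm)$, solution of the first-order conditions, coefficient matching for \eqref{eqn:alphak}--\eqref{eq:g}, and the measurability argument via Assumption \ref{assump:cp}(iv) and the representation of $\alpha_\tkk$ as a function of $\pmb{e}_\tkk$. The one genuine structural difference is that you fold the concavity of the one-step objective and the sign propagation $\alpha_\tk\le 0$ into the same induction, whereas the paper proves the theorem by exhibiting only the stationary point, defers the global-maximum verification to Corollary \ref{cor:candidates}, and isolates the sign facts in Lemma \ref{lemma:alpha}. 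Your integrated organization is arguably cleaner (it removes the apparent forward reference), and your derivation of $\gamma_\tk<0$ from the decomposition $\pi^+_\tkk\alpha^{1+}_\tkk=\pi_\tkk(1,1)\alpha^{1,1}_\tkk+\pi_\tkk(1,0)\alpha^{1,0}_\tkk$ together with Jensen's inequality $\mut^\pm\ge(\muo^\pm)^2$ is exactly the paper's argument (cf.\ \eqref{eqn:comparison:alpha} and \eqref{eq:Dneg}).

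The gap is in the closing step of your induction, the propagation of $\alpha_\tk\le 0$. The completing-the-square identity $\alpha_\tk=\alpha^0_\tkk-\tfrac12 v^\top H^{-1}v$ gives you, essentially for free, the \emph{lower} bound $\alpha_\tk\ge\alpha^0_\tkk$ (because $H^{-1}$ is negative definite, so $-\tfrac12 v^\top H^{-1}v\ge 0$); but what you need is the \emph{upper} bound $\alpha_\tk\le 0$, i.e.\ $\tfrac12\,\big|v^\top H^{-1}v\big|\le\big|\alpha^0_\tkk\big|$, a comparison of two nonpositive quantities that does not ``read off'' from Jensen plus the Fr\'echet bounds. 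In the paper this is precisely part (iii) of the proof of Lemma \ref{lemma:alpha} (the inequality $N_k/\gamma_{t_k}<-\alpha^0_\tkk$), and it occupies the bulk of that proof: one must first show monotonicity of an auxiliary function in the second-moment arguments so as to replace $\mut^\pm$ by $(\muo^\pm)^2$, then factor the resulting expression as $\muo^+\muo^-\,\ell(\muo^+)$, and finally establish $\ell(0)>0$ and $d\ell/d\bz\ge 0$ by analyzing a concave quadratic $\Psi$ at the endpoints of the interval $\big[\alpha_\tkk^{1+}\pi_\tkk^+\vee\alpha_\tkk^{1-}\pi_\tkk^-,\ (\alpha_\tkk^{1+}\pi_\tkk^++\alpha_\tkk^{1-}\pi_\tkk^--\alpha_\tkk^0)\wedge 0\big]$ furnished by the Fr\'echet-type bounds \eqref{eqn:bounds:alpha11pi11} on $\alpha^{1,1}_\tkk\pi_\tkk(1,1)$ (note these are bounds on the weighted quantity $\alpha^{1,1}_\tkk\pi_\tkk(1,1)$, obtained from $\alpha_\tkk(\onep+\onem-1)\ge\alpha_\tkk\onep\onem$, not on $\pi_\tkk(1,1)$ alone), with a further case split and a second quadratic $\Xi$ vanishing at both endpoints. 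You have correctly named all the ingredients, but the step is the crux of the entire result rather than a routine consequence of them, and as written your plan does not supply the argument.
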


{Before finding the optimal controls of \eqref{eqn:optimalprob}, we state an important preliminary result that will also be needed to show {the verification theorem and the strict} admissibility of the optimal controls. 
 This result is deceivable simple, though its proof is rather intricate.}
\begin{lemma}\label{lemma:alpha}
	The random variables $\{\alpha_{t_k}\}_{k\geq0}$ defined in Eqs.~(\ref{eqn:alphak}) are such that
	\[
	\alpha_\tkk^0<\alpha_\tk<0,\quad \text{for any }\;k\in\{0,1,\ldots,N\},
	\]
where above $\{\alpha_{t_k}^0\}_{k\geq{}0}$ is computed from $\{\alpha_{t_k}\}_{k\geq{}0}$ using \eqref{Dfnalh0}.
\end{lemma}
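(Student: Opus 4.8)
The plan is to prove both inequalities simultaneously by backward induction on $k$, exploiting the fact that $\alpha_{\tk}$ equals (up to the factor $\tfrac12$) the one-dimensional Schur complement of the Hessian of the one-step objective in \eqref{eq113}, together with the observation that this Hessian is strictly negative definite as soon as $\alpha_{\tkk}<0$. The induction hypothesis at stage $k$ is simply ``$\alpha_{\tkk}<0$ a.s.''; it holds at $k=N$ since $\alpha_{t_{N+1}}=-\lambda<0$, and the inductive step will produce ``$\alpha^0_{\tkk}<\alpha_{\tk}<0$'', whose last bound is exactly the hypothesis needed at stage $k-1$.

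For the inductive step I would first recast the right-hand side of \eqref{eq113}. Treating $I_{\tk}$ as a free variable $I$ and substituting \eqref{eqi11}--\eqref{eqw11} into the ansatz \eqref{eqn:ansatz} for $V_{\tkk}$, that right-hand side becomes $W_{\tk}+\sk I+\Phi_{\tk}(I,L^+,L^-)$, where $\Phi_{\tk}$ is a quadratic polynomial in $(I,L^+,L^-)$ with $\F$-measurable coefficients, so by \eqref{eq113} $\alpha_{\tk}$ is exactly the coefficient of $I^2$ in $\sup_{L^\pm}\Phi_{\tk}$. Using Assumption \ref{assump:cp} (conditional independence and the moment identities \eqref{eqn:mucpa0}) and Assumption \ref{assump:price} (the martingale identity $\E[S_{\tkk}-\sk\mid\F]=0$, which kills all $S_{\tkk}$-cross terms), one computes the Hessian $H$ of $\Phi_{\tk}$ in $(I,L^+,L^-)$: since $I_{\tkk}$ is affine in $(I,L^\pm)$ with gradient $v:=(1,\onep\cp,-\onem\cm)^\top$ and $W_{\tkk}$ is concave-quadratic in $(L^+,L^-)$ with no $I$-dependence,
\[
H=\mathrm{diag}\big(0,\,-2\pi^+_{\tkk}\muo^+,\,-2\pi^-_{\tkk}\muo^-\big)+2\,\E\big[\alpha_{\tkk}\,vv^\top\,\big|\,\F\big].
\]

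The core of the argument is to show $H\prec 0$ a.s. under the induction hypothesis. For $u=(u_1,u_2,u_3)^\top\neq 0$,
\[
u^\top Hu=-2\pi^+_{\tkk}\muo^+u_2^2-2\pi^-_{\tkk}\muo^-u_3^2+2\,\E\big[\alpha_{\tkk}(u^\top v)^2\,\big|\,\F\big];
\]
the last term is $\le 0$ because $\alpha_{\tkk}\le 0$ a.s., the first two are $\le 0$ with at least one strictly negative whenever $(u_2,u_3)\neq 0$, and when $(u_2,u_3)=0$ the right-hand side equals $2u_1^2\alpha^0_{\tkk}<0$ since $\alpha^0_{\tkk}=\E[\alpha_{\tkk}\mid\F]<0$. (Here I use $\pi^\pm_{\tkk}>0$ and $\muo^\pm>0$, which the model requires, the denominators in \eqref{eqn:gamma:eta}--\eqref{eq:A1} being otherwise ill-defined.) In particular $H_{LL}:=\mathrm{Hess}_{(L^+,L^-)}\Phi_{\tk}\prec 0$, so the inner maximization is strictly concave and attained at a unique point affine in $I$ (this is consistent with the formulas \eqref{eq:A1} and gives $\gamma_{\tk}<0$, since $\det H_{LL}>0$). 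Writing $H=\left(\begin{smallmatrix}H_{II}&H_{IL}\\ H_{IL}^\top&H_{LL}\end{smallmatrix}\right)$, partial maximization over $(L^+,L^-)$ yields $\alpha_{\tk}=\tfrac12\big(H_{II}-H_{IL}H_{LL}^{-1}H_{IL}^\top\big)$, while $\alpha^0_{\tkk}=\tfrac12 H_{II}$. Since $H\prec 0$, the Schur complement of its $(L^+,L^-)$-block is strictly negative, hence $\alpha_{\tk}<0$; and since $H_{LL}^{-1}\prec 0$, we get $\alpha_{\tk}-\alpha^0_{\tkk}=-\tfrac12 H_{IL}H_{LL}^{-1}H_{IL}^\top>0$ provided $H_{IL}\neq 0$, which holds because its first entry is $2\pi^+_{\tkk}\muo^+\alpha^{1+}_{\tkk}\neq 0$. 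This gives $\alpha^0_{\tkk}<\alpha_{\tk}<0$; the bound $\alpha_{\tk}<0$, which holds a.s. because the reasoning is pointwise in $\omega$, supplies the hypothesis for stage $k-1$, and iterating down to $k=0$ finishes the proof.

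The step I expect to be the main obstacle is the explicit evaluation of $H$ — equivalently, checking that the half-Schur-complement formula for $\alpha_{\tk}$ above reproduces the recursion \eqref{eqn:alphak} with coefficients \eqref{eq:A1}. This requires peeling the conditioning carefully (first on $\F$, then on $(\onep,\onem)$) and invoking the conditional independence of Assumption \ref{assump:cp}(i)--(iii) to collapse the relevant conditional moments of $(\cp,\Pp)$ and $(\cm,\Pm)$ to the constants $\muo^\pm,\mut^\pm,\muoo^\pm,\dots$, together with Assumption \ref{assump:price} to drop the price-increment cross terms. A purely computational alternative would be to substitute \eqref{eq:A1} directly into \eqref{eqn:alphak} and simplify, but the bound $\alpha_{\tk}<0$ itself is far less transparent from \eqref{eqn:alphak} than from the Schur-complement formula, which makes both signs manifest and keeps the algebra tractable.
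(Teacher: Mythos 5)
Your proof is correct, but it takes a genuinely different route from the paper's. The paper works directly from the closed-form recursion: it writes $\alpha_{t_k}=\alpha^0_{t_{k+1}}+N_k/\gamma_{t_k}$ and establishes separately that $\gamma_{t_k}<0$, that $N_k<0$, and — the hard part, occupying roughly two pages — that $N_k/\gamma_{t_k}<-\alpha^0_{t_{k+1}}$, the latter via a chain of auxiliary quadratics $\varphi$, $\ell$, $\Psi$, $\Xi$ evaluated at carefully chosen endpoints and a four-case analysis resting on the comparison inequalities $\alpha^{1\pm}_{t_{k+1}}\pi^\pm_{t_{k+1}}\leq\alpha^{1,1}_{t_{k+1}}\pi_{t_{k+1}}(1,1)\leq 0$ and $\alpha^0_{t_{k+1}}\leq\alpha^{1\pm}_{t_{k+1}}\pi^\pm_{t_{k+1}}$. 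Your Schur-complement identification bypasses all of this: I checked your Hessian decomposition $H=\mathrm{diag}(0,-2\pi^+_{t_{k+1}}\mu_c^+,-2\pi^-_{t_{k+1}}\mu_c^-)+2\,\E[\alpha_{t_{k+1}}vv^\top\,|\,\mathcal{F}_{t_k}]$ entry by entry against the quadratic in the paper's intermediate display (its Eq.~\eqref{eqn:Quadratic:L} with $j=k$), and it matches exactly — in particular $\det H_{LL}=-4\gamma_{t_k}$, so your claim that $H_{LL}\prec 0$ recovers $\gamma_{t_k}<0$ is consistent with \eqref{eq:Dneg}. Given $H\prec 0$, both inequalities drop out of standard linear algebra ($\alpha_{t_k}<0$ from negativity of the Schur complement, $\alpha_{t_k}>\alpha^0_{t_{k+1}}$ from $-H_{IL}H_{LL}^{-1}H_{IL}^\top>0$ with $H_{IL}\neq 0$ because $\alpha^{1+}_{t_{k+1}}<0$ under the induction hypothesis). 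Two remarks on what each approach buys. Your argument is more conceptual and makes the signs "manifest"; it also cleanly untangles the mild circularity in the paper, where Corollary~\ref{cor:candidates}'s second-derivative test invokes Lemma~\ref{lemma:alpha} while Lemma~\ref{lemma:alpha}'s recursion presupposes the optimizer — in your version the induction hypothesis $\alpha_{t_{k+1}}<0$ simultaneously delivers well-posedness of the inner maximization and the bounds at stage $k$. The price is the verification you flag yourself: that the half-Schur-complement really is the paper's $\alpha_{t_k}$ of \eqref{eqn:alphak}; this is legitimate to assert because \eqref{eqn:alphak} is \emph{derived} by substituting the stationary point into the quadratic and reading off the $I^2$-coefficient, which is precisely the partial-maximization formula, but a referee would want that sentence spelled out. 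Both proofs share the same implicit standing assumptions $\pi^\pm_{t_{k+1}}>0$ and $\mu_c^\pm>0$ (the paper needs them for the strict inequality in \eqref{eq:Dneg} and for the denominators in \eqref{eq:A1}), so you are on equal footing there.
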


We are now ready to find the optimal controls of \eqref{eqn:optimalprob} under the ansatz \eqref{eqn:ansatz}. Most of its proof is embedded into the proof of {Theorem} \ref{prop:measurability} but due to its importance it is stated separately.

\begin{corollary}\label{cor:candidates}
	The optimal placements that maximize the right-hand side of the Eq.~(\ref{eqn:optimalprob}) under the ansatz \eqref{eqn:ansatz} are given by
	\begin{align}\label{eqn:OptimalL}
	\begin{split}
	L_{\tk}^{+,*}&={}^{\scaleto{(1)}{5pt}}\!A^+_{\tk}I_\tk+{}^{\scaleto{(2)}{5pt}}\!A^+_{\tk}+{}^{\scaleto{(3)}{5pt}}\!A^+_{\tk},\\
	L_{\tk}^{-,*}&=-\,{}^{\scaleto{(1)}{5pt}}\!A^-_{\tk}I_\tk-{}^{\scaleto{(2)}{5pt}}\!A^-_{\tk}+{}^{\scaleto{(3)}{5pt}}\!A^-_{\tk},
	\end{split}
	\end{align}
	\normalsize
	where the coefficients above are given as in \eqref{eq:A1}.
\end{corollary}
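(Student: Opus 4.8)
The plan is to make the ansatz \eqref{eqn:ansatz} explicit on the right-hand side of the Dynamic Programming recursion \eqref{eqn:optimalprob}, reduce the inner maximization over $(\lp,\lm)$ to an unconstrained concave quadratic program, and solve it by a first-order condition. \textbf{Step 1 (expand and collapse the martingale part).} I would substitute the one-step dynamics \eqref{eqi11}--\eqref{eqw11} together with $V_\tkk=W_\tkk+\skk I_\tkk+\alpha_\tkk I_\tkk^2+h_\tkk I_\tkk+g_\tkk$ into $\E[V_\tkk\mid\F]$. Using Assumption \ref{assump:price}: part (i) gives $\E[(\skk-\sk)Q^\pm_\tkk\mid\F]=\E[\skk-\sk\mid\F]\,\E[Q^\pm_\tkk\mid\F]$, which vanishes by part (ii), so $\E[W_\tkk+\skk I_\tkk\mid\F]=W_\tk+\sk I_\tk+\lp\E[Q^+_\tkk\mid\F]+\lm\E[Q^-_\tkk\mid\F]$. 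Hence the terms linear in $S$ match the left-hand side of \eqref{eq113} exactly, and what is left to maximize over $(\lp,\lm)$ is
\[
\E\!\left[\alpha_\tkk I_\tkk^2+h_\tkk I_\tkk+g_\tkk\,\middle|\,\F\right]+\lp\,\E[Q^+_\tkk\mid\F]+\lm\,\E[Q^-_\tkk\mid\F],
\]
which, after expanding $I_\tkk=I_\tk-Q^+_\tkk+Q^-_\tkk$ and $Q^\pm_\tkk=\mathbbm{1}^\pm_\tkk\cp(\Pp-\lp)$ (affine in $\lp$), is a polynomial of degree $\le 2$ in $(\lp,\lm)$ whose coefficients are conditional moments of the order-flow variables.

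\textbf{Step 2 (evaluate the conditional moments).} I would condition further on $(\onep,\onem)$ and split into its four values. By Assumption \ref{assump:cp}(i)--(iii), given $\Gt$ the pairs $(\cp,\Pp)$ and $(\cm,\Pm)$ are conditionally independent with laws depending only on the respective indicator, so every moment of the $c$'s and $p$'s that appears reduces to the deterministic constants $\mu^\pm_{c^mp^n}$ of \eqref{eqn:mucp}; and since $\alpha_\tkk,h_\tkk,g_\tkk$ are $\Hkkk$-measurable, their conditional expectations given $\F$ and the indicator values are precisely the quantities $\alpha^0_\tkk,\alpha^{1\pm}_\tkk,\alpha^{1,1}_\tkk,h^0_\tkk,h^{1\pm}_\tkk,\dots$ of the notation \eqref{Dfnalh0}. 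Carrying this through, the objective becomes an explicit quadratic
\[
Q(\lp,\lm)=-\tfrac12(\lp,\lm)\,\mathbf{H}\,(\lp,\lm)^{\!\top}+b^{\top}(\lp,\lm)^{\!\top}+(\text{terms free of }\lp,\lm),
\]
where $\mathbf{H}_{11}=-2\pi^+_\tkk(\alpha^{1+}_\tkk\mut^+-\muo^+)$, $\mathbf{H}_{22}=-2\pi^-_\tkk(\alpha^{1-}_\tkk\mut^--\muo^-)$, the cross entry $\mathbf{H}_{12}=\mathbf{H}_{21}=2\pi_\tkk(1,1)\alpha^{1,1}_\tkk\muo^+\muo^-$ (the $\lp\lm$ term appears \emph{only} through the event $\{\onep=\onem=1\}$, from the $-2\alpha_\tkk Q^+_\tkk Q^-_\tkk$ piece of $\alpha_\tkk I_\tkk^2$), and $b$ collects the $h^{1\pm}_\tkk$ contributions and the bid/ask revenue terms.

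\textbf{Step 3 (concavity, first-order condition, and identification).} To guarantee that the unique critical point of $Q$ is its global maximizer, I would check that $\mathbf{H}$ is positive definite. Lemma \ref{lemma:alpha} applied at time $\tkk$ gives $\alpha^{1\pm}_\tkk<0$, hence $\alpha^{1\pm}_\tkk\mut^\pm-\muo^\pm<0$ and $\mathbf{H}_{11},\mathbf{H}_{22}>0$; moreover, substituting the definitions of $\rho^\pm_{t_k}$ from \eqref{eqn:gamma:eta}, a short computation gives $\det\mathbf{H}=-4\gamma_{t_k}$, so positive definiteness is equivalent to $\gamma_{t_k}<0$, which is exactly the sign property secured by (the proof of) Lemma \ref{lemma:alpha}. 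The first-order condition $\nabla Q=0$ is then the $2\times 2$ linear system $\mathbf{H}(\lp,\lm)^{\!\top}=b$; solving it by Cramer's rule and recognizing the combinations $\rho^\pm_{t_k},\psi^\pm_{t_k},\gamma_{t_k}$ of \eqref{eqn:gamma:eta} in the numerators and denominator yields precisely the coefficients ${}^{\scaleto{(1)}{5pt}}\!A^{\pm}_{\tk},{}^{\scaleto{(2)}{5pt}}\!A^{\pm}_{\tk},{}^{\scaleto{(3)}{5pt}}\!A^{\pm}_{\tk}$ of \eqref{eq:A1} and thus the formulas \eqref{eqn:OptimalL}. Substituting $L^{\pm,*}_{\tk}$ back into $Q$ and matching the $I_\tk^2$, $I_\tk$, and constant coefficients against the left-hand side of \eqref{eq113} reproduces the recursions \eqref{eqn:alphak}--\eqref{eq:g}, which is why this argument is largely contained in the proof of Theorem \ref{prop:measurability}.

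The routine part is the moment bookkeeping in Steps 1--2: tracking the four indicator cases and which $\mu^\pm_{c^mp^n}$ multiplies which power of $\lp,\lm$. The genuine obstacle is the second-order check in Step 3: without the sign guarantees $\alpha^{1\pm}_\tkk<0$ and $\gamma_{t_k}<0$, the critical point of $Q$ could be a saddle rather than a maximizer, and these are exactly the facts whose verification requires the delicate inductive estimates behind Lemma \ref{lemma:alpha} (and in particular the consistency of those estimates with the terminal values $\alpha_{t_{N+1}}=-\lambda<0$).
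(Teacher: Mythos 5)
Your proposal is correct and follows essentially the same route as the paper: expand the Bellman recursion under the ansatz using Assumptions \ref{assump:cp} and \ref{assump:price} to reduce the inner problem to a quadratic in $(\lp,\lm)$, solve the first-order conditions to recover the coefficients \eqref{eq:A1}, and confirm global optimality via the second-order test using $\alpha^{1\pm}_{\tkk}\le 0$ and $\gamma_{t_k}<0$ from Lemma \ref{lemma:alpha} (the paper's quantity $B_k=-4\gamma_{t_k}$ is exactly your $\det\mathbf{H}$). No gaps.
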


	{In light of the previous result, the optimal placement strategy takes the form:
		\begin{align}\label{DfnTldL0bcNFF}
\begin{split}	
		{a}_{t_k}^{*}&=S_{t_{k}}+{}^{\scaleto{(1)}{5pt}}\!A^+_{\tk}I_\tk+{}^{\scaleto{(2)}{5pt}}\!A^+_{\tk}+{}^{\scaleto{(3)}{5pt}}\!A^+_{\tk},\\
	{b}_{t_k}^{*}&=S_{t_{k}}+{}^{\scaleto{(1)}{5pt}}\!A^-_{\tk}I_\tk+{}^{\scaleto{(2)}{5pt}}\!A^-_{\tk}-{}^{\scaleto{(3)}{5pt}}\!A^-_{\tk}.
\end{split}
\end{align}
In the preprint \cite{zoe}, an extensive analysis of the properties of the optimal strategy was carried out in the case that the arrival intensity of MOs is deterministic rather than adaptive as in our setting (see Remark \ref{AdpHistMMD}). One of the main conclusions therein is that the randomness of $c$ and $p$ are not just a mathematical `artifact' for the sake of generalization, but play an important role in the behavior of the optimal placement strategy. Many of the conclusions therein transfer to our setting, but, for the sake of space, we just highlight some of the most important here:}
\begin{enumerate}
\item  {The second term in \eqref{DfnTldL0bcNFF} is fundamental as it can be interpreted as the inventory  adjustment to the optimal strategy. In the case of $\pi(1,1)=0$ (which is met to a good degree when trading frequency is high enough), the coefficient ${}^{\scaleto{(1)}{5pt}}\!A^-_{\tk}$ simplifies as follows: 
\begin{equation}\label{WNTSqrI}
	{}^{\scaleto{(1)}{5pt}}\!A^\pm_{\tk}=\frac{\muo^\pm\alpha^{1\pm}_{\tkk}}{\muo^\pm-\alpha^{1\pm}_{\tkk}\mut^\pm}=
	\frac{\alpha^{1\pm}_{\tkk}}{1-\alpha^{1\pm}_{\tkk}\muo^\pm-\alpha^{1\pm}_{\tkk}{\rm Var}(c^{\pm}_{t_{k+1}}|\mathcal{F}_{t_{k}})/\muo^\pm}.
\end{equation}
Due to Lemma \ref{lemma:alpha}, the coefficient above is negative, which means that when the inventory is positive (negative), the ask and bid levels decrease (increase) to stimulate selling (buying) of stock and, hence, bring inventory closer to $0$. The larger the level of the slope $c$, the smaller the effect of inventory in the optimal placement strategy. However, with the same average value of $c$, stocks with more variable $c$ require smaller inventory adjustment.} 

\item {In the case of $\pi(1,1)\neq{}0$, we still have that ${}^{\scaleto{(1)}{5pt}}\!A^\pm_{\tk}<0$.
Indeed, recalling \eqref{eqn:gamma:eta}-\eqref{eq:A1} and $\alpha^{1\pm}_{\tkk}\leq 0$, and applying \eqref{eqn:comparison:alpha}, the numerator of ${}^{\scaleto{(1)}{5pt}}\!A^\pm_{\tk}<0$ satisfies:
\begin{align*}
	\alpha^{1\pm}_{\tkk}\rho^{\pm}_{t_{k}}-\alpha^{1\mp}_{\tkk}\psi_{t_{k}}^{\pm}&=\alpha^{1\pm}_{\tkk}{\pi^+_{\tkk}}{\pi^-_{\tkk}}\muo^\pm(\alpha^{1\mp}_{\tkk}\mut^\mp-\muo^\mp)-\alpha^{1\mp}_{\tkk}{\pi^\mp_\tkk}{\pi_\tkk(1,1)}\alpha^{1,1}_\tkk\muo^\pm(\muo^\mp)^2\\
	&\geq
	\alpha^{1\pm}_{\tkk}{\pi^+_{\tkk}}{\pi^-_{\tkk}}\muo^\pm(\alpha^{1\mp}_{\tkk}\mut^\mp-\muo^\mp)-\alpha^{1\mp}_{\tkk}{\pi^\mp_\tkk}\alpha^{1\pm}_{\tkk}{\pi^{\pm}_{\tkk}}\muo^\pm(\muo^\mp)^2\\
	&=\alpha^{1+}_{\tkk}\alpha^{1-}_{\tkk}{\pi^+_{\tkk}}{\pi^-_{\tkk}}\muo^\pm(\mut^\mp-(\muo^\mp)^2)-\alpha^{1\pm}_{\tkk}{\pi^+_{\tkk}}{\pi^-_{\tkk}}\muo^+\muo^->0.
\end{align*}
Since the denominator $\gamma_{t_{k}}<0$ (cf.~(\ref{eq:Dneg})), we conclude that ${}^{\scaleto{(1)}{5pt}}\!A^\pm_{\tk}<0$.}

\item {Again, assuming that $\pi(1,1)\equiv 0$, we can further write:
\begin{align}\label{DfnTldL0bcNFFb}
\begin{split}	
		{a}_{t_k}^{*}&=S_{t_{k}}+\frac{\muo^+\alpha^{1+}_{\tkk}}{\muo^+-\alpha^{1+}_{\tkk}\mut^+}I_\tk+\frac{1}{2}\frac{\muo^+h^{1+}_{\tkk}}{\muo^+-\alpha^{1+}_{\tkk}\mut^+}
		+\frac{1}{2} \frac{\muoo^+-2\alpha^{1+}_{\tkk}\muto^+}{\muo^+-\alpha^{1+}_{\tkk}\mut^+},\\
	{b}_{t_k}^{*}&=S_{t_{k}}+\frac{\muo^-\alpha^{1-}_{\tkk}}{\muo^--\alpha^{1-}_{\tkk}\mut^-}I_\tk+\frac{1}{2}\frac{\muo^-h^{1-}_{\tkk}}{\muo^--\alpha^{1-}_{\tkk}\mut^-}-\frac{1}{2} \frac{\muoo^--2\alpha^{1+}_{\tkk}\muto^-}{\muo^--\alpha^{1-}_{\tkk}\mut^-}.
\end{split}
\end{align}
Computationally, it can be shown that $\alpha^{1\pm}_{\tkk}$ and $h^{1\pm}_{\tkk}$ are close to $0$ for most of the time interval $[0,T]$ and it is only for $t_{k+1}$ close to $T$, that their values are significantly different from $0$ (especially, $\alpha^{1\pm}_{\tkk}$). Then, we have the approximations:
\begin{align}\label{DfnTldL0bcNFFbAp}
\begin{split}	
		{a}_{t_k}^{*}&\approx S_{t_{k}}
		+\frac{1}{2} \frac{\muoo^+}{\muo^+}=
		S_{t_{k}}+\dfrac{\mu_{p}^{+}}{2}+\dfrac{{\rm Cov}(c^{+}_\tkk,p^{+}_{\tkk}|\F)}{2\mu_{c}^{+}},\\
		{b}_{t_k}^{*}&\approx S_{t_{k}}
		-\frac{1}{2} \frac{\muoo^-}{\muo^-}=S_{t_{k}}-\dfrac{\mu_{p}^{-}}{2}-\dfrac{{\rm Cov}(c^{-}_\tkk,p^{-}_{\tkk}|\F)}{2\mu_{c}^{-}} .
\end{split}
\end{align}
The correlation between $c$ and $p$ now plays a key role in the optimal placements. When $c$ and $p$ are uncorrelated (such as when $c$ or $p$ are deterministic), the optimal placements are near the midpoint between $S_\tk$ and the average reservation price $\sk\pm{\mu_{p}^\pm}$ for most of the time. However, when the correlation between $c$ and $p$ is positive, 
instead of placing LOs around $\sk\pm \mu_{p}^\pm/2$, the HFM will tend to go deeper into the book. Roughly, a larger realization of $c$ also implies a large value of $p$, resulting in a larger demand function and, hence, greater opportunity for the MM to obtain better prices for her filled LOs.}

\item {Under the condition $\pi(1,1)\equiv0$ and certain market symmetry and independence conditions, we can strengthen the conclusions of the previous item. Specifically, if we assume that $\pi^{+}=\pi^{-}$, $\mu_{cp}^{\pm}=\mu_{c}^{\pm}\mu_{p}^{\pm}$, $\mu_{c^2p}^{\pm}=\mu_{c^2}^{\pm}\mu_{p}^{\pm}$, $\mu_{c^m}^{+}=\mu_{c^m}^{-}=:\mu_{c^m}$ and $\mu_{p^m}^{+}=\mu_{p^m}^{-}=:\mu_{p^m}$, when $m=1,2$, and $\alpha^{1+}=\alpha^{1-}$ (see Lemma \ref{lemma:gfunc} and the proof of Corollary \ref{cor:cond2:alt} for sufficient conditions for the latter to hold), then we have that $h^{1\pm}_{t_{k+1}}\equiv0$ and \eqref{DfnTldL0bcNFFb} uncovers the existence of a critical inventory level that dictates the relation of the optimal placements relative to the nominal values ${a}_{t_k}^{0}:=S_{t_{k}}
		+ \frac{\mu_p}{2}$ and ${b}_{t_k}^{0}:=S_{t_{k}}
		-\frac{\mu_p}{2}$. Specifically, let ${I}^0:=\dfrac{\mut\mu_p}{2\muo}$.
Then, we have:}
\begin{itemize}
\item {When $I_\tk={I}^0$ ($I_\tk=-{I}^0$), the optimal ask (bid) quote is at the level 
${S_{t_{k}}+\mu_{p}/2}$ ({${S_{t_{k}}-\mu_{p}/2}$});}
\item {When the inventory level $I_\tk\in(0,{I}^0)$ ($I_\tk\in(-{I}^0,0)$), the optimal ask (bid) quote is deeper in the LOB {relative} to the levels 
${S_{t_{k}}+\mu_{p}/2}$ ({${S_{t_{k}}-\mu_{p}/2}$});} 
\item {When the inventory level $I_\tk>{I}^0$ ($I_\tk<-{I}^0$), the optimal strategy is to place the ask (bid) quote closer to {$S_{t_{k}}$} than {to} { ${S_{t_{k}}+\mu_{p}/2}$} ({${S_{t_{k}}-\mu_{p}/2}$}), and the bid (ask) quote {farther} from {$S_{t_{k}}$} than {from}  {${S_{t_{k}}-\mu_{p}/2}$} ({${S_{t_{k}}+\mu_{p}/2}$}) into the LOB.}
\end{itemize}
\end{enumerate}

%
%
%
We next prove a verification theorem for the optimal placements given in Eq.~(\ref{DfnTldL0bcNFF}). {Its proof is given in Appendix \ref{ProofVerifyH}.}
\begin{theorem}\label{VeriThrm1}
	{The optimal value function $V_\tk$ {of} the control problem (\ref{OOCP0}) is given by 
		\begin{equation*}
			V_\tk=v(t_k,\sk,W_\tk,I_\tk),
		\end{equation*} 
		where, for $\tk\in \mathcal{T}$,  
		\[
		v(\tk,s,{\mathsf{w}},i)= {\mathsf{w}}+\alpha_{\tk}i^{2}+s i+h_{\tk}i+g_{\tk},
		\]
		{with $\alpha_\tk$, $h_\tk$ and $g_\tk$ given {as} in {Theorem} \ref{prop:measurability}.
			Furthermore,} the optimal controls are given by $L_.^{\pm,*}$ as defined in {(\ref{eqn:OptimalL})}.}
\end{theorem}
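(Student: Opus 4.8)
The plan is to run the usual two-step verification argument for a finite-horizon discrete-time control problem. I would first introduce the candidate value function
\[
v(\tk,s,\mathsf{w},i):=\mathsf{w}+\alpha_\tk i^{2}+si+h_\tk i+g_\tk ,
\]
with $\alpha_\tk,h_\tk,g_\tk$ the coefficients produced by Theorem~\ref{prop:measurability}, and set $\widetilde V_\tk:=v(\tk,\sk,W_\tk,I_\tk)$. Since $\alpha_{t_{N+1}}=-\lambda$ and $h_{t_{N+1}}=g_{t_{N+1}}=0$, one has $\widetilde V_{t_{N+1}}=W_T+S_TI_T-\lambda I_T^{2}=V_{t_{N+1}}$, the terminal reward. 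The two claims to establish are: (a) for every admissible $(L^\pm_\tk,\dots,L^\pm_{t_N})$ one has $\E[W_T+S_TI_T-\lambda I_T^{2}\mid\F]\le\widetilde V_\tk$, whence $V_\tk\le\widetilde V_\tk$; and (b) the strategy $L^{\pm,*}$ of Corollary~\ref{cor:candidates} is admissible and achieves equality, whence $V_\tk=\widetilde V_\tk$ and $L^{\pm,*}$ is optimal.

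The engine of both claims is a one-step estimate. Fixing $k$ and an $\F$-measurable pair $(L^+_\tk,L^-_\tk)$, I would substitute the recursions \eqref{eqi11}--\eqref{eqw11} into $v(\tkk,\skk,W_\tkk,I_\tkk)$ and take $\E[\,\cdot\mid\F]$; by the martingale property and the conditional independence in Assumption~\ref{assump:price}, the $(\skk-\sk)$-terms drop out, and conditioning first on $\Gt$ and then on $\F$ (invoking \eqref{eqn:mucpa0}--\eqref{eqn:mucp}) reduces the right-hand side to an explicit quadratic $\Phi_\tk(L^+_\tk,L^-_\tk)$ in the two controls --- this is precisely the computation already carried out in the proof of Theorem~\ref{prop:measurability}, in which $\alpha_\tk,h_\tk,g_\tk$ were \emph{defined} as the constant, linear-in-$I_\tk$, and quadratic-in-$I_\tk$ parts of $\sup_{L^\pm}\Phi_\tk$. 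The key algebraic fact is that $\Phi_\tk$ is strictly concave in $(L^+_\tk,L^-_\tk)$: its Hessian has diagonal entries $2\pi^\pm_\tkk\big(\alpha^{1\pm}_{\tkk}\mut^\pm-\muo^\pm\big)$, which are strictly negative because Lemma~\ref{lemma:alpha} (together with $\alpha_{t_{N+1}}=-\lambda<0$) gives $\alpha_{\tkk}<0$, hence $\alpha^{1\pm}_{\tkk}<0$; and its determinant is proportional to $-\gamma_\tk>0$ since $\gamma_\tk<0$ (cf.~\eqref{eq:Dneg}). Consequently $\Phi_\tk$ has a unique global maximizer, which the first-order conditions identify with the pair $(L^{+,*}_\tk,L^{-,*}_\tk)$ of \eqref{eqn:OptimalL}; equivalently, for every $\F$-measurable $(L^+_\tk,L^-_\tk)$,
\[
\E\big[v(\tkk,\skk,W_\tkk,I_\tkk)\,\big|\,\F\big]=\Phi_\tk(L^+_\tk,L^-_\tk)\le\widetilde V_\tk ,
\]
with equality exactly at $(L^{+,*}_\tk,L^{-,*}_\tk)$. (When $\pi_\tkk(1,1)\equiv0$ the cross term is absent and $\Phi_\tk$ splits into two strictly concave one-variable parabolas, so the same conclusion is immediate.)

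With the one-step estimate in hand, claim (a) follows by applying it successively at $t_N,t_{N-1},\dots,\tk$ along the state trajectory driven by a given admissible strategy and telescoping via the tower property, $\E[\widetilde V_{t_{N+1}}\mid\F]\le\E[\widetilde V_{t_N}\mid\F]\le\cdots\le\widetilde V_\tk$; since $\widetilde V_{t_{N+1}}$ equals the terminal reward, taking the supremum over admissible strategies gives $V_\tk\le\widetilde V_\tk$. For claim (b) I would first note that $L^{\pm,*}$ is admissible: by Theorem~\ref{prop:measurability}(ii) and \eqref{eqn:gfunc:pi} all the coefficients appearing in \eqref{eq:A1} are $\Hkk$-measurable (hence $\F$-measurable) and $I_\tk\in\F$, so $L^{\pm,*}_\tk\in\F$ for every $k$; then, running the same telescoping along the trajectory generated by $L^{\pm,*}$, every inequality becomes an equality, yielding $\E[W_T+S_TI_T-\lambda I_T^{2}\mid\F]=\widetilde V_\tk$ under $L^{\pm,*}$. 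Combining (a) and (b) gives $V_\tk=\widetilde V_\tk=v(\tk,\sk,W_\tk,I_\tk)$ and the optimality of $L^{\pm,*}$, which is exactly the assertion.

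I expect the main obstacle to be the strict concavity of $\Phi_\tk$: it is what upgrades ``critical point'' to ``global maximizer'' and thereby makes the dynamic-programming iteration \eqref{eqn:optimalprob} legitimate in the first place, and it is the whole reason Lemma~\ref{lemma:alpha} and the sign analysis of $\gamma_\tk$ were established beforehand. A secondary, routine matter is integrability --- one must check that all the conditional expectations above are finite and that admissible strategies keep $I_T$ (hence the quadratic $v$) integrable, which follows from finiteness of the moments $\muo^\pm,\mut^\pm,\muoo^\pm,\muto^\pm,\mutt^\pm$ and a straightforward induction on \eqref{eqi11}; this also legitimizes the telescoping and the interchange with the supremum.
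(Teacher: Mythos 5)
Your proposal is correct and follows essentially the same route as the paper: a supermartingale/one-step Bellman inequality for arbitrary admissible controls (yielding $V_{t_k}\le v$), combined with equality along the trajectory generated by $L^{\pm,*}$ (yielding $V_{t_k}\ge v$), with the strict concavity of the one-step quadratic — guaranteed by Lemma \ref{lemma:alpha} and the sign of $\gamma_{t_k}$ — doing the work of upgrading the stationary point to a global maximizer, exactly as in Corollary \ref{cor:candidates}. Your additional remarks on measurability of $L^{\pm,*}$ and on integrability are sensible housekeeping that the paper leaves implicit.
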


\subsection{Optimal Placement Strategy for a General Midprice Process} \label{Sec:General:process}

The objective of this subsection is to extend our previous results to the case when the midprice process is a general stochastic process {without relying on a martingale assumption. As we will see below, in} that case, the optimal placement strategy will also depend on the forecasts of {future price} {changes:
\begin{equation}\label{FrcstFtrPri}
\Delta_{t_j}^\tk:=\E(S_{t_{j+1}}-S_{t_j}|\F),\quad j\geq k.
\end{equation}
We can see $\Delta_{t_j}^\tk$ as the {MM}'s forecast of the price change during the time interval $[t_j,t_{j+1}]$, $j\geq{}k$, as seen at time $\tk$.} We first need to {modify} our Assumption \ref{assump:price} as follows:
\begin{assumption}\label{assump:not:mart}
For any $k=1,2,\dots,N$, $\{{S_{t_{j+1}}-S_{t_j}}\}_{{j=k,
\dots,N}}$ and $(\onep,\onem,c^+_{t_{k+1}},p^+_{t_{k+1}},c^-_{t_{k+1}},p^-_{t_{k+1}})$ are conditionally independent given $\F$.
\end{assumption}

To solve the optimization problem \eqref{eqn:optimalprob}, we use an ansatz for the value function similar to that in the previous subsection:
\begin{equation}\label{eq:V_NMGb}
V_\tk:= W_\tk+\sk I_\tk+\alpha_{\tk}I_\tk^2+\widetilde{h}_{\tk}I_\tk+\widetilde{g}_{\tk},
\end{equation} 
where $\alpha_{\tk}$, $\widetilde{h}_{\tk}$, and $\widetilde{g}_{\tk}$ are $\F$-adapted real-valued random variables to be determined from the dynamical programming principle \eqref{eqn:optimalprob}. As one may suspect from the notation above, $\alpha_\tk$ will turn out to be the same as before: an $\Hkk$-measurable random variable determined by the recursive relation \eqref{eqn:alphak}. However, $\widetilde{h}_{\tk}$ will be different (in fact, not necessarily $\Hkk$-measurable).

The following theorem summarizes the analogous results of Theorem \ref{prop:measurability} and Corollary \ref{cor:candidates} under {a} general price dynamics. Its proof is provided in Appendix \ref{appdx:general:price}.

\begin{theorem}\label{thm:General:Price:controls}
	Under Assumptions \ref{assump:cp} and \ref{assump:not:mart}, 
	the optimal strategy that solves the Bellman equation \eqref{eqn:optimalprob} with the ansatz \eqref{eq:V_NMGb} and terminal conditions $\alpha_{t_{N+1}}=-\lambda$ and $\widetilde{h}_{t_{N+1}}=0$ is given, for $k=0,\dots,N$, by 	\begin{align}
	\begin{split}
	\widetilde{L}_{\tk}^{+,*}&={L}_{\tk}^{+,*}+\frac{{\rho_{t_k}^+-\psi_{t_k}^+}}{2\gamma_{t_{k}}}\Delta_{t_k}^{t_k}+\frac{1}{2\gamma_{t_{k}}}\sum_{i=k+1}^{N}\left\{{\rho_{t_k}^+}\Delta_{t_i}^{t_k}\mathbb{E}\Big[\mathbb{E}\Big(\prod_{l=k+2}^{i+1}\xi_l\Big|\FKK\Big)\Big|\F,\onep=1 \Big]\right.\\
	&\qquad\qquad\qquad\qquad\qquad\qquad\qquad\qquad      - \left.{\psi_{t_k}^+}\Delta_{t_i}^{t_k}\mathbb{E}\Big[\mathbb{E}\Big(\prod_{l=k+2}^{i+1}\xi_l\Big|\FKK\Big)\Big|\F,\onem=1 \Big]\right\},\\
	\widetilde{L}_{\tk}^{-,*}&={L}_{\tk}^{-,*}-\frac{\rho_{t_k}^--\psi_{{t_k}}^-}{2\gamma_{t_{k}}}\Delta_{t_k}^{t_k}-\frac{1}{2\gamma_{t_{k}}}\sum_{i=k+1}^{N}\left\{\rho_{{t_k}}^-\Delta_{t_i}^{t_k}\mathbb{E}\Big[\mathbb{E}\Big(\prod_{l=k+2}^{i+1}\xi_l\Big|\FKK\Big)\Big|\F,\onep=1 \Big]\right.\\
	&\qquad\qquad\qquad\qquad\qquad\qquad\qquad\qquad      \left.- \psi_{{t_k}}^-\Delta_{t_i}^{t_k}\mathbb{E}\Big[\mathbb{E}\Big(\prod_{l=k+2}^{i+1}\xi_l\Big|\FKK\Big)\Big|\F,\onem=1 \Big]\right\},\label{eq:LtildeNMGbb}
	\end{split}
	\end{align}
	where ${L}_{\tk}^{\pm,*}$ is defined as in Corollary \ref{cor:candidates}, {$\gamma_{t_k}$, $\rho_{t_k}^\pm$, and  $\psi_{t_k}^\pm$ are defined as in {\eqref{eqn:gamma:eta}}, and the quantity $\xi_{k+1} \in \FKK$ is given} as:
		\begin{align}
		\label{eqn:rho:k}
		\begin{split}
		\xi_{k+1}&= 1 +  \frac{\onep}{\pi^+_\tkk\gamma_\tk}(\pi^+_\tkk\muo^+\alpha^{1+}_\tkk\rho_{{t_k}}^+-  \pi^-_\tkk\muo^-\alpha^{1-}_\tkk\psi^{-}_{t_k}) \\ 
		&\qquad+  \frac{\onem}{\pi^-_\tkk\gamma_\tk}(\pi^-_\tkk\muo^-\alpha^{1-}_\tkk\rho^{-}_{t_k}-  \pi^+_\tkk\muo^+\alpha^{1+}_\tkk\psi^+_{t_k}).
		\end{split}
		\end{align}
\end{theorem}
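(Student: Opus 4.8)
The strategy is to mimic the backward induction used for the martingale case (Theorem \ref{prop:measurability} and Corollary \ref{cor:candidates}), but now carrying along the extra linear-in-inventory term whose coefficient $\widetilde{h}_{\tk}$ absorbs the accumulated price drift. First I would substitute the ansatz \eqref{eq:V_NMGb} for $V_{\tkk}$ into the Bellman equation \eqref{eqn:optimalprob}, and plug in the wealth/inventory recursions \eqref{eqi11}--\eqref{eqw11}. Using Assumption \ref{assump:not:mart} to factor the price increment $S_{\tkk}-S_{\tk}$ out of the products with $(\onep,\onem,c^\pm_{\tkk},p^\pm_{\tkk})$, the expectation $\E[V_{\tkk}\mid\F]$ becomes a quadratic function of the controls $(\lp,\lm)$, of exactly the same shape as in the martingale case except that every occurrence of $\skk$ coming from the term $\skk I_{\tkk}$ now contributes an extra $\Delta_{\tk}^{\tk}=\E(S_{\tkk}-S_{\tk}\mid\F)$ relative to $\sk$, and the coefficient multiplying $I_{\tkk}$ is $\widetilde{h}_{\tkk}$ rather than $h_{\tkk}$. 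Concretely, the role played by $h^{1\pm}_{\tkk}$ in the martingale computation is now played by $\widetilde{h}^{1\pm}_{\tkk}+\Delta_{\tk}^{\tk}$ (the conditional forecast absorbing the drift), since both enter linearly through the same channel $\skk I_{\tkk}+\widetilde{h}_{\tkk}I_{\tkk}$.

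Second, I would carry out the unconstrained quadratic maximization in $(\lp,\lm)$. Because the Hessian is identical to the martingale case (it depends only on the second-order quantities in \eqref{eqn:gamma:eta}, which involve $\alpha^{1\pm}_{\tkk}$, $\alpha^{1,1}_{\tkk}$ and the moments $\mu_{c^m p^n}^{\pm}$, none of which are affected by the price dynamics), and since Lemma \ref{lemma:alpha} together with the sign analysis around \eqref{eq:Dneg} guarantees $\gamma_{\tk}<0$ and concavity, the maximizer is obtained by the same first-order-condition formulas, with $h^{1\pm}_{\tkk}$ replaced by $\widetilde{h}^{1\pm}_{\tkk}+\Delta_{\tk}^{\tk}$. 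This immediately yields
\[
\widetilde{L}_{\tk}^{\pm,*}={}^{\scaleto{(1)}{5pt}}\!A^\pm_{\tk}(\pm I_\tk)+{}^{\scaleto{(2)}{5pt}}\!A^\pm_{\tk}\big|_{h\to\widetilde h}+{}^{\scaleto{(3)}{5pt}}\!A^\pm_{\tk}\pm\ldots,
\]
and, after isolating the $\Delta_{\tk}^{\tk}$-dependent part of ${}^{\scaleto{(2)}{5pt}}\!A^\pm_{\tk}$ using \eqref{eq:A1} and \eqref{eqn:gamma:eta}, the correction term $\tfrac{\rho_{\tk}^\pm-\psi_{\tk}^\pm}{2\gamma_{\tk}}\Delta_{\tk}^{\tk}$ appearing in \eqref{eq:LtildeNMGbb}. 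The remaining discrepancy between $\widetilde L^{\pm,*}$ and $L^{\pm,*}$ is precisely $\tfrac{\rho^\pm_{\tk}h^{1\pm}_{\tkk}-\psi^\pm_{\tk}h^{1\mp}_{\tkk}}{2\gamma_{\tk}}$ minus its tilde-version, i.e. it is governed by $\widetilde h^{1\pm}_{\tkk}-h^{1\pm}_{\tkk}$, so the whole identity reduces to establishing a recursion for $\widetilde h_{\tk}$ and then unrolling it.

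Third — and this is the heart of the matter — I would derive the recursion for $\widetilde h_{\tk}$ by matching the coefficient of $I_{\tk}$ on both sides of the Bellman equation after substituting the optimal controls, just as \eqref{eqn:htk} was obtained in the martingale case. The new feature is that the recursion for $\widetilde h_{\tk}$ is inhomogeneous: it has the form $\widetilde h_{\tk}=\E[\xi_{\tkk}\,\widetilde h_{\tkk}\mid\F] + (\text{terms linear in }\Delta^{\tk}_{\tk})$, where $\xi_{\tkk}$ is exactly the $\FKK$-measurable random variable in \eqref{eqn:rho:k} — it arises as the coefficient of $I_{\tkk}$ produced when the optimal $(\lp,\lm)$ are substituted back, collecting the contributions $1$ (from $\skk I_{\tkk}$ passing to $\sk I_{\tk}$), plus the $\alpha^{1\pm}_{\tkk}$-weighted feedback through ${}^{\scaleto{(1)}{5pt}}\!A^\pm_{\tk}$. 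Then I would solve this linear recursion backward from the terminal condition $\widetilde h_{t_{N+1}}=0$, using the tower property to telescope the products $\prod_{l} \xi_l$; the forecasts $\Delta_{\ti}^{\tk}$ for $i=k,\ldots,N$ emerge with weights $\E[\E(\prod_{l=k+2}^{i+1}\xi_l\mid\FKK)\mid\F,\onep=1]$ etc., exactly as written in \eqref{eq:LtildeNMGbb}. Substituting the resulting expression for $\widetilde h^{1\pm}_{\tkk}$ back into the formula for $\widetilde L^{\pm,*}$ from step two produces the stated sum over $i$. Finally I would verify $\alpha_{\tk}$ still satisfies \eqref{eqn:alphak} (unchanged, since the $I_{\tk}^2$-coefficient match is insensitive to the drift) and record the companion recursion for $\widetilde g_{\tk}$, which is needed only for completeness and is handled identically.

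\textbf{Main obstacle.} The routine parts are the quadratic optimization and the coefficient-matching for $\alpha$. The delicate part is the bookkeeping in step three: correctly identifying that the $I_{\tkk}$-coefficient feedback is captured by the single random variable $\xi_{\tkk}$ of \eqref{eqn:rho:k}, and then carefully managing the \emph{conditional} expectations in the telescoping — because $\xi_l\in\mathcal F_{t_l}$ is not $\mathcal F_{t_{l-1}}$-measurable, one must iterate the tower property level by level and track which conditioning event ($\onep=1$ vs.\ $\onem=1$) each $\Delta_{\ti}^{\tk}$ inherits from the term in the one-step recursion that generated it. Getting the index ranges on the products $\prod_{l=k+2}^{i+1}$ exactly right, and confirming that the $\Delta_{\tk}^{\tk}$ term (the $i=k$ term) is the one that does \emph{not} carry the extra expectation, is where almost all the care is required.
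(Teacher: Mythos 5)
Your plan follows essentially the same route as the paper: substitute the ansatz into the Bellman equation, observe that the quadratic in $(L^+_{t_k},L^-_{t_k})$ has the same Hessian as in the martingale case so that concavity and the first-order conditions carry over with $h^{1\pm}_{\tkk}$ replaced by $\widetilde h^{1\pm}_{\tkk}+\Delta^{t_k}_{t_k}$, derive the inhomogeneous linear recursion for $\widetilde h_{t_k}$ whose homogeneous part is driven by the random variable $\xi_{k+1}$ of \eqref{eqn:rho:k}, and unroll it by telescoping to produce the weighted sum of forecasts in \eqref{eq:LtildeNMGbb}. Your identification of where the care is needed (the index ranges on $\prod_{l=k+2}^{i+1}\xi_l$, the fact that $\xi_l\in\mathcal F_{t_l}\setminus\mathcal F_{t_{l-1}}$, and which conditioning event each $\Delta^{t_k}_{t_i}$ inherits) matches the paper's Appendix \ref{appdx:general:price}.

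There is one step you treat as automatic that the paper has to make a separate induction statement out of, and it is worth flagging. When you assert that ``the role played by $h^{1\pm}_{\tkk}$ is now played by $\widetilde h^{1\pm}_{\tkk}+\Delta^{t_k}_{t_k}$, since both enter linearly through the same channel,'' you are implicitly using the factorization $\mathbb{E}\big[\widetilde h_{\tkk}\,\onep c^+_{\tkk}(p^+_{\tkk}-L^+_{t_k})\,\big|\,\F\big]=\widetilde h^{1+}_{\tkk}\pi^+_{\tkk}\big(\mu^+_{cp}-\mu^+_c L^+_{t_k}\big)$ (and its bid-side analogue). In the martingale case this follows from Assumption \ref{assump:cp} because $h_{\tkk}\in\mathcal H^{\varpi}_{\tkk}$ is a function of the MO indicators alone. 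But $\widetilde h_{\tkk}$ is \emph{not} $\mathcal H^{\varpi}_{\tkk}$-measurable: it contains the forecasts $\Delta^{t_{k+1}}_{t_i}$, so one must separately justify that it still decouples from $(c^{\pm}_{\tkk},p^{\pm}_{\tkk})$ given the indicators. The paper handles this by making the identities $\mathbb{E}[\widetilde h_{\tkk}\mathbbm{1}^{\delta}_{\tkk}c^{\delta}_{\tkk}|\F]=\widetilde h^{1\delta}_{\tkk}\pi^{\delta}_{\tkk}\mu^{\delta}_c$ (and the $cp$ version) the statement (i) of an interleaved backward induction: (i) at steps $j,\dots,N$ yields the optimizer and the $\widetilde h$-recursion at step $j$, and then the explicit representation $\widetilde h_{t_{j+1}}=h_{t_{j+1}}+\sum_i\Delta^{t_{j+1}}_{t_i}\mathbb{E}(\prod_u\xi_u|\mathcal F_{t_{j+1}})$, together with the $\mathcal H^{\varpi}_{t_{j+1}}$-measurability of $\mathbb{E}(\prod_u\xi_u|\mathcal F_{t_{j+1}})$ and Assumption \ref{assump:not:mart} applied to the \emph{whole} future increment sequence, is what pushes (i) back to step $j-1$. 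Your step three actually constructs exactly the representation needed to close this gap, so the fix is available within your own plan; but as written, the quadratic-optimization step in your second paragraph uses a fact that is only established after the telescoping, and the argument must be organized as an interleaved induction to avoid circularity.
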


\begin{remark} \label{rem:example0} 
{Formula (\ref{eq:LtildeNMGbb}) {gives us} some interesting insights, even in the simplest case where only one-step-ahead forecast $\Delta_{t_{k}}^{t_{k}}$ is implemented, while assuming that $\Delta_{t_i}^{t_k}=0$ afterward ($i\geq k+1$). In that case, the third summands of $\widetilde{L}_{\tk}^{+,*}$ and $\widetilde{L}_{\tk}^{-,*}$ vanish, yielding the following {parsimonious formulas for the optimal placement strategy}:
	\begin{align}\label{DfnTldL0}
\begin{split}	
		\widetilde{a}_{t_k}^{*}&:=S_{t_{k}}+\widetilde{L}_{\tk}^{+,*}=S_{t_{k}}+{L}_{\tk}^{+,*}+\frac{\rho_{{t_k}}^+-\psi_{{t_k}}^+}{2\gamma_{t_{k}}}\Delta_{t_k}^{t_k},\\
	\widetilde{b}_{t_k}^{*}&:=S_{t_{k}}-\widetilde{L}_{\tk}^{-,*}=S_{t_{k}}-{L}_{\tk}^{-,*}+\frac{\rho_{{t_k}}^--\psi_{{t_k}}^-}{2\gamma_{t_{k}}}\Delta_{t_k}^{t_k}.
\end{split}
	\end{align}
The last term above allows the MM to adjust `online' her strategy depending on her views or forecast of the next price change at each time instant $t_k$. This feature  in turn provides  a more data driven placement strategy in addition to the {method} described in Remark \ref{AdpHistMMD} above. Using the facts that $\alpha^{1,1}_{t_{k+1}}\leq{}0$, $(\mu_c^\pm)^2\leq{}\mu_{c^2}^\pm$, and (\ref{eqn:alpha1p})-(\ref{eqn:alpha1m}), it is possible to show that
\begin{align*}
\rho_{{t_k}}^{+}-\psi_{{t_k}}^{+}
&\leq{\pi^+_{\tkk}}{\pi^-_{\tkk}}\muo^+\left(\alpha^{1,1}_\tkk\mut^- \pi_\tkk(1,1)\left[\frac{1}{\pi^{-}_{t_{k+1}}}-\frac{1}{\pi^{+}_{t_{k+1}}}\right]-\muo^-+
\alpha^{0,1}_\tkk\mut^-\left[1-\frac{\pi_\tkk(1,1)}{\pi^{-}_{t_{k+1}}}\right]\right),\\
\rho_{{t_k}}^{-}-\psi_{{t_k}}^{-}
&\leq{\pi^+_{\tkk}}{\pi^-_{\tkk}}\muo^-\left(\alpha^{1,1}_\tkk\mut^+ \pi_\tkk(1,1)\left[\frac{1}{\pi^{+}_{t_{k+1}}}-\frac{1}{\pi^{-}_{t_{k+1}}}\right]-\muo^++
\alpha^{1,0}_\tkk\mut^+\left[1-\frac{\pi_\tkk(1,1)}{\pi^{+}_{t_{k+1}}}\right]\right).
\end{align*}
Since $\alpha^{0,1}_\tkk,\alpha^{1,0}_\tkk\leq{}0$, we conclude that $\rho_{{t_k}}^{\pm}-\psi_{{t_k}}^{\pm}\leq{}0$ if 
\begin{align}\label{CndMksD}
	\pi_{t_{k+1}}(1,1)=0,\quad \text{ or }\quad 
	\pi_{t_{k+1}}^+=\pi_{t_{k+1}}^{-}.
\end{align}
In those cases, since $\gamma_{t_{k}}<0$ (cf.~(\ref{eq:Dneg})), the coefficients of $\Delta_{t_k}^{t_{k}}$ in (\ref{DfnTldL0}) are positive. This sign makes sense since,  if, for example, $\Delta_{t_k}^{t_k}>0$, the MM will try to post her LOs at higher price levels (on both sides of the book) to anticipate the expected higher price $S_{t_{k+1}}$ in the subsequent interval. {The first condition in \eqref{CndMksD} is satisfied to a good extend when the trading frequency is high enough, while the second condition therein is supported empirically by our analysis in Section \ref{EstPisH}.}}

\end{remark}

\subsection{Admissibility of the Optimal Strategy}\label{sec:admissibility}

In this section, we will {give} sufficient conditions to guarantee that the optimal strategy of Theorem  \ref{thm:General:Price:controls} is {strictly} admissible. 
{All the proof of this subsection are deferred to Appendix \ref{App:proof:admissible}.}

Recall from Definition \ref{defn:admissible} that a strategy $(L^\pm_{t_0},\dots,L^\pm_{t_N})$ is {strictly} admissible if for all $k\in\{0,1,\ldots,N\}$,  $L^\pm_{t_k} \in\F$, and $L^+_{t_k}+L^-_{t_k}>0$, implying that the execution price $a_{t_{k}}=S_{\tk}+L^+_{t_k}$ of the {ask} LO  {is always larger} than the  execution price $b_{t_{k}}=S_{\tk}-L^+_{t_k}$ of the {bid} LO. Proposition \ref{prop:admissibility} below provides sufficient conditions under which the optimal strategy introduced in Theorem \ref{thm:General:Price:controls} {enjoys this property}.

\begin{proposition}  \label{prop:admissibility}
	Under {Assumptions \ref{assump:cp} and \ref{assump:not:mart}} and regardless of the dynamics of the midprice process, the optimal strategy of Theorem \ref{thm:General:Price:controls} yields positive spreads at all times $($i.e., $a_{t_{k}}>b_{t_{k}}$, for all $k\in\{0,\dots,N\})$, provided that the following four conditions hold:
	\begin{enumerate}
		\item[(1)] The first and second conditional moments of $c^\pm$, as defined in Equation \eqref{eqn:mucp}, satisfy
		\begin{equation}\label{Cnd1PosSpr}
		\muo := \muo^+=\muo^-,\quad \mut := \mut^+=\mut^-.
		\end{equation}
		\item[(2)] At every time $k\in\{0,\dots,N\}$, 
		\begin{equation}
		\begin{aligned}\label{Cnd2PosSpr}
		\pi^+_{\tkk}= \pi^-_{\tkk}=:\pi_{\tkk}.
		\end{aligned}
		\end{equation}
		\item[(3)] For every $k\in\{0,1,2,\ldots,N\}$, the conditional expectations of $c_{t_k}^\pm p_{t_k}^\pm$ and $(c_{t_k}^\pm)^2 p_{t_k}^\pm$, as defined in Eq.~\eqref{eqn:mucp}, satisfy
		\begin{equation}\label{Cnd3PosSpr}
		\mu_{cp}^\pm=\mu_{c}^\pm\mu_{p}^\pm,\quad \mu_{c^2 p}^\pm=\mu_{c^2}^\pm\mu_p^\pm.
		\end{equation}
		\item[(4)] {For every $k\in\{1,2,\ldots,N+1\}$, the $\Hkk$-measurable random variables $\alpha_{t_k}$ and $h_{t_k}$, defined by Eqs.~\eqref{eqn:alphak} and \eqref{eqn:htk}, depend on $\mathbbm{1}_{t_{k}}^+$ and $\mathbbm{1}_{t_{k}}^-$ only through $\mathbbm{1}_{t_{k}}^++\mathbbm{1}_{t_{k}}^-$}.
%
	\end{enumerate}
\end{proposition}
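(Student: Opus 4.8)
The plan is to show that, under conditions~(1)--(4), the spread produced by the optimal strategy of Theorem~\ref{thm:General:Price:controls}, namely $a_{\tk}-b_{\tk}=\widetilde{L}_{\tk}^{+,*}+\widetilde{L}_{\tk}^{-,*}$, reduces to its martingale counterpart $L_{\tk}^{+,*}+L_{\tk}^{-,*}$, and that the latter is strictly positive. First I would extract the symmetries forced by the hypotheses. By Theorem~\ref{prop:measurability}(ii) the coefficient $\alpha_\tkk$ is $\Hkkk$-measurable, and by condition~(4) it depends on $(\onep,\onem)$ only through $\onep+\onem$; hence, conditionally on $\F$, $\alpha_\tkk$ takes one of three $\F$-measurable values $\alpha^{(0)}_\tkk,\alpha^{(1)}_\tkk,\alpha^{(2)}_\tkk$ according to whether $\onep+\onem$ is $0,1,2$, with $\alpha^{(2)}_\tkk=\alpha^{1,1}_\tkk$. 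Using \eqref{Dfnalh0} one then finds $\alpha^{1+}_\tkk=\bigl(1-\tfrac{\pi_\tkk(1,1)}{\pi^+_\tkk}\bigr)\alpha^{(1)}_\tkk+\tfrac{\pi_\tkk(1,1)}{\pi^+_\tkk}\alpha^{1,1}_\tkk$, and the same expression for $\alpha^{1-}_\tkk$ with $\pi^+_\tkk$ replaced by $\pi^-_\tkk$; condition~(2) forces these to agree, so $\alpha^{1+}_\tkk=\alpha^{1-}_\tkk$, and the same argument applied to $h_\tkk$ (which by condition~(4) has the identical structure) gives $h^{1+}_\tkk=h^{1-}_\tkk$. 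Plugging condition~(1) and these two identities into \eqref{eqn:gamma:eta} yields $\rho_{t_k}^+=\rho_{t_k}^-=:\rho_{t_k}$ and $\psi_{t_k}^+=\psi_{t_k}^-=:\psi_{t_k}$ (condition~(2) enters directly once more for the $\psi$'s).

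With $\rho_{t_k}^+=\rho_{t_k}^-$ and $\psi_{t_k}^+=\psi_{t_k}^-$, every price-forecast term $\Delta_{t_i}^{t_k}$ appearing in \eqref{eq:LtildeNMGbb} enters $\widetilde{L}_{\tk}^{+,*}$ and $\widetilde{L}_{\tk}^{-,*}$ with the same coefficient and opposite sign, so these terms cancel in the sum: $\widetilde{L}_{\tk}^{+,*}+\widetilde{L}_{\tk}^{-,*}=L_{\tk}^{+,*}+L_{\tk}^{-,*}$, irrespective of the midprice dynamics (which is the source of the corresponding clause in the statement). Feeding the same symmetries into \eqref{eq:A1} gives ${}^{\scaleto{(1)}{5pt}}\!A^+_{\tk}={}^{\scaleto{(1)}{5pt}}\!A^-_{\tk}$ and ${}^{\scaleto{(2)}{5pt}}\!A^+_{\tk}={}^{\scaleto{(2)}{5pt}}\!A^-_{\tk}$, so by Corollary~\ref{cor:candidates} the $I_\tk$- and $h$-dependent parts cancel in $L_{\tk}^{+,*}+L_{\tk}^{-,*}$, leaving $a_{\tk}-b_{\tk}={}^{\scaleto{(3)}{5pt}}\!A^+_{\tk}+{}^{\scaleto{(3)}{5pt}}\!A^-_{\tk}$, and it remains to show this is strictly positive.

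To that end I would substitute condition~(3), i.e.\ $\muoo^\pm=\muo\mu_p^\pm$ and $\muto^\pm=\mut\mu_p^\pm$, into the formula for ${}^{\scaleto{(3)}{5pt}}\!A^\pm_{\tk}$ in \eqref{eq:A1}; collecting terms and using $\psi_{t_k}=\pi_\tkk\,\pi_\tkk(1,1)\,\alpha^{1,1}_\tkk\,\muo^3$, one obtains
\begin{equation*}
{}^{\scaleto{(3)}{5pt}}\!A^+_{\tk}+{}^{\scaleto{(3)}{5pt}}\!A^-_{\tk}=\frac{(\rho_{t_k}+\psi_{t_k})(\mu_p^{+}+\mu_p^{-})}{\gamma_{t_k}}\left[\frac12-\frac{\alpha^{1+}_\tkk\mut}{\muo}+\pi_\tkk(1,1)\,\alpha^{1,1}_\tkk\,\muo\right].
\end{equation*}
Then comes the sign analysis. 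By Lemma~\ref{lemma:alpha} and $\alpha_{t_{N+1}}=-\lambda\le0$, all of $\alpha^{(1)}_\tkk,\alpha^{1\pm}_\tkk,\alpha^{1,1}_\tkk$ are $\le0$, hence $\rho_{t_k}=\pi^+_\tkk\pi^-_\tkk\muo(\alpha^{1+}_\tkk\mut-\muo)<0$ (using $\muo>0$) and $\psi_{t_k}\le0$, so $\rho_{t_k}+\psi_{t_k}<0$; together with $\gamma_{t_k}<0$ (cf.~\eqref{eq:Dneg}) and the nondegeneracy $\mu_p^{+}+\mu_p^{-}>0$ (which holds whenever some buy or sell MO can walk strictly into the book), the prefactor is strictly positive. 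For the bracketed factor, insert the convex representation of $\alpha^{1+}_\tkk$ found above, bound $-\alpha^{1+}_\tkk\mut/\muo\ge-\alpha^{1+}_\tkk\muo$ via $\mut/\muo\ge\muo$ (equivalently $\mut\ge\muo^2$) and $\alpha^{1+}_\tkk\le0$, and check, using $0\le\pi_\tkk(1,1)\le\pi_\tkk\le1$ and $\alpha^{(1)}_\tkk,\alpha^{1,1}_\tkk\le0$, that the two leftover terms are $\ge0$; this gives $\tfrac12-\alpha^{1+}_\tkk\mut/\muo+\pi_\tkk(1,1)\alpha^{1,1}_\tkk\muo\ge\tfrac12>0$. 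Hence ${}^{\scaleto{(3)}{5pt}}\!A^+_{\tk}+{}^{\scaleto{(3)}{5pt}}\!A^-_{\tk}>0$, i.e.\ $a_{\tk}>b_{\tk}$ for every $k\in\{0,\dots,N\}$.

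The step I expect to be the main obstacle is the bracket estimate: it requires making fully explicit, via condition~(4), the structural relation between $\alpha^{1,1}_\tkk$ and $\alpha^{1\pm}_\tkk$ (the latter being the $\pi_\tkk(1,1)/\pi_\tkk$-mixture of the former and the single-arrival value $\alpha^{(1)}_\tkk$) and combining it with the elementary moment inequality $\mut\ge\muo^2$; a secondary delicate point is the algebraic bookkeeping needed to reduce \eqref{eqn:gamma:eta}--\eqref{eq:A1} under the symmetry hypotheses and to confirm that the $I_\tk$- and $h$-dependent parts genuinely drop out of the spread.
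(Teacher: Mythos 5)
Your proposal is correct in substance and its first half coincides with the paper's argument: the derivation of $\alpha^{1+}_{\tkk}=\alpha^{1-}_{\tkk}$ and $h^{1+}_{\tkk}=h^{1-}_{\tkk}$ from conditions (2) and (4) via the mixture representation \eqref{eqn:alpha1p}--\eqref{eqn:alpha1m}, the resulting identities ${}^{\scaleto{(1)}{5pt}}\!A^+_{\tk}={}^{\scaleto{(1)}{5pt}}\!A^-_{\tk}$ and ${}^{\scaleto{(2)}{5pt}}\!A^+_{\tk}={}^{\scaleto{(2)}{5pt}}\!A^-_{\tk}$, and the reduction of the general-midprice case to the martingale case via $\rho_{t_k}^+=\rho_{t_k}^-$, $\psi_{t_k}^+=\psi_{t_k}^-$ are exactly the steps taken in Appendix \ref{App:proof:admissible}. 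Where you genuinely diverge is the positivity of ${}^{\scaleto{(3)}{5pt}}\!A^+_{\tk}+{}^{\scaleto{(3)}{5pt}}\!A^-_{\tk}$: the paper writes this sum as $\frac{1}{2\gamma_{t_k}}(\mu_p^+\Gamma^++\mu_p^-\Gamma^-)$ and establishes $\Gamma^\pm(\mut^+,\mut^-)<0$ by a two-variable monotonicity argument (showing the partial derivatives in $(\mathbf{x},\mathbf{y})$ are negative, sliding the second moments down to $((\muo^+)^2,(\muo^-)^2)$, and then checking three terms by hand), whereas you exploit the symmetry fully to factor the sum as $\frac{(\rho_{t_k}+\psi_{t_k})(\mu_p^++\mu_p^-)}{\gamma_{t_k}}$ times a scalar bracket, reducing everything to a one-line inequality that follows from \eqref{eqn:comparison:alpha} and $\mut\geq\muo^2$. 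Your route is shorter and more transparent; the paper's is more modular in that the $\Gamma^\pm$ computation does not lean on all the symmetries at once. One algebraic slip: carrying out your substitution carefully gives
\begin{equation*}
{}^{\scaleto{(3)}{5pt}}\!A^+_{\tk}+{}^{\scaleto{(3)}{5pt}}\!A^-_{\tk}=\frac{(\rho_{t_k}+\psi_{t_k})(\mu_p^{+}+\mu_p^{-})}{\gamma_{t_k}}\left[\frac12-\frac{\alpha^{1+}_\tkk\mut}{\muo}+\frac{\pi_\tkk(1,1)\,\alpha^{1,1}_\tkk\,\muo}{\pi_\tkk}\right],
\end{equation*}
i.e.\ your last bracket term is missing a factor $1/\pi_\tkk$. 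This is harmless: the corrected bracket is handled by precisely the inequality you invoke, since $\pi_\tkk(1,1)\alpha^{1,1}_\tkk\geq\pi_\tkk\alpha^{1+}_\tkk$ is \eqref{eqn:comparison:alpha} verbatim, and combined with $-\alpha^{1+}_\tkk\mut/\muo\geq-\alpha^{1+}_\tkk\muo$ it yields the bound $\geq\tfrac12$. You should also make explicit that $\muo>0$, $\pi_\tkk>0$, and $\mu_p^++\mu_p^->0$ are needed for strictness of $\rho_{t_k}+\psi_{t_k}<0$ and of the final inequality, as you briefly note.
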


{Conditions \eqref{Cnd1PosSpr} and \eqref{Cnd2PosSpr} are some type of symmetry conditions between the bid and ask sides of the market.} 
Condition
\eqref{Cnd3PosSpr} postulates that the demand {and supply slopes $c_{t_{k+1}}^\pm$ and the} corresponding reservation prices $p_{t_{k+1}}^\pm$ are uncorrelated. These assumptions {are} empirically supported {by our empirical analysis in Section \ref{sec:Data} (see Figure \ref{fig:OptimalPaths} and Table \ref{tab1})}.

{The following result shows that the conditions (2) {and (3)} of Proposition \ref{prop:admissibility} can be relaxed in the case that there is no possibility of simultaneous arrivals of sell and buy market orders in the same subinterval. The latter condition is expected to be met {reasonably well} when the frequency of trades is high enough (i.e., $\max_{k}\{t_{k}-t_{k-1}\}\approx 0$).}

\begin{corollary} \label{cor:cond2:alt}
{Suppose that, for every $k\in\{0,1,\ldots,N+1\}$, the conditional probability $\pi_{\tkk}(1,1) = \mathbbm{P}(\onep=1,\onem=1|\F)=\mathbbm{P}(\onep=1,\onem=1|\Hkk)$ is $0$. Then, regardless of the dynamics of the midprice process, the optimal strategy is admissible if Conditions {(1) and (4)} of Proposition \ref{prop:admissibility} are satisfied.} 
\end{corollary}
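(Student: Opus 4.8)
The plan is to prove that, under $\pi_{\tkk}(1,1)=0$ together with Conditions (1) and (4) of Proposition~\ref{prop:admissibility}, the total optimal spread $\widetilde{a}_{t_k}^{*}-\widetilde{b}_{t_k}^{*}=\widetilde{L}_{\tk}^{+,*}+\widetilde{L}_{\tk}^{-,*}$ (with $\widetilde{L}_{\tk}^{\pm,*}$ from Theorem~\ref{thm:General:Price:controls}) collapses to ${}^{\scaleto{(3)}{5pt}}\!A^{+}_{\tk}+{}^{\scaleto{(3)}{5pt}}\!A^{-}_{\tk}$, a quantity that involves no price forecast, and that this quantity is strictly positive; adaptedness of the controls is inherited from Theorem~\ref{thm:General:Price:controls}, so only the positivity of the spread needs work. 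The first, immediate reduction is that $\pi_{\tkk}(1,1)=0$ forces $\psi^{+}_{t_k}=\psi^{-}_{t_k}=0$ in \eqref{eqn:gamma:eta}, since each $\psi$ carries the factor $\pi_{\tkk}(1,1)$, and then $\gamma_{t_k}$ loses its first term, $\gamma_{t_k}=-\rho^{+}_{t_k}\rho^{-}_{t_k}/(\pi^{+}_{\tkk}\pi^{-}_{\tkk}\muo^{+}\muo^{-})$.

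The first substantive step is to promote Condition (4) into the identities $\alpha^{1+}_{\tkk}=\alpha^{1-}_{\tkk}=:\alpha^{1}_{\tkk}$ and $h^{1+}_{\tkk}=h^{1-}_{\tkk}$. The key observation is that when $\pi_{\tkk}(1,1)=0$, on the event $\{\onep=1\}$ one has $\onem=0$ a.s.\ given $\F$, hence $\onep+\onem=1$ there; symmetrically, on $\{\onem=1\}$ one again has $\onep+\onem=1$. Since, by Condition (4), $\alpha_{\tkk}$ is a deterministic function of $\onep+\onem$ and of the older lags $\mathbbm{1}^{\pm}_{\tk},\dots,\mathbbm{1}^{\pm}_{t_{k-\varpi+2}}$, which are $\F$-measurable, the conditional expectations $\alpha^{1+}_{\tkk}=\E[\alpha_{\tkk}|\F,\onep=1]$ and $\alpha^{1-}_{\tkk}=\E[\alpha_{\tkk}|\F,\onem=1]$ both reduce to evaluating that function at $\onep+\onem=1$, hence coincide; the same applies to $h_{\tkk}$. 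Combining this with Condition (1) ($\muo^{+}=\muo^{-}=:\muo$, $\mut^{+}=\mut^{-}=:\mut$) in \eqref{eqn:gamma:eta} gives $\rho^{+}_{t_k}=\rho^{-}_{t_k}=:\rho_{t_k}=\pi^{+}_{\tkk}\pi^{-}_{\tkk}\muo(\alpha^{1}_{\tkk}\mut-\muo)$ and $\gamma_{t_k}=-\rho_{t_k}^{2}/(\pi^{+}_{\tkk}\pi^{-}_{\tkk}\muo^{2})$. Since $\alpha_{\tkk}<0$ by Lemma~\ref{lemma:alpha} (and $\alpha_{t_{N+1}}=-\lambda<0$) yields $\alpha^{1}_{\tkk}<0$, we get $\rho_{t_k}<0$ and $\gamma_{t_k}<0$ (cf.~\eqref{eq:Dneg}).

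The second step is the cancellation. Feeding $\psi^{\pm}_{t_k}=0$ and the symmetry of $\rho,\alpha^{1},h^{1}$ into \eqref{eq:A1} yields ${}^{\scaleto{(1)}{5pt}}\!A^{+}_{\tk}={}^{\scaleto{(1)}{5pt}}\!A^{-}_{\tk}$ (this recovers \eqref{WNTSqrI}) and ${}^{\scaleto{(2)}{5pt}}\!A^{+}_{\tk}={}^{\scaleto{(2)}{5pt}}\!A^{-}_{\tk}$, so in $L_{\tk}^{+,*}+L_{\tk}^{-,*}$ of Corollary~\ref{cor:candidates} the $I_{\tk}$- and ${}^{\scaleto{(2)}{5pt}}\!A$-contributions cancel, leaving ${}^{\scaleto{(3)}{5pt}}\!A^{+}_{\tk}+{}^{\scaleto{(3)}{5pt}}\!A^{-}_{\tk}$. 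For the price-dependent corrections in \eqref{eq:LtildeNMGbb}, the summands multiplied by $\psi^{\pm}_{t_k}$ (those conditioned on $\{\onem=1\}$) drop out; what remains carries the common scalar $(\rho^{+}_{t_k}-\psi^{+}_{t_k})/(2\gamma_{t_k})=\rho_{t_k}/(2\gamma_{t_k})$ on the ask side and $-(\rho^{-}_{t_k}-\psi^{-}_{t_k})/(2\gamma_{t_k})=-\rho_{t_k}/(2\gamma_{t_k})$ on the bid side, while the inner factor $\prod_{l=k+2}^{i+1}\xi_l$ (with $\xi_l$ from \eqref{eqn:rho:k}, which itself simplifies once $\psi^{\pm}=0$) is the same on both sides. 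Hence every term involving $\Delta_{t_i}^{t_k}$ for $i\geq k$—including the one-step-ahead term with $\Delta_{t_k}^{t_k}$—cancels in the sum, and we conclude $\widetilde{L}_{\tk}^{+,*}+\widetilde{L}_{\tk}^{-,*}={}^{\scaleto{(3)}{5pt}}\!A^{+}_{\tk}+{}^{\scaleto{(3)}{5pt}}\!A^{-}_{\tk}$. Since the forecasts have disappeared, the clause ``regardless of the dynamics of the midprice process'' is automatic.

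The final step is the sign. With $\psi^{\pm}_{t_k}=0$ and Condition (1), \eqref{eq:A1} gives ${}^{\scaleto{(3)}{5pt}}\!A^{\pm}_{\tk}=\rho_{t_k}(\muoo^{\pm}-2\alpha^{1}_{\tkk}\muto^{\pm})/(2\muo\gamma_{t_k})$, so substituting $\gamma_{t_k}$ and $\rho_{t_k}$ from Step 1,
\[
{}^{\scaleto{(3)}{5pt}}\!A^{+}_{\tk}+{}^{\scaleto{(3)}{5pt}}\!A^{-}_{\tk}=\frac{(\muoo^{+}+\muoo^{-})-2\alpha^{1}_{\tkk}(\muto^{+}+\muto^{-})}{2(\muo-\alpha^{1}_{\tkk}\mut)}.
\]
The denominator is $\geq 2\muo>0$ since $\alpha^{1}_{\tkk}<0$ and $\mut>0$, and the numerator is $\geq \muoo^{+}+\muoo^{-}>0$ since $\muoo^{\pm},\muto^{\pm}\geq 0$ and $-\alpha^{1}_{\tkk}\geq 0$, the strict inequality coming from the standing non-degeneracy that the expected demand at the reference price, $\muoo^{+}+\muoo^{-}$, is positive. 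Thus $a_{t_k}>b_{t_k}$ for every $k$, using only Conditions (1) and (4) plus $\pi_{\tkk}(1,1)=0$; Conditions (2) and (3) of Proposition~\ref{prop:admissibility} are never invoked. The main obstacle is the bookkeeping in Step 2—checking that the $\Delta$-dependent corrections of \eqref{eq:LtildeNMGbb} and the factors $\xi_l$ of \eqref{eqn:rho:k} become genuinely bid/ask-symmetric once $\psi^{\pm}=0$—while the conceptual heart is the elementary remark that $\pi_{\tkk}(1,1)=0$ makes conditioning on $\{\onep=1\}$ and on $\{\onem=1\}$ both tantamount to conditioning on $\{\onep+\onem=1\}$, which is exactly what turns Condition (4) into $\alpha^{1+}_{\tkk}=\alpha^{1-}_{\tkk}$ and $h^{1+}_{\tkk}=h^{1-}_{\tkk}$.
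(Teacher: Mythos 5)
Your proposal is correct and takes essentially the same route as the paper's proof: both exploit $\pi_{\tkk}(1,1)=0$ together with Condition (4) to obtain $\alpha^{1+}_{\tkk}=\alpha^{1-}_{\tkk}$ and $h^{1+}_{\tkk}=h^{1-}_{\tkk}$, deduce ${}^{\scaleto{(1)}{5pt}}\!A^{+}_{\tk}={}^{\scaleto{(1)}{5pt}}\!A^{-}_{\tk}$ and ${}^{\scaleto{(2)}{5pt}}\!A^{+}_{\tk}={}^{\scaleto{(2)}{5pt}}\!A^{-}_{\tk}$ so that the spread collapses to ${}^{\scaleto{(3)}{5pt}}\!A^{+}_{\tk}+{}^{\scaleto{(3)}{5pt}}\!A^{-}_{\tk}$, verify its positivity from the same algebraic expression (using $\alpha^{1}_{\tkk}\le 0$ and the positivity of $\mu_{cp}^{\pm}$, $\mu_{c^2p}^{\pm}$), and dispose of the general midprice case by noting $\rho^{+}_{t_k}=\rho^{-}_{t_k}$ and $\psi^{\pm}_{t_k}=0$ make the forecast corrections cancel. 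The only difference is presentational: you reduce the sign check to a single simplified ratio, whereas the paper argues that the numerator and the denominator $\gamma_{t_k}$ are both negative.
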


{The most technical condition {in the results above} is (4), which can also be interpreted as another symmetry assumption}.
This condition could be difficult to verify due to the intrinsic complexity of the recursive formulas \eqref{eqn:alphak} and \eqref{eqn:htk}. 
 In the {Lemma \ref{lemma:gfunc} below, we show that it suffices to pick the function $g:\R^{2\varpi}\to\R^d$ introduced in Eq.~\eqref{eqn:gfunc:pi} such that it depends on $\pmb{e}_{t_k}^+:=(\mathbbm{1}_{t_{k}}^+,\ldots,\mathbbm{1}_{t_{k-\varpi+1}}^+)$ and $\pmb{e}_{t_k}^-:=(\mathbbm{1}_{t_{k}}^-,\ldots,\mathbbm{1}_{t_{k-\varpi+1}}^-)$ through $\pmb{e}_{t_k}^++\pmb{e}_{t_k}^-$.}
 
 \begin{lemma} \label{lemma:gfunc} {If the function $g$ in Eq.~\eqref{eqn:gfunc:pi} is of the form $g(\pmb{e}_\tk)=\varphi(\pmb{e}_{t_k}^++\pmb{e}_{t_k}^-)$ for some $\varphi: \{0,1,2\}^{\varpi}\to\mathbb{R}$, then $\alpha_{t_k}$ and $h_{t_k}$ will depend on $\pmb{e}_{t_k}^+$ and $\pmb{e}_{t_k}^-$ only thorugh $\pmb{e}_{t_k}^++\pmb{e}_{t_k}^-$. In particular, they depend on $\mathbbm{1}_{t_{k}}^+$ and $\mathbbm{1}_{t_{k}}^-$ only through $\mathbbm{1}_{t_{k}}^++\mathbbm{1}_{t_{k}}^-$ and the condition (4) in Proposition \ref{prop:admissibility} is satisfied.}
\end{lemma}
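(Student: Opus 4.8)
The plan is to run an induction backward in time on $k$, proving simultaneously the stronger statement that, under the hypothesis $g(\pmb{e}_{t_k})=\varphi(\pmb{e}_{t_k}^++\pmb{e}_{t_k}^-)$, all of the $\Hkk$-measurable coefficients $\alpha_{t_k}$, $h_{t_k}$, $g_{t_k}$ (and hence $\widetilde h_{t_k}$, $\widetilde g_{t_k}$ when the price is not a martingale) are measurable functions of $\pmb{e}_{t_k}^++\pmb{e}_{t_k}^-$ alone. The base case is immediate: at $k=N+1$ the terminal values $\alpha_{t_{N+1}}=-\lambda$ and $h_{t_{N+1}}=g_{t_{N+1}}=0$ are constants, so they trivially depend only on the symmetric combination. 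For the inductive step I would fix $k$, assume the claim at level $k+1$, and inspect the recursions \eqref{eqn:alphak}, \eqref{eqn:htk}, \eqref{eq:g} that express $\alpha_{t_k}, h_{t_k}, g_{t_k}$ in terms of level-$(k+1)$ data.

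The key observation is that every ingredient appearing on the right-hand sides of those recursions is either (a) a conditional probability $\pi_{t_{k+1}}^\pm$ or $\pi_{t_{k+1}}(1,1)$, or (b) a conditional expectation of the form $u_{t_{k+1}}^{0}$, $u_{t_{k+1}}^{1\pm}$, $u_{t_{k+1}}^{1,1}$ for $u\in\{\alpha,h\}$ (recall the notation \eqref{Dfnalh0}), or (c) one of the nonrandom demand moments $\mu^\pm_{c^m p^n}$, or (d) an algebraic combination of (a)--(c) such as the $\rho_{t_k}^\pm$, $\psi_{t_k}^\pm$, $\gamma_{t_k}$ and the coefficients ${}^{\scaleto{(1)}{5pt}}\!A^\pm_{\tk}, {}^{\scaleto{(2)}{5pt}}\!A^\pm_{\tk}, {}^{\scaleto{(3)}{5pt}}\!A^\pm_{\tk}$. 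Items (c) are just constants, so it suffices to handle (a) and (b). For (a): by Assumption \ref{assump:cp}(iv), $\pi^\pm_{t_{k+1}}=f^\pm(g(\pmb{e}_{t_k}))$ and $\pi_{t_{k+1}}(1,1)=f(g(\pmb{e}_{t_k}))$, and by hypothesis $g(\pmb{e}_{t_k})=\varphi(\pmb{e}_{t_k}^++\pmb{e}_{t_k}^-)$, so all these probabilities are functions of $\pmb{e}_{t_k}^++\pmb{e}_{t_k}^-$ — note in particular that on this event $\mathbbm{1}^+_{t_k}$ and $\mathbbm{1}^-_{t_k}$ enter only through their sum, which is the $k$-level analogue of the statement we want, already available from $g$'s form.

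The heart of the argument, and the step I expect to require the most care, is (b): showing that conditional expectations like $\alpha^0_{t_{k+1}}=\mathbb{E}[\alpha_{t_{k+1}}\mid\F]$ and $\alpha^{1\pm}_{t_{k+1}}=\mathbb{E}[\alpha_{t_{k+1}}\mid\F,\mathbbm{1}^\pm_{t_{k+1}}=1]$, and likewise $\alpha^{1,1}_{t_{k+1}}$, inherit dependence only on $\pmb{e}_{t_k}^++\pmb{e}_{t_k}^-$. By the inductive hypothesis $\alpha_{t_{k+1}}=\Psi(\pmb{e}_{t_{k+1}}^++\pmb{e}_{t_{k+1}}^-)$ for some function $\Psi$, and since $\pmb{e}_{t_{k+1}}=(\mathbbm{1}^\pm_{t_{k+1}},\pmb{e}_{t_k})$ (dropping the oldest lag), we have $\pmb{e}_{t_{k+1}}^++\pmb{e}_{t_{k+1}}^-=(\mathbbm{1}^+_{t_{k+1}}+\mathbbm{1}^-_{t_{k+1}},\,\text{shifted part of }\pmb{e}_{t_k}^++\pmb{e}_{t_k}^-)$. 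Conditioning on $\F\supset\Hkk$ and on the value of $\mathbbm{1}^+_{t_{k+1}}+\mathbbm{1}^-_{t_{k+1}}$ therefore fixes $\alpha_{t_{k+1}}$ as a function of $\pmb{e}_{t_k}^++\pmb{e}_{t_k}^-$; the remaining task is to show that the weights used to average over $\mathbbm{1}^\pm_{t_{k+1}}$ also depend only on $\pmb{e}_{t_k}^++\pmb{e}_{t_k}^-$, which is exactly item (a) applied at level $k$. The only subtlety is that the conditioning events $\{\mathbbm{1}^+_{t_{k+1}}=1\}$ and $\{\mathbbm{1}^-_{t_{k+1}}=1\}$ are not themselves symmetric, but $\alpha_{t_{k+1}}$ depends on them only via their sum, so on $\{\mathbbm{1}^+_{t_{k+1}}=1\}$ one splits according to whether $\mathbbm{1}^-_{t_{k+1}}=0$ or $1$ (i.e. sum $=1$ or $2$) with conditional probabilities $\pi_{t_{k+1}}(1,0)/\pi^+_{t_{k+1}}$ and $\pi_{t_{k+1}}(1,1)/\pi^+_{t_{k+1}}$, both of which are functions of $\pmb{e}_{t_k}^++\pmb{e}_{t_k}^-$ by \eqref{eq1234} and the symmetric form of $g$. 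Carrying this out term by term through \eqref{eqn:alphak}--\eqref{eq:g} closes the induction for $\alpha_{t_k}$ and $h_{t_k}$; the assertion about $\mathbbm{1}^+_{t_k},\mathbbm{1}^-_{t_k}$ entering only through their sum is then the special case obtained by reading off the first coordinate, giving condition (4) of Proposition \ref{prop:admissibility}. $\qed$
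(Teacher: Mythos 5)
Your proposal is correct and follows essentially the same route as the paper: a backward induction whose base case is the constant terminal data and whose inductive step reduces the recursion for $\alpha_{t_k}$ (and $h_{t_k}$) to showing that $\pi^\pm_{t_{k+1}}$, $\pi_{t_{k+1}}(1,1)$ and the conditional expectations $\alpha^0_{t_{k+1}},\alpha^{1\pm}_{t_{k+1}},\alpha^{1,1}_{t_{k+1}}$ are functions of $\pmb{e}_{t_k}^++\pmb{e}_{t_k}^-$, by writing $\alpha_{t_{k+1}}=\Phi(\pmb{e}_{t_{k+1}}^++\pmb{e}_{t_{k+1}}^-)$ and expanding against the joint probabilities from \eqref{eq1234}. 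The only cosmetic difference is that the paper organizes the step through the compact decomposition $\alpha_{t_k}=\alpha^0_{t_{k+1}}+N_k/\gamma_{t_k}$ rather than term by term through \eqref{eqn:alphak}--\eqref{eq:g}.
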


\subsection{Inventory Analysis of the optimal strategy} \label{sec:invent:analysis}
{In this section we analyze the behavior of the inventory under the optimal strategy found in Section \ref{sec:optimal:martingale}, when $\pi_{t_{k+1}}(1,1)\equiv0$ and some symmetry conditions are satisfied. 
 As explained above, the first condition is reasonable 
 when the trading frequency is high enough. 
The following result shows that when the initial inventory is $0$, then the expected inventory stays at $0$ at all future times. The  proofs of this part are deferred to the Appendix \ref{App:proof:inventory}.
\begin{proposition}\label{ExpFutInv}
{\Blue Suppose that the Assumptions of Theorem \ref{prop:measurability}, the conditions (1)-(2) of Proposition \ref{prop:admissibility}, and the condition of Lemma \ref{lemma:gfunc} are satisfied. We also assume that $\pi_{t_{k+1}}(1,1)\equiv0$ and} $\mu_{cp}^+=\mu_{cp}^-$ and $\mu_{c^2 p}^+=\mu_{c^2 p}^-$.
Then, {\Blue under the optimal placement strategy of Corollary \ref{cor:candidates}}, for any $k$, we have 
\begin{align}\label{Krel1a}
	{\rm (i)}\;\mathbb{E}\left[I_{t_{k+1}}|\mathcal{F}_{t_{k}}\right]=\left(1+2\pi_{t_{k+1}}\frac{\muo^2\alpha^{1}_{\tkk}}{\muo-\alpha^{1}_{\tkk}\mut}\right)I_{t_k},\qquad 
{\rm (ii)}\; \mathbb{E}[I_{t_k}]=0,
\end{align}
where, per the proof of Corollary \ref{cor:cond2:alt},  $\alpha^{1}_{\tkk}:=\alpha^{1+}_{\tkk}=\alpha^{1-}_{\tkk}$.
\end{proposition}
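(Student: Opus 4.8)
The plan is, first, to reduce the coefficients in \eqref{eqn:gamma:eta}--\eqref{eq:A1} to their symmetric form under the standing hypotheses, then to prove (i) by a single conditioning step, and finally to deduce (ii) from a forward induction on $k$ carrying a hypothesis stronger than $\E[I_{t_k}]=0$. For the first step: since $\pi_\tkk(1,1)\equiv0$ we have $\psi^\pm_{t_k}\equiv0$, whence $\gamma_{t_k}=-\,\rho^+_{t_k}\rho^-_{t_k}/(\pi^+_\tkk\pi^-_\tkk\muo^+\muo^-)<0$ (cf.~\eqref{eq:Dneg}); by Lemma \ref{lemma:gfunc} (whose hypothesis is assumed) $\alpha_\tkk$ and $h_\tkk$ depend on $(\onep,\onem)$ only through $\onep+\onem$, and since each of $\{\onep=1\}$ and $\{\onem=1\}$ forces $\onep+\onem=1$ (as $\pi_\tkk(1,1)\equiv0$), we get $\alpha^{1+}_\tkk=\alpha^{1-}_\tkk=:\alpha^1_\tkk$ and $h^{1+}_\tkk=h^{1-}_\tkk$. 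Combined with \eqref{Cnd1PosSpr}--\eqref{Cnd2PosSpr} this gives $\rho^+_{t_k}=\rho^-_{t_k}$, and then \eqref{WNTSqrI} and the ($\psi\equiv0$) form of \eqref{eq:A1} yield ${}^{\scaleto{(1)}{5pt}}\!A^+_\tk={}^{\scaleto{(1)}{5pt}}\!A^-_\tk=:{}^{\scaleto{(1)}{5pt}}\!A_\tk=\muo\alpha^1_\tkk/(\muo-\alpha^1_\tkk\mut)$, which is well defined and strictly negative ($\alpha^1_\tkk<0$ by Lemma \ref{lemma:alpha}), together with ${}^{\scaleto{(3)}{5pt}}\!A^+_\tk={}^{\scaleto{(3)}{5pt}}\!A^-_\tk=:{}^{\scaleto{(3)}{5pt}}\!A_\tk$. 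It remains to show ${}^{\scaleto{(2)}{5pt}}\!A^\pm_\tk\equiv0$; since ${}^{\scaleto{(2)}{5pt}}\!A^\pm_\tk=h^{1\pm}_\tkk\rho^\pm_{t_k}/(2\gamma_{t_k})$ when $\psi\equiv0$, it suffices to prove $h_{t_k}\equiv0$ for all $k$, which I would do by backward induction from $h_{t_{N+1}}=0$: if $h_{t_{k+1}}\equiv0$ then $h^0_\tkk=h^{1\pm}_\tkk=0$, hence ${}^{\scaleto{(2)}{5pt}}\!A^\pm_\tk=0$; in \eqref{eqn:htk} the final $\pi_\tkk(1,1)$-term vanishes and, using $\muo^+=\muo^-$, $\mut^+=\mut^-$, $\muoo^+=\muoo^-$, $\muto^+=\muto^-$, $\pi^+_\tkk=\pi^-_\tkk$, $\alpha^{1+}_\tkk=\alpha^{1-}_\tkk$, ${}^{\scaleto{(1)}{5pt}}\!A^+_\tk={}^{\scaleto{(1)}{5pt}}\!A^-_\tk$ and ${}^{\scaleto{(3)}{5pt}}\!A^+_\tk={}^{\scaleto{(3)}{5pt}}\!A^-_\tk$, every summand inside the braces of \eqref{eqn:htk} is odd under $\delta\mapsto-\delta$, so $\sum_{\delta=\pm}\pi^\delta_\tkk\{\cdots\}=0$ and $h_{t_k}\equiv0$. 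Consequently, under the strategy of Corollary \ref{cor:candidates}, $L^{+,*}_\tk={}^{\scaleto{(1)}{5pt}}\!A_\tk I_\tk+{}^{\scaleto{(3)}{5pt}}\!A_\tk$, $L^{-,*}_\tk=-{}^{\scaleto{(1)}{5pt}}\!A_\tk I_\tk+{}^{\scaleto{(3)}{5pt}}\!A_\tk$, so $L^{+,*}_\tk-L^{-,*}_\tk=2\,{}^{\scaleto{(1)}{5pt}}\!A_\tk I_\tk$.

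For (i), starting from \eqref{eqi11} with these controls I would condition first on $\Gt=\sigma(\F,\onep,\onem)$: since $L^{\pm,*}_\tk\in\F\subseteq\Gt$ and $\onep\,h^+_{1,1}(\onep)=\onep\muoo^+$, $\onep\,h^+_{1,0}(\onep)=\onep\muo^+$, \eqref{eqn:mucpa0} gives $\E[\onep\cp(\Pp-L^{+,*}_\tk)\mid\Gt]=\onep(\muoo^+-L^{+,*}_\tk\muo^+)$, and similarly on the bid side; taking $\E[\,\cdot\mid\F]$ and invoking \eqref{Cnd1PosSpr}--\eqref{Cnd2PosSpr} yields $\E[Q^\pm_\tkk\mid\F]=\pi_\tkk(\muoo-L^{\pm,*}_\tk\muo)$. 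Substituting into \eqref{eqi11} and using $L^{+,*}_\tk-L^{-,*}_\tk=2\,{}^{\scaleto{(1)}{5pt}}\!A_\tk I_\tk$,
\[
\E[I_\tkk\mid\F]=I_\tk+\pi_\tkk\muo\big(L^{+,*}_\tk-L^{-,*}_\tk\big)=\Big(1+2\pi_\tkk\frac{\muo^2\alpha^1_\tkk}{\muo-\alpha^1_\tkk\mut}\Big)I_\tk,
\]
which is (i).

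For (ii), I first observe that the naive induction $\E[I_{t_k}]=0\Rightarrow\E[I_\tkk]=0$ fails, because the bracket in (i) is random and correlated with $I_\tk$. Instead I would prove, by induction on $k$, that $\E[I_{t_k}X]=0$ for every bounded $X$ measurable with respect to $\mathcal{S}_k:=\sigma(\mathbbm{1}^+_{t_i}+\mathbbm{1}^-_{t_i}:i\le k)$ (equivalently, $\E[I_{t_k}\mid\mathcal{S}_k]=0$). The base case $k=0$ holds because $I_{t_0}=0$. For the step, write $X=X^{(s)}$ with $s:=\onep+\onem$ and $X^{(0)},X^{(1)}$ $\mathcal{S}_k$-measurable; since $\{\onep=1\}$ or $\{\onem=1\}$ forces $s=1$, we have $X\onep=X^{(1)}\onep$ and $X\onem=X^{(1)}\onem$ a.s. Repeating the conditioning of (i) with the extra factor $X$ (first on $\Gt\supseteq\sigma(s)$, then on $\F$, using $\E[\onep\mid\F]=\E[\onem\mid\F]=\pi_\tkk$ and $L^{+,*}_\tk-L^{-,*}_\tk=2\,{}^{\scaleto{(1)}{5pt}}\!A_\tk I_\tk$) gives $\E[I_\tkk X\mid\F]=I_\tk\big(\E[X\mid\F]+2X^{(1)}\pi_\tkk\muo\,{}^{\scaleto{(1)}{5pt}}\!A_\tk\big)$. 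By Lemma \ref{lemma:gfunc}, $\pi_\tkk$ and $\alpha^1_\tkk$ (hence ${}^{\scaleto{(1)}{5pt}}\!A_\tk$) are $\mathcal{S}_k$-measurable, and $\E[X\mid\F]=2\pi_\tkk X^{(1)}+(1-2\pi_\tkk)X^{(0)}$ is an $\mathcal{S}_k$-measurable combination of $X^{(0)},X^{(1)}$, so the coefficient of $I_\tk$ above is $\mathcal{S}_k$-measurable; taking expectations and applying the inductive hypothesis yields $\E[I_\tkk X]=0$. Taking $X\equiv1$ gives $\E[I_{t_k}]=0$ for every $k$, which is (ii).

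The hard part will be the identity $h_{t_k}\equiv0$ inside the first step: it requires expanding the recursion \eqref{eqn:htk} and checking that, once ${}^{\scaleto{(2)}{5pt}}\!A^\pm_\tk$ and the $\pi(1,1)$-term have dropped out, each surviving $\delta$-summand is antisymmetric under $\delta\mapsto-\delta$; everything else is bookkeeping (and verifying that the $\alpha$-recursion \eqref{eqn:alphak}, hence the coefficients ${}^{\scaleto{(1)}{5pt}}\!A^\pm_\tk$, ${}^{\scaleto{(3)}{5pt}}\!A^\pm_\tk$, never involve $h$, so there is no circularity). The only other point to watch is that (ii) genuinely needs the strengthened inductive statement used above rather than the bare claim $\E[I_{t_k}]=0$.
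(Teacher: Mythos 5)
Your argument is correct, and the preliminary reduction (symmetrization of the coefficients, the backward induction giving $h_{t_k}\equiv 0$ and hence ${}^{\scaleto{(2)}{5pt}}\!A^{\pm}_{\tk}\equiv 0$) together with the one-step conditioning that yields (i) coincides with what the paper does. Where you genuinely diverge is part (ii). The paper first unrolls the recursion into the explicit representation $I_{t_{k+1}}=\sum_{j}\xi_{t_j}\prod_{\ell=j}^{k}(1+\eta_{t_{\ell+1}}\,{}^{\scaleto{(1)}{5pt}}\!A_{t_\ell})$, proves by a separate backward induction that $\E\big[\prod_{\ell=i}^{k}(1+\eta_{t_{\ell+1}}\,{}^{\scaleto{(1)}{5pt}}\!A_{t_\ell})\,\big|\,\mathcal{F}_{t_i}\big]$ is a function of the symmetrized history $\pmb{e}_{t_i}^++\pmb{e}_{t_i}^-$, and then checks term by term that the product against $\xi_{t_j}$ has zero expectation because the $(1,0)$ and $(0,1)$ contributions cancel under $\mu_{cp}^+=\mu_{cp}^-$ and ${}^{\scaleto{(3)}{5pt}}\!A^+={}^{\scaleto{(3)}{5pt}}\!A^-$. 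You instead strengthen the inductive statement to $\E[I_{t_k}\mid\mathcal{S}_k]=0$ with $\mathcal{S}_k=\sigma(\mathbbm{1}_{t_i}^{+}+\mathbbm{1}_{t_i}^{-}:i\le k)$ and run a single forward induction; the same two structural facts (measurability of $\pi_{\tkk}$ and ${}^{\scaleto{(1)}{5pt}}\!A_{\tk}$ with respect to the symmetrized history, via Lemma \ref{lemma:gfunc}, and the cancellation of the drift of the innovation under the symmetry hypotheses) do all the work, but packaged as the $\mathcal{S}_k$-measurability of the random coefficient multiplying $I_{\tk}$ in $\E[I_{\tkk}X\mid\F]$. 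Your route avoids the explicit product representation entirely and makes transparent exactly why the naive induction on $\E[I_{t_k}]=0$ fails (the bracket in (i) is correlated with $I_{\tk}$), at the cost of having to identify the right conditioning $\sigma$-field up front; the paper's route is more computational but exhibits the inventory explicitly as a weighted sum of conditionally centered innovations. Both are complete; just make sure $\mathcal{S}_k$ is taken to include the burn-in indicators with negative indices so that $\pmb{e}_{t_k}^++\pmb{e}_{t_k}^-$ is $\mathcal{S}_k$-measurable for small $k$.
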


{\Blue Note that \eqref{Krel1a}-(ii) does not follow directly from \eqref{Krel1a}-(i) as in the nonadaptive case of \cite{zoe} because $\alpha^{1}_{\tkk}$ is random here and correlated to $I_{t_k}$. The derivation of \eqref{Krel1a}-(ii) requires an explicit representation of $I_{t_k}$ and heavily depends on the symmetry condition of Lemma \ref{lemma:gfunc}.}
Recalling from Lemma \ref{lemma:alpha} that $\alpha^{1}_{\tkk}\leq0$ and {\Blue since $\pi_{t_{k+1}}\mu_{c}^2\leq \mu_{c}^2\leq{}\mu_{c^2}$}, 
we have $-1\leq$ {$1+2\pi_{t_{k+1}}\frac{\muo^2\alpha^{1}_{\tkk}}{\muo-\alpha^{1}_{\tkk}\mut}\leq 1$.} Thus, by Equation  \eqref{Krel1a}-(i), it follows that
\begin{align}\label{InqHNo}
\Big|\mathbb{E}\left[I_{t_{k+1}}|\mathcal{F}_{t_{k}}\right]\Big|\leq |I_{t_k}|.
\end{align}
In particular, if $I_{t_{k}}>0$ ($I_{t_{k}}<0$), then $\mathbb{E}\left[I_{t_{k+1}}|\mathcal{F}_{t_{k}}\right]\leq I_{t_k}$ ($\mathbb{E}\left[I_{t_{k+1}}|\mathcal{F}_{t_{k}}\right]\geq I_{t_k}$), hence the inventory is mean-reverting toward $0$.

\subsection{Optimal MM strategy under running inventory penalization} \label{sec:invent:runningpen}
{\Blue In continuous-time settings, \cite{Guilbaud}, \cite{cartea2014buy}, and others have advocated for a running inventory penalty to further control the inventory risk. Under this control, the performance criterion is $\mathbb{E}\left[\left.W_T-S_TI_T-\lambda I_T^2-\phi\int_{t}^{T}I_s^2ds\right|\mathcal{F}_{t}\right]$. In discrete-time, the analogous value function at time $t_k$ naturally takes the form:
\begin{equation}\label{eqn:runn:penalty}
V_{t_k}:=\sup_{(L^\pm_{t_k},\dots,L^\pm_{t_N})\in  \mathcal{A}_{{t_k,t_N}}}\mathbb{E}\left[\left.W_T+S_T I_T -\lambda I_T^2-\phi\sum\limits_{j=k+1}^{N+1}I_{t_j}^2\right|\F\right], \quad  k=0,1,\dots,N+1,
\end{equation}
where as usual $\sum_{j=N+2}^{N+1}=0$. The dynamic programming principle corresponding to \eqref{eqn:runn:penalty} can then be written, for $k=0,\dots,N$,  as 
\begin{equation}\label{eqn:new:DPP}
    V_{t_k} = \sup_{L^\pm_{t_k} \in \mathcal{A}_{t_k}} \mathbb{E}\Big[V_{t_{k+1}}-\phi I_{t_{k+1}}^2\,\Big|\,\F\Big],
    \end{equation}
starting with $V_{t_{N+1}} =W_T+S_T I_T -\lambda I_T^2$. The heuristics behind \eqref{eqn:new:DPP} is classical: 
\begin{align*}
V_{t_k}
	&=\sup_{(L^\pm_{t_k},\dots,L^\pm_{t_N})\in  \mathcal{A}_{{t_k,t_N}}}\mathbb{E}\Big[\mathbb{E}\Big[W_T+S_T I_T -\lambda I_T^2-\phi\sum\limits_{j=k+2}^{N+1}I_{t_j}^2\Big|\mathcal{F}_{t_{k+1}}\Big]-\phi I_{t_{k+1}}^2\Big|\F\Big]\\
	&=\sup_{L^\pm_{t_k} \in \mathcal{A}_{t_k}}\mathbb{E}\Big[\sup_{(L^\pm_{t_{k+1}},\dots,L^\pm_{t_N})\in  \mathcal{A}_{{t_{k+1},t_N}}}\mathbb{E}\Big[W_T+S_T I_T -\lambda I_T^2-\phi\sum\limits_{j=k+2}^{N+1}I_{t_j}^2\Big|\mathcal{F}_{t_{k+1}}\Big]-\phi I_{t_{k+1}}^2\Big|\F\Big]\\
	&=\sup_{L^\pm_{t_k} \in \mathcal{A}_{t_k}} \mathbb{E}\Big[V_{t_{k+1}}-\phi I_{t_{k+1}}^2\,\Big|\,\F\Big].
\end{align*}
The following result, whose proof is deferred to Appendix \ref{App:run:inv}, formalizes the heuristics above. Specifically, this shows that, under \eqref{eqn:runn:penalty}, most of the results considered in this paper follow with minor modifications.}
\begin{theorem} \label{lemma:runn:penalty}
{\Blue Suppose that the setting and assumptions of Theorem \ref{prop:measurability} and Corollary \ref{cor:candidates} hold true. Then, the following statements hold:
\begin{enumerate}
	\item The conclusions of Theorem \ref{prop:measurability} and Corollary \ref{cor:candidates} follow with \eqref{eqn:new:DPP} replacing \eqref{eqn:optimalprob} and an ansatz of the form
\begin{equation}\label{eqn:ansatzPhi}	
	V_{t_k}=W_\tk+\sk I_\tk+\,{}^{\scaleto{(\phi)}{5pt}}\!\alpha_{\tk}I^2_\tk+\,{}^{\scaleto{(\phi)}{5pt}}\!h_{\tk}I_\tk + \,{}^{\scaleto{(\phi)}{5pt}}\!g_{t_k}.
\end{equation}
The optimal controls ${}^{\scaleto{(\phi)}{5pt}}\!L_{\tk}^{\pm,*}$, $k=0,\dots,N$ are then given by
	\begin{align}\label{eqn:new:OptimalL}
	\begin{split}
	{}^{\scaleto{(\phi)}{5pt}}\!L_{\tk}^{+,*}&={}^{\scaleto{(1)}{5pt}}\!A^{+,\phi}_{\tk}I_\tk+{}^{\scaleto{(2)}{5pt}}\!A^{+,\phi}_{\tk}+{}^{\scaleto{(3)}{5pt}}\!A^{+,\phi}_{\tk},\\
	{}^{\scaleto{(\phi)}{5pt}}\!L_{\tk}^{-,*}&=-\,{}^{\scaleto{(1)}{5pt}}\!A^{-,\phi}_{\tk}I_\tk-{}^{\scaleto{(2)}{5pt}}\!A^{-,\phi}_{\tk}+{}^{\scaleto{(3)}{5pt}}\!A^{-,\phi}_{\tk},
	\end{split}
	\end{align}
with the coefficients ${}^{\scaleto{(1)}{5pt}}\!A^{\pm,\phi}_{\tk}$, ${}^{\scaleto{(2)}{5pt}}\!A^{\pm,\phi}_{\tk}$, and ${}^{\scaleto{(3)}{5pt}}\!A^{\pm,\phi}_{\tk}$ taking the same form as \eqref{eq:A1}, but with $\alpha^{1\pm}_{\tkk}$ and $\alpha^{1,1}_\tkk$ replaced with ${}^{\scaleto{(\phi)}{5pt}}\!\alpha^{1\pm}_{\tkk}-\phi$ and ${}^{\scaleto{(\phi)}{5pt}}\!\alpha^{1,1}_\tkk-\phi$. Here, ${}^{\scaleto{(\phi)}{5pt}}\!\alpha_{\tk}$, ${}^{\scaleto{(\phi)}{5pt}}\!h_{\tk}$, and ${}^{\scaleto{(\phi)}{5pt}}\!g_{\tk}$ follow the same formulas as \eqref{eqn:alphak}-\eqref{eq:g}, but with $\alpha^0_{\tkk}$, $\alpha^{1\pm}_{\tkk}$, $\alpha^{1,1}_\tkk$, $h^0_{\tkk}$, $g^0_{\tkk}$, and  ${}^{\scaleto{(\ell)}{5pt}}\!A^{\pm}_{\tk}$ replaced with ${}^{\scaleto{(\phi)}{5pt}}\!\alpha^{0}_{\tkk}-\phi$, ${}^{\scaleto{(\phi)}{5pt}}\!\alpha^{1\pm}_{\tkk}-\phi$, ${}^{\scaleto{(\phi)}{5pt}}\!\alpha^{1,1}_\tkk-\phi$,  ${}^{\scaleto{(\phi)}{5pt}}\!h^0_{\tkk}$, ${}^{\scaleto{(\phi)}{5pt}}\!g^0_{\tkk}$, and ${}^{\scaleto{(\ell)}{5pt}}\!A^{\pm,\phi}_{\tk}$, respectively, on the right-hand side of all the formulas. 
\item The conclusion of Lemma \ref{lemma:alpha} also follows under the new running inventory penalty; i.e., 
\[
	{}^{\scaleto{(\phi)}{5pt}}\!\alpha_\tkk^{0}<{}^{\scaleto{(\phi)}{5pt}}\!\alpha_\tk<0,\quad \text{for any }\;k\in\{0,1,\ldots,N\}.
\]	
\item The verification Theorem \ref{VeriThrm1} holds true with the value function (\ref{OOCP0}) replaced with \eqref{eqn:runn:penalty}.
\end{enumerate}}
\end{theorem}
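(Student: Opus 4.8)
The plan is to absorb the running penalty $-\phi I_{\tkk}^2$ appearing in the new Bellman equation \eqref{eqn:new:DPP} into the quadratic coefficient of the value function, so that every computation behind Theorem \ref{prop:measurability}, Corollary \ref{cor:candidates}, Lemma \ref{lemma:alpha}, and Theorem \ref{VeriThrm1} transfers after the single bookkeeping replacement of $\alpha_{\tkk}$ by ${}^{\scaleto{(\phi)}{5pt}}\!\alpha_{\tkk}-\phi$. Concretely, plugging the ansatz \eqref{eqn:ansatzPhi} into \eqref{eqn:new:DPP} and using that $\phi$ is a deterministic constant, one has
\[
V_{\tkk}-\phi I_{\tkk}^2 = W_{\tkk}+\skk I_{\tkk}+\bigl({}^{\scaleto{(\phi)}{5pt}}\!\alpha_{\tkk}-\phi\bigr)I_{\tkk}^2+{}^{\scaleto{(\phi)}{5pt}}\!h_{\tkk} I_{\tkk}+{}^{\scaleto{(\phi)}{5pt}}\!g_{\tkk},
\]
which is exactly the right-hand side of \eqref{eq113} with $\alpha_{\tkk}$ replaced by ${}^{\scaleto{(\phi)}{5pt}}\!\alpha_{\tkk}-\phi$ and $h_{\tkk},g_{\tkk}$ replaced by their $\phi$-counterparts. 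Since $\phi$ is nonrandom, the forecast notation \eqref{Dfnalh0} commutes with the shift, i.e.\ $\bigl({}^{\scaleto{(\phi)}{5pt}}\!\alpha_{\tkk}-\phi\bigr)^0={}^{\scaleto{(\phi)}{5pt}}\!\alpha^0_{\tkk}-\phi$ and likewise for the $1\pm$- and $1,1$-versions. Hence the completion of the square in $L^\pm_{\tk}$ in the proof of Theorem \ref{prop:measurability}, the resulting maximizer, and the recursions \eqref{eqn:alphak}--\eqref{eq:g} reproduce themselves verbatim under this substitution, which yields the coefficients ${}^{\scaleto{(\ell)}{5pt}}\!A^{\pm,\phi}_{\tk}$, the optimal controls \eqref{eqn:new:OptimalL}, and the stated recursions for ${}^{\scaleto{(\phi)}{5pt}}\!\alpha_{\tk},{}^{\scaleto{(\phi)}{5pt}}\!h_{\tk},{}^{\scaleto{(\phi)}{5pt}}\!g_{\tk}$. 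The $\Hkk$-measurability of these coefficients and the terminal conditions ${}^{\scaleto{(\phi)}{5pt}}\!\alpha_{t_{N+1}}=-\lambda$, ${}^{\scaleto{(\phi)}{5pt}}\!h_{t_{N+1}}={}^{\scaleto{(\phi)}{5pt}}\!g_{t_{N+1}}=0$ then follow by the same backward induction as in Theorem \ref{prop:measurability}(ii), again because $\phi$ is nonrandom.

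Part 2 I would prove by the backward induction of Lemma \ref{lemma:alpha} applied to the shifted sequence. The base case is ${}^{\scaleto{(\phi)}{5pt}}\!\alpha_{t_{N+1}}=-\lambda<0$. For the inductive step, the hypothesis ${}^{\scaleto{(\phi)}{5pt}}\!\alpha_{\tkk}<0$ a.s.\ together with $\phi\ge 0$ gives ${}^{\scaleto{(\phi)}{5pt}}\!\alpha^{1\pm}_{\tkk}-\phi\le 0$ and ${}^{\scaleto{(\phi)}{5pt}}\!\alpha^{1,1}_{\tkk}-\phi\le 0$, which are precisely the sign hypotheses invoked in the proof of Lemma \ref{lemma:alpha}; in particular they imply the analogue of \eqref{eq:Dneg}, i.e.\ that the $\phi$-version of $\gamma_{\tk}$ is strictly negative, so that ${}^{\scaleto{(\ell)}{5pt}}\!A^{\pm,\phi}_{\tk}$ are well defined. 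Since this negativity of the $\phi$-version of $\gamma_{\tk}$ is also exactly what the existence claim in part 1 requires, in practice parts 1 and 2 are established by a single joint backward induction. Re-running the estimates of Lemma \ref{lemma:alpha} with $\alpha_{\tkk}$ replaced by ${}^{\scaleto{(\phi)}{5pt}}\!\alpha_{\tkk}-\phi$ then gives ${}^{\scaleto{(\phi)}{5pt}}\!\alpha^0_{\tkk}<{}^{\scaleto{(\phi)}{5pt}}\!\alpha_{\tk}<0$, continuing the induction. I expect the genuine obstacle to be here: one must revisit the intricate chain of inequalities behind Lemma \ref{lemma:alpha} and verify that the nonpositive term $-\phi$—which enters uniformly in the leading term ${}^{\scaleto{(\phi)}{5pt}}\!\alpha^0_{\tkk}-\phi$ and in ${}^{\scaleto{(\phi)}{5pt}}\!\alpha^{1\pm}_{\tkk}-\phi$, ${}^{\scaleto{(\phi)}{5pt}}\!\alpha^{1,1}_{\tkk}-\phi$—does not disturb it, and in particular that the gain produced by the maximization step still dominates $\phi$ so as to preserve the strict comparison ${}^{\scaleto{(\phi)}{5pt}}\!\alpha^0_{\tkk}<{}^{\scaleto{(\phi)}{5pt}}\!\alpha_{\tk}$ (the monotone, bounded behaviour of the recursion map as its input tends to $-\infty$ should supply exactly this extra gain).

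For part 3 I would mirror the verification argument of Theorem \ref{VeriThrm1}. Set $v^{(\phi)}(\tk,s,\mathsf{w},i):=\mathsf{w}+{}^{\scaleto{(\phi)}{5pt}}\!\alpha_{\tk} i^2+s i+{}^{\scaleto{(\phi)}{5pt}}\!h_{\tk} i+{}^{\scaleto{(\phi)}{5pt}}\!g_{\tk}$, which by the terminal conditions in part 1 equals $W_T+S_TI_T-\lambda I_T^2$ at $\tk=t_{N+1}$. Part 1 gives, for every admissible $L^\pm_{\tk}\in\F$,
\[
\mathbb{E}\bigl[v^{(\phi)}(\tkk,\skk,W_{\tkk},I_{\tkk})-\phi I_{\tkk}^2\,\big|\,\F\bigr]\le v^{(\phi)}(\tk,\sk,W_{\tk},I_{\tk}),
\]
with equality when $L^\pm_{\tk}={}^{\scaleto{(\phi)}{5pt}}\!L^{\pm,*}_{\tk}$. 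Iterating this inequality over $j=k,\dots,N$ and using the tower property, the successive penalty terms telescope into $-\phi\sum_{j=k+1}^{N+1}I_{t_j}^2$, yielding $v^{(\phi)}(\tk,\cdot)\ge\mathbb{E}\bigl[W_T+S_TI_T-\lambda I_T^2-\phi\sum_{j=k+1}^{N+1}I_{t_j}^2\,\big|\,\F\bigr]$ for every admissible strategy and equality along $\{{}^{\scaleto{(\phi)}{5pt}}\!L^{\pm,*}_{\tk}\}$, which is the assertion of part 3. The attendant integrability/admissibility check is the same as in Theorem \ref{VeriThrm1}, since the optimal controls are affine in $I_{\tk}$ and hence all relevant moments propagate along the recursion; this last part is routine once parts 1 and 2 are in hand.
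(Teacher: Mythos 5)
Your overall route is the same as the paper's: substitute the ansatz \eqref{eqn:ansatzPhi} into \eqref{eqn:new:DPP}, absorb the penalty $-\phi I_{\tkk}^2$ into the quadratic coefficient so that the resulting equation is \eqref{eq113} with $\alpha_{\tkk}$ replaced by ${}^{\scaleto{(\phi)}{5pt}}\!\alpha_{\tkk}-\phi$, and then re-run Theorem \ref{prop:measurability}, Corollary \ref{cor:candidates}, Lemma \ref{lemma:alpha}, and Theorem \ref{VeriThrm1} under this substitution. Parts 1 and 3 of your argument match the paper's essentially line by line.

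The gap is in part 2, and it sits exactly where you suspect it does --- but your proposed way through it is not right. You claim that what the proof of Lemma \ref{lemma:alpha} invokes are just the sign conditions ${}^{\scaleto{(\phi)}{5pt}}\!\alpha^{1\pm}_{\tkk}-\phi\le 0$ and ${}^{\scaleto{(\phi)}{5pt}}\!\alpha^{1,1}_{\tkk}-\phi\le 0$. It is not: that proof repeatedly uses the \emph{weighted} inequalities \eqref{eqn:comparison:alpha}, \eqref{ineq:alpha0}, and \eqref{eqn:bounds:alpha11pi11}, which compare products such as $\alpha^{1\pm}_{\tkk}\pi^{\pm}_{\tkk}$ with $\alpha^{1,1}_{\tkk}\pi_{\tkk}(1,1)$ and with $\alpha^{0}_{\tkk}$. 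A uniform shift of the $\alpha$'s by $-\phi$ perturbs these products by the \emph{different} amounts $-\phi\pi^{\pm}_{\tkk}$, $-\phi\pi_{\tkk}(1,1)$, and $-\phi$, so the inequalities do not transfer automatically and must be re-checked. This is precisely the content of the paper's proof: \eqref{eqn:comparison:alpha} survives because $\pi_{\tkk}(1,1)\le\pi^{\pm}_{\tkk}$ gives $-\phi\pi^{\pm}_{\tkk}\le-\phi\pi_{\tkk}(1,1)$; \eqref{ineq:alpha0} survives because $\pi^{\pm}_{\tkk}\le 1$; and the upper bound in \eqref{eqn:bounds:alpha11pi11} survives because $\pi^{+}_{\tkk}+\pi^{-}_{\tkk}-\pi_{\tkk}(1,1)\le 1$. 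Once these three shifted inequalities are in hand, the rest of the proof of Lemma \ref{lemma:alpha} goes through verbatim, and no appeal to ``the monotone, bounded behaviour of the recursion map'' is needed --- that clause in your sketch is a placeholder for an argument you never supply, whereas the three elementary probability inequalities above are the whole of the missing content.
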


\subsection{Implementing the Optimal Strategy}\label{sec222}

{As indicated in Remark \ref{AdpHistMMD}, the function $g$ introduced in \eqref{eqn:gfunc:pi} has two main purposes. First, it summarizes the information contained in the recent history of MOs, $\pmb{e}_{t_{k}}=(\mathbbm{1}_{t_{k}}^{\pm},\dots, \mathbbm{1}_{t_{k-\varpi+1}}^{\pm})$, and, more importantly, it alleviates the computational burden by reducing the dimension of the different scenarios one needs to consider. However, in order for $g$ to work as intended, we must impose one additional condition:
\begin{assumption}\label{LACndOnal}
For $k=0,\dots,N$, $g(\etkk)$ is a function of $(\onep,\onem,g(\etk))$; i.e., if we denote the image of $g$ as $\mathcal{I}_g$, then there exists a function $\Gamma:\{0,1\}^2\times \mathcal{I}_g\to\mathcal{I}_g$ such that 
\begin{align}\label{AnthrCndBL}
	g(\etkk)=\Gamma(\onep,\onem,g(\etk)).
\end{align}
\end{assumption}
Obviously, if we picked $g$ to be the identity function ($g(\etk)=\etk$) so that there is no dimension reduction, then \eqref{AnthrCndBL} is trivially satisfied by taking $\Gamma$ to be the mapping that drops the last two coordinates of the vector $(\onep,\onem,g(\etk))$. A more interesting example is when $g(\etkk)=\pmb{e}_{\tkk}^++\pmb{e}_{\tkk}^-$ (which satisfies the conditions for admissibility stated in Lemma \ref{lemma:gfunc}). In that case, denoting the mapping that drops the last two entries in a vector as $\Gamma_{-}$, we have:
\[
	g(\etkk)=(\onep+\onem,\Gamma_{-}(\pmb{e}_{\tk}^++\pmb{e}_{\tk}^-))=(\onep+\onem,\Gamma_{-}(g(\pmb{e}_{\tk}))),
\]
and the condition of Assumption \ref{LACndOnal} is satisfied.

Working backward by induction, it is not hard to check that, under Assumption \ref{LACndOnal}, the coefficients ${}^{\scaleto{(1)}{5pt}}\!A^+_{\tk}$, ${}^{\scaleto{(2)}{5pt}}\!A^+_{\tk}$, and ${}^{\scaleto{(3)}{5pt}}\!A^+_{\tk}$ for the optimal placement strategy \eqref{eqn:OptimalL} depend only on $g(\pmb{e}_{\tk})$. 
Indeed, the probabilities $\mathbbm{P}(\onep=i,\onem=j|\F) $ have this property since these can be expressed in terms of $\pi^\pm_{\tkk}$ and $\pi_{\tkk}(1,1)$ (see \eqref{eq1234}), which enjoy the stated property per Assumption \ref{assump:cp}-(iv). So, it only remains to show that ${\alpha}_\tkk^0$, ${\alpha}_\tkk^{1\pm}$, and ${\alpha}_\tkk^{i,j}$ (and the corresponding quantities for $h_{t_{k+1}}$) have this property. Indeed, by the induction step, we can assume that ${\alpha}_\tkk=\Lambda(g(\etkk))$, for some function $\Lambda: \mathcal{I}_{g}\to\mathbb{R}$, and, by Assumption \ref{LACndOnal}, ${\alpha}_\tkk=\Lambda(\Gamma(\onep,\onem,g(\etk))$. Then, similar to \eqref{CndCmpAlph}, 
\begin{equation}\label{CndCmpAlphcc}
\begin{aligned}
\alpha_\tkk^{1+} 
&=\sum_{\ell\in\{0,1\}}\Lambda(\Gamma(1,
\ell,g(\etk))\frac{\mathbb{P}\big[\onep=1,\onem=\ell\big|\F\big]}{\mathbb{P}\big[\onep=1\big|\F\big]},
\end{aligned}
\end{equation}
which is clearly a function of $g(\pmb{e}_{\tk})$. We can similarly deal with ${\alpha}_\tkk^0$, ${\alpha}_\tkk^{i,j}$, $h_\tkk^0$, $h_\tkk^{1\pm}$, and $h_\tkk^{i,j}$.

To carry out the optimal placement strategy, we think of each value $\pmb{\iota}\in\mathcal{I}_g$, the image of $g$, as a possible scenario of the immediate history of MOs. Before the beginning of trading, the MM first computes, backward in time using Eqs.~\eqref{eqn:gamma:eta}-\eqref{eqn:alphak}, the coefficients ${}^{\scaleto{(1)}{5pt}}\!A^\pm_{\tk}$, ${}^{\scaleto{(2)}{5pt}}\!A^\pm_{\tk}$, and ${}^{\scaleto{(3)}{5pt}}\!A^\pm_{\tk}$ of the optimal placement strategy \eqref{eqn:OptimalL} for each time $t_k$ and each possible scenario $\pmb{\iota}\in\mathcal{I}_g$. This results in a type of ``dictionary" or ``catalog". Once the catalog is computed, she can then start her trading moving forward in time. At each time $t_k$, she observes the recent history of MOs  $\pmb{e}_{t_{k}}=(\mathbbm{1}_{t_{k}}^{\pm},\dots, \mathbbm{1}_{t_{k-\varpi+1}}^{\pm})$. Based on the $\pmb{e}_{t_{k}}$, the inventory value $I_{t_k}$, the asset price $S_{t_k}$, and her forecasts $\Delta_{t_j}^\tk$ of future price changes, she places her bid and ask LOs using the catalog. For instance, if she assumes $\Delta_{t_i}^{t_k}=0$, for all $i\geq k+1$, she will place her orders at
	\begin{align}\label{DfnTldL0bc}
\begin{split}	
		\widetilde{a}_{t_k}^{*}&=S_{t_{k}}+{}^{\scaleto{(1)}{5pt}}\!A^+_{\tk}I_\tk+{}^{\scaleto{(2)}{5pt}}\!A^+_{\tk}+{{}^{\scaleto{(3)}{5pt}}\!A^+_{\tk}}+\frac{\rho_k^+-\psi_k^+}{2\gamma_{t_{k}}}\Delta_{t_k}^{t_k},\\
	\widetilde{b}_{t_k}^{*}&:=S_{t_{k}}+{}^{\scaleto{(1)}{5pt}}\!A^-_{\tk}I_\tk+{}^{\scaleto{(2)}{5pt}}\!A^-_{\tk}-{}^{\scaleto{(3)}{5pt}}\!A^-_{\tk}+\frac{\rho_k^--\psi_k^-}{2\gamma_{t_{k}}}\Delta_{t_k}^{t_k},
\end{split}
	\end{align}
where above we are also assuming that the coefficients of $\Delta_{t_k}^{t_k}$ are computed at the beginning of the trading for each time $t_k$ and each scenario $\pmb{\iota}\in\mathcal{I}_g$.}
	

\section{Calibration and Testing the Optimal Strategy on LOB Data} \label{sec:Data}
{In this section, we give further details about the implementation of the optimal placement strategy of Section \ref{Sec:General:process}, including model calibration. We then illustrate our approach with real LOB data from Microsoft Corporation (stock symbol MSFT) during the year of 2019\footnote{{\Blue We have tried a few other stocks but the results are not shown here for the sake of space.}}. {Our data set is obtained from Nasdaq TotalView-ITCH 5.0, which is a direct data feed product offered by The Nasdaq Stock Market, LLC\footnote{\tt{http://www.nasdaqtrader.com/Trader.aspx?id=Totalview2}}. TotalView-ITCH uses a series of event messages to track any change in the LOB. For each message, we observe the timestamp, type, direction, volume, and price. We reconstruct the dynamics of the top 20 levels of the LOB directly from the event message data. We treat each day as an independent sample.}

In the first subsection 
we detail the model training or parameter estimation procedure. The training will be based on the historical LOB data of the 20 days prior to each testing day\footnote{{\Blue We tried different windows. The performance is good provided that the window size is not too small (say, 2 or 5 days).}}.
 In the second subsection, we present the performance of the optimal placement policy using the real flow of orders for MSFT in a given testing day and compare it with ``fix-placement" strategies that place LOs at fixed  ask (bid) price levels. More specifically, in each test day, we assume the MM places her LOs every second from 10:00 am to 3:30 pm, placing a total of 19800 LOs at each side of the book.

\subsection{Parameter Estimation}\label{EstPisH}

\subsubsection{Estimation of \texorpdfstring{$\pi_\tkk^\pm, \pi_\tkk(1,1)$}.} \label{EstPiSS0}

We first need to specify the function $g$ in Assumption \ref{assump:cp}-(iv). We consider the following three functions:
\begin{align}\label{eqn:choice:g}
g_1: \{0,1\}^{6} \to \{0,1\}^{6},\quad&   g_1 (\pmb{e}_\tk)=\pmb{e}_\tk=(\mathbbm{1}_{t_{k}}^+,\mathbbm{1}_{t_{k}}^-, \mathbbm{1}_{t_{k-1}}^+ , \mathbbm{1}_{t_{k-1}}^-, \mathbbm{1}_{t_{k-2}}^+ , \mathbbm{1}_{t_{k-2}}^-),%
\nonumber \\
g_2: \{0,1\}^{6} \to \{0,1,2\}^{3},&\quad   g_2 (\pmb{e}_\tk)=(\mathbbm{1}_{t_{k}}^++\mathbbm{1}_{t_{k}}^-, \mathbbm{1}_{t_{k-1}}^+ + \mathbbm{1}_{t_{k-1}}^-, \mathbbm{1}_{t_{k-2}}^+ + \mathbbm{1}_{t_{k-2}}^-),\\
g_3: \{0,1\}^{8} \to \{0,1,2\}^{4},&\quad   g_3 (\pmb{e}_\tk)=(\mathbbm{1}_{t_{k}}^++\mathbbm{1}_{t_{k}}^-, \mathbbm{1}_{t_{k-1}}^+ + \mathbbm{1}_{t_{k-1}}^-, \mathbbm{1}_{t_{k-2}}^+ + \mathbbm{1}_{t_{k-2}}^-,\mathbbm{1}_{t_{k-3}}^+ + \mathbbm{1}_{t_{k-3}}^-).\nonumber
\end{align}
The first function does not satisfy the conditions for admissibility of Lemma \ref{lemma:gfunc}. However, in practice, for each of the functions above, it is very rare that $a_{t_k}<b_{t_k}$ or that $a_{t_k}$ ($b_{t_k}$) is below (above) the midprice $S^{mid}_{t_{k}}$. When any of these events happen, we simply set $a_{t_k}$ and/or $b_{t_k}$ at the tick right above and below the midprice depending on what is appropriate.

Once we have selected the function $g$, we use the historical LOB data of the 20 days prior to each testing day to estimate the functions $f$, $f^+$, and $f^-$ in Assumption \ref{assump:cp}-(iv). To this end, we simply leverage the interpretations of those functions as conditional probabilities.
 Specifically, setting $\mathcal{I}_{g}:=\text{Im}(g)\subset\R^d$, for every $\pmb{\iota}\in\mathcal{I}_g$, we estimate $f^\pm(\pmb{\iota})$ and $f(\pmb{\iota})$ as
\begin{align}\label{eqn:estimation:f1}
\widehat{f}^\pm(\pmb{\iota})&=\dfrac{\#\{\mathbbm{1}_{t_\ell}^\pm=1,g(\pmb{e}_{t_\ell})=\pmb{\iota}\}}{\#\{g(\pmb{e}_{t_\ell})=\pmb{\iota}\}},\\ \label{eqn:estimation:f2}
\widehat{f}(\pmb{\iota})&=\dfrac{\#\{\mathbbm{1}_{t_\ell}^+=1,\mathbbm{1}_{t_\ell}^-=1,g(\pmb{e}_{t_\ell})=\pmb{\iota}\}}{\#\{g(\pmb{e}_{t_\ell})=\pmb{\iota}\}},
\end{align}
where $\#A$ indicates the cardinality of a set $A$ and the times $t_\ell$'s {range} over all the seconds from 10:00 am to 3:30 pm in the 20 days prior to the testing day. 
The results of the estimation procedure for $g_1,g_2$ and $g_3$ can be found in Figure \ref{fig:OptimalPaths} below. The overall behavior is what one will expect: when $\pmb\iota$ takes a value corresponding to more ones in $\pmb{e}_{t_\ell}$, the probabilities $\pi^{\pm}$ take larger values. {These figures also indicate that our symmetry assumption $\pi^+=\pi^-$ required for admissibility (see Proposition \ref{prop:admissibility}) is reasonable.}

\bigskip
\begin{figure}[H]
	\centering
	\begin{subfigure}[b]{1\textwidth}
		\centering
		\includegraphics[width=.6\textwidth]{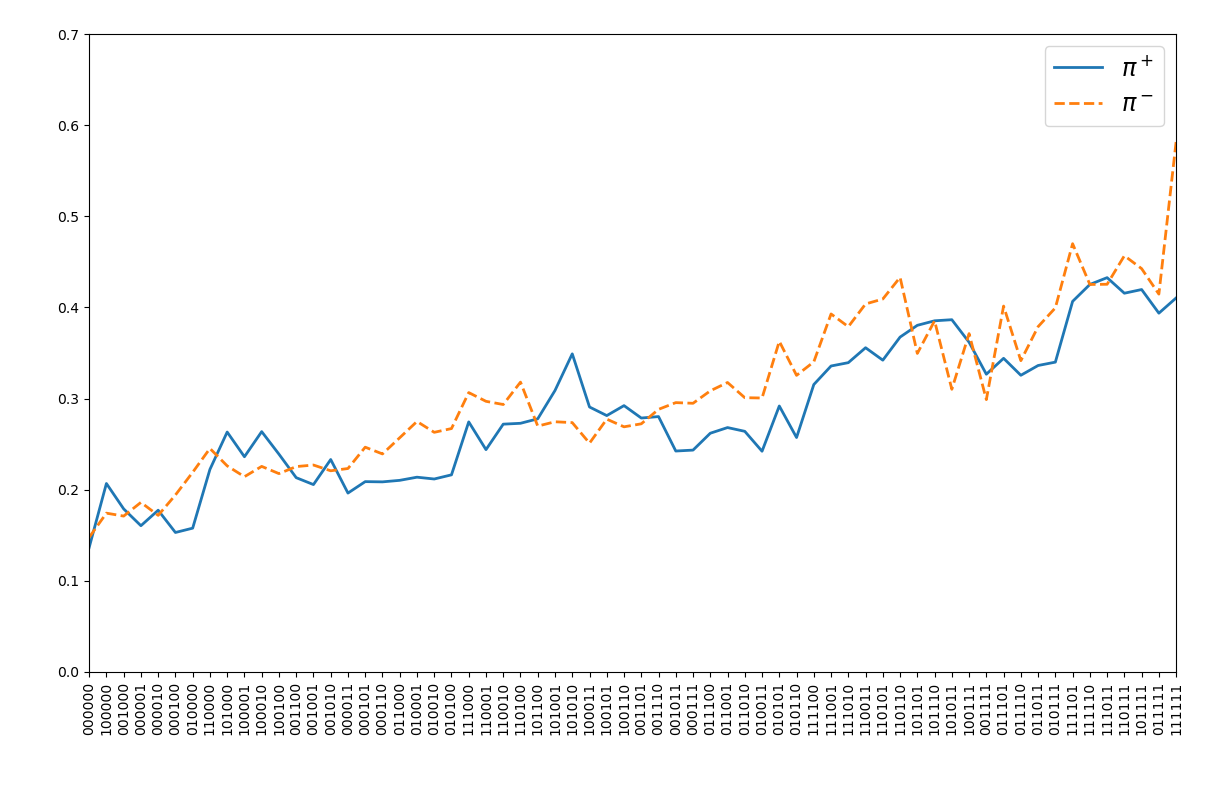}
		\caption{$\pi^\pm$ as functions of $g_1$.}
		\label{fig:PricePath} 
	\end{subfigure}
	
	\begin{subfigure}[b]{1\textwidth}
		\centering
		\includegraphics[width=.6\textwidth]{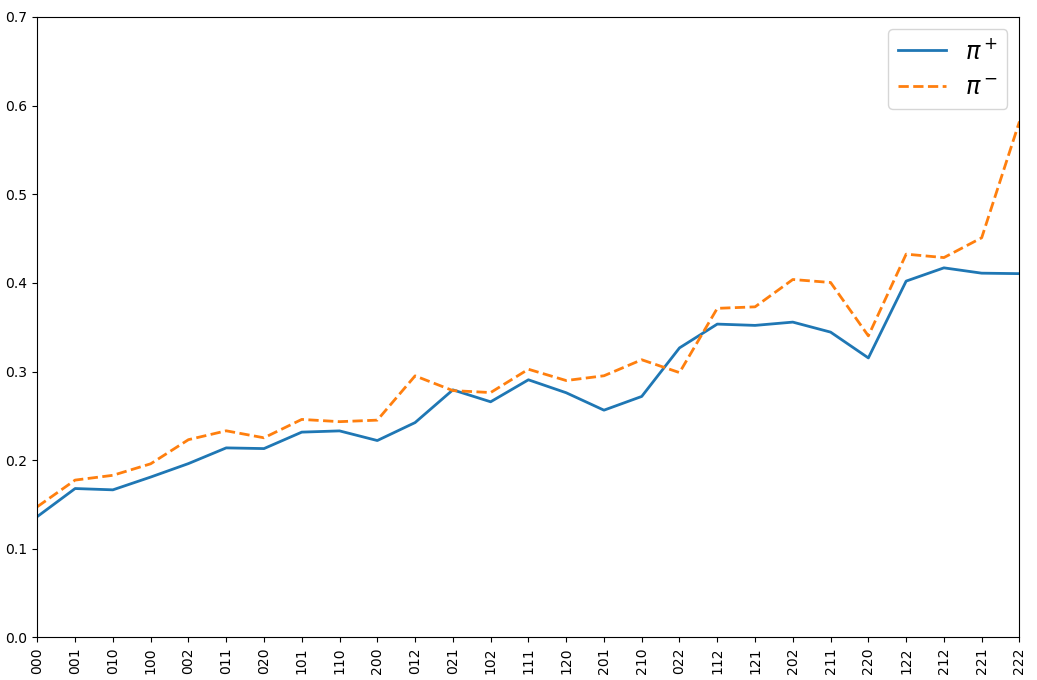}
		\caption{$\pi^\pm$ as functions of $g_2$.}
		\label{fig:InvPath}
	\end{subfigure}
	
	\begin{subfigure}[b]{1\textwidth}
		\centering
		\includegraphics[width=.8\textwidth]{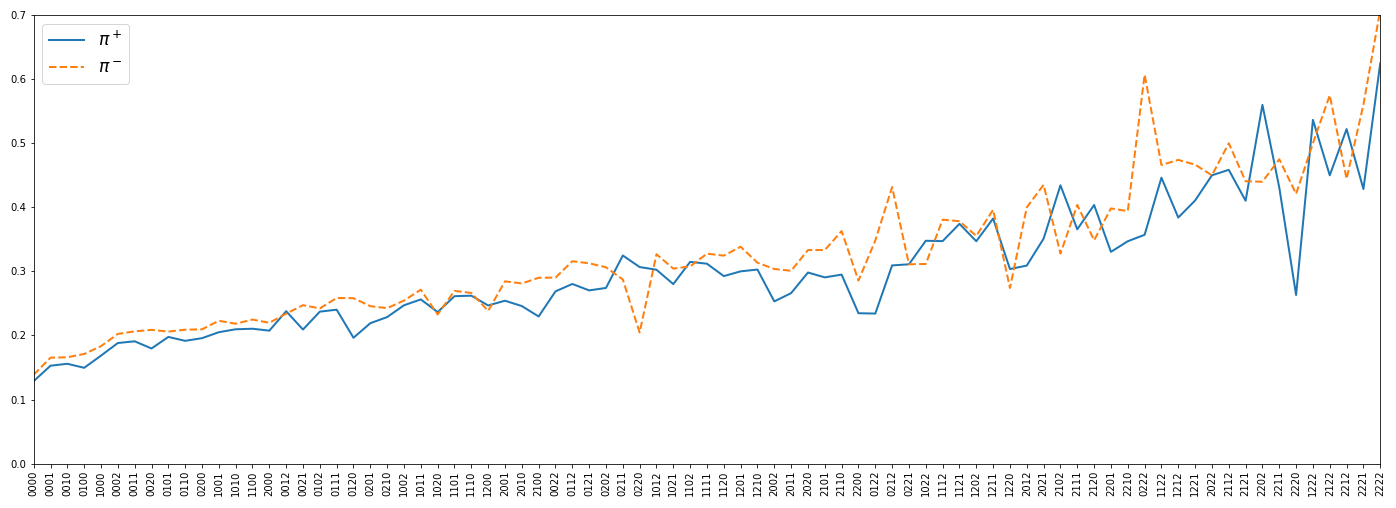}
		\caption{$\pi^\pm$ as functions of $g_3$.}
		\label{fig:InvPath0}
	\end{subfigure}
	\caption{Estimation of $\pi^{\pm}$ under the different choices of the function $g$ given in \eqref{eqn:choice:g}. The $x$-axis displays the equivalence classes where the function $g$ assumes different values, whereas the $y$-axis represents the value of $\pi^{\pm}$.}
	\label{fig:OptimalPaths}
\end{figure}
%
%
%
%

\begin{remark}
One of the key principles behind our approach is the presumption that we could forecast {to a good degree} the intensity of MOs through the day using the estimated functions \eqref{eqn:estimation:f1} and \eqref{eqn:estimation:f2} and the history of previous MOs, $\pmb{e}_\tk$. To assess the validity of this principle, we compare the average of the sets
\begin{align*}
\mathcal{M}_{k}^+&=\{\mathbbm{1}_{t_{k-250}}^+,\mathbbm{1}_{t_{k-249}}^+,\ldots,\mathbbm{1}_{t_k}^+,\ldots,\mathbbm{1}_{t_{k+249}}^+,\mathbbm{1}_{t_{k+250}}^+\}{,}\\
\mathcal{P}_{k}^+&=\{\hat{\pi}^+_{t_{k-250}},\hat{\pi}^+_{t_{k-249}},\ldots,\pi^+_{t_k},\ldots,\pi^+_{t_{k+249}},\pi^+_{t_{k+250}}\},
\end{align*}
where $\hat\pi_{t_{\ell}}^+:=\widehat{f}^+(g_1(\mathbbm{1}_{t_{\ell}}^+,\mathbbm{1}_{t_{\ell}}^-, \mathbbm{1}_{t_{k-1}}^+ , \mathbbm{1}_{t_{\ell-1}}^-, \mathbbm{1}_{t_{\ell-2}}^+ , \mathbbm{1}_{t_{\ell-2}}^-))$. 
Figure \ref{fig:MovAvg0807} below shows the result for all the seconds $t_k$ in a prototypical day. This shows that our approach is surprisingly accurate {in tracking the} intensity of MO's throughout the day using historical data from previous days and past information of MOs.
\end{remark}

\begin{figure}[h]
	\centering
	\vspace{.5 cm}
	\includegraphics[width=.8\textwidth]{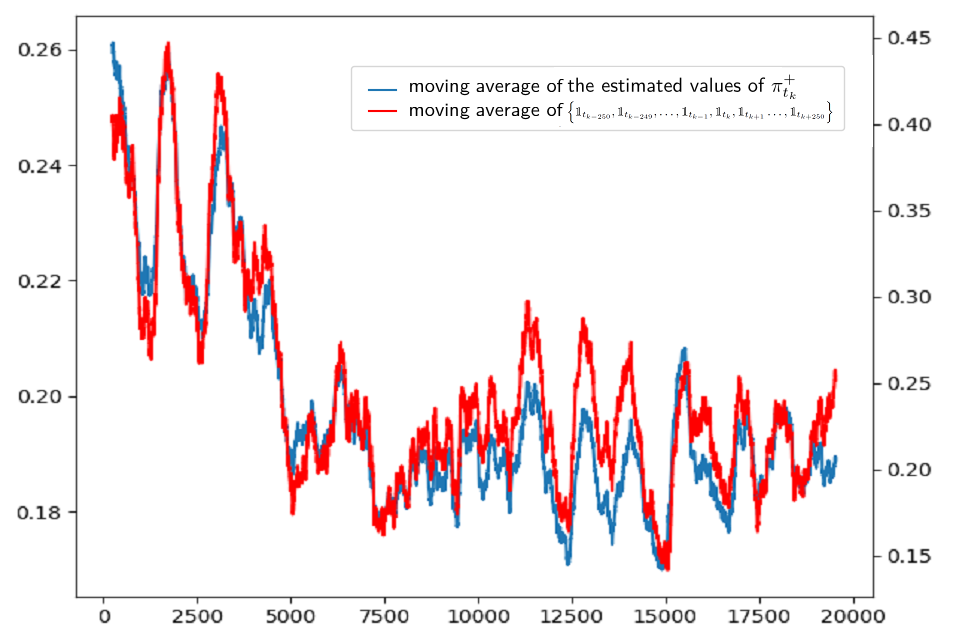}
	\caption{Moving average of the estimated values of $\mathcal{P}^+$ (using the function $g=g_1$) and the moving average of $\mathcal{M}_{k}^+$ on August 7th, {\Blue 2019}.}\label{fig:MovAvg0807}
	\vspace{.5 cm}
\end{figure}


\subsubsection{Estimation of the Demand Functions.}\label{DmdFncEsD} To estimate the constants $\mu_{c^mp^n}^{\pm}$ of Eq.~\eqref{eqn:mucp},  upon which our strategy depends on, we use the sample averages within the previous 20 trading days. For instance, the estimate of $\mu^+_{c^m p^n}$ is 
\begin{equation}\label{EstCntAH}
	\hat{\mu}^+_{c^m p^n}:=\frac{1}{N}\sum_{\ell=0}^{N-1}(\hat{c}^+_{t_{\ell+1}})^m(\hat{p}^+_{t_{\ell+1}})^n,
\end{equation}
where $t_{\ell+1}$ ranges over all the seconds of the previous 20 days from 10:00 am to 3:30 pm and $N$ is the total number of those. To estimate $(\hat{c}^+_{t_{\ell+1}},\hat{p}^+_{t_{\ell+1}})$ for one of those previous 1-second time {intervals $[t_\ell,t_{\ell+1})$}, we apply the following procedure. Assume that the MM places an ask LO at price level $P_\ell$ at time $t_\ell$ {\Blue with volume $V_{LO}$}, and that the volume of existing {\Blue ask LOs} with prices lower than $P_\ell$ is $V_L$. If a buy MO with volume $V_M$ arrives during $[t_\ell,t_{\ell+1})$, then the number of shares of the MM's LO to be filled equals to ${\Blue ((V_M-V_L)\vee0)\wedge V_{LO}}$\footnote{Here, for computational simplicity, we are assuming the MM's  LO at level $P_\ell$ is ahead of the queue (hence, her shares are the first to be filled at that level), which is a common simplification in the literature.}. We then compute this quantity for all buy MOs arriving during the interval $[t_{\ell},t_{\ell+1})$ so that $Q_\ell:=\sum_{\text{all MO}}\Big({\Blue ((V_M-V_L)\vee0)\wedge V_{LO}}\Big)$ will quantify the actual demand at price level $P_\ell$ during that interval. Once those demands have {been} computed for all price level $P_\ell$ above the midprice, we {performed} a weighted linear regression to estimate $(\hat{c}^+_{t_{\ell+1}},\hat{p}^+_{t_{\ell+1}})$, with the actual demand $Q_\ell$ being the response variable and the price level $P_\ell$ being the predictor, placing higher {weights} on price levels closer to the midprice $S_\tk$. Following Eq.~(\ref{pdemand}), we can estimate $\hat{c}^+_{t_\ell}$ and $\hat{p}^+_{t_\ell}$ as the slope and as the quotient intercept/slope of the regression line, respectively. In the preprint \cite[p. 28]{zoe}, it is shown that time series $(\hat{c}^+_{t_\ell},\hat{p}^+_{t_\ell})$ are reasonably stationary, implying that our method to estimate the constants $\mu_{c^mp^n}^{\pm}$ as \eqref{EstCntAH} is justifiable.

In Figure \ref{DemandPlot0} below, we plot the average demand curve {and the regression line whose slope and $p$ value is set to the averages of the $(\hat{c}^+_{t_{\ell+1}},\hat{p}^+_{t_{\ell+1}})$'s of all 1-second time intervals $[t_\ell,t_{\ell+1})$ during that date. This graph shows} that the linear model in Eqs. (\ref{pdemand})-(\ref{mdemand}) is a reasonably good approximation of the actual volume of shares executed, especially as they are closer to the midprice where most MOs are executed.

\begin{figure}[h]
	\centering
	\includegraphics[width=.6\textwidth]{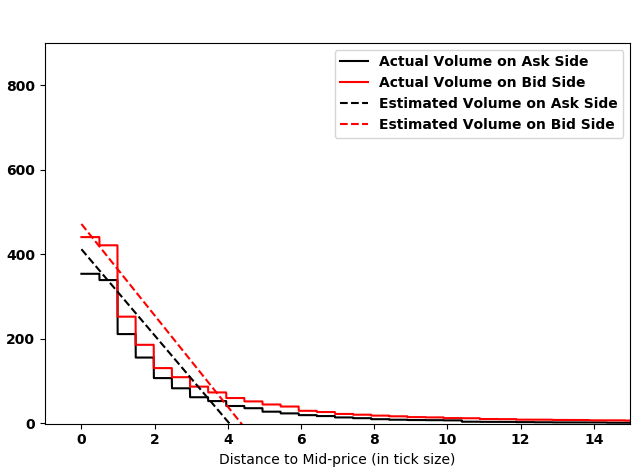}
	\caption{Plot of the actual demand on October 3rd vs. the estimated linear demand function over a 1-second trading interval}%
	\label{DemandPlot0}
	\vspace{.5 cm}
\end{figure}

To give an idea of the values of $\mu^\pm_{c^m p^n}$, we estimate those constants  for each day of the 252 days of our sample (using an estimator like that in \eqref{EstCntAH} but with the $t_{\ell}$'s ranging over all the seconds of each day) and then we take the averages of the resulting 252 estimates $\hat{\mu}^\pm_{c^m p^n}$. Table \ref{tab1} shows the results. The table also shows some {other related quantities} to assess the validity of Eqs.~(\ref{Cnd1PosSpr}) and (\ref{Cnd3PosSpr}) in Proposition \ref{prop:admissibility}. As shown therein, these assumptions are reasonably met in our sample data.

\medskip

\renewcommand{\arraystretch}{1.5}

\begin{table}[ht]
	\centering
	\setlength{\tabcolsep}{10pt}

	\begin{tabular}{|rcl|c|rcl|}
		\cline{1-3}\cline{5-7}
		\textcolor{RedOrange}{$\hat{\mu}_c^+$} &=& \textcolor{RedOrange}{$125.512$} &\qquad& \textcolor{RedOrange}{$\hat{\mu}_c^- $}&=& \textcolor{RedOrange}{$130.622$} \\ \cline{1-3}\cline{5-7}
		\textcolor{ForestGreen}{$\hat{\mu}_p^+$} &=& \textcolor{ForestGreen}{$3.287$} &\qquad& \textcolor{ForestGreen}{$\hat{\mu}_p^- $}&=& \textcolor{ForestGreen}{$3.292$} \\ \cline{1-3}\cline{5-7}
		\textcolor{WildStrawberry}{$\hat{\mu}_c^+ \hat{\mu}_p^+$} &=&\textcolor{WildStrawberry}{$412.558$} &\qquad& \textcolor{TealBlue}{$\hat{\mu}_c^-\hat{\mu}_p^-$} &=& \textcolor{TealBlue}{$430.008$}\\ \cline{1-3}\cline{5-7}
		\textcolor{WildStrawberry}{$\hat{\mu}_{cp}^+$} &=& \textcolor{WildStrawberry}{$451.263$} &\qquad& \textcolor{TealBlue}{$\hat{\mu}_{cp}^-$} &=& \textcolor{TealBlue}{$471.685$} \\ \cline{1-3}\cline{5-7}
		\textcolor{Sepia}{$\hat{\mu}_{c^2}^+$} &=& \textcolor{Sepia}{$8.47\times 10^4$} &\qquad& \textcolor{Sepia}{$\hat{\mu}_{c^2}^-$} &=& \textcolor{Sepia}{$5.45\times 10^4$} \\ \cline{1-3}\cline{5-7}
		\textcolor{Fuchsia}{$\hat{\mu}_{c^2p}^+$} &=& \textcolor{Fuchsia}{$3.53\times 10^5$} &\qquad& $\hat{\mu}_{c^2p}^-$ &=& $2.25\times 10^5$ \\ \cline{1-3}\cline{5-7}
		\textcolor{Fuchsia}{$\hat{\mu}_{c^2}^+\hat{\mu}_p^+$} &=& \textcolor{Fuchsia}{$2.78\times 10^5$} &\qquad& $\hat{\mu}_{c^2}^-\hat{\mu}_p^-$ &=& $1.79\times 10^5$ \\ \cline{1-3}\cline{5-7}
	\end{tabular}

	\vspace{.2cm}
	\caption{Average values of $c^\pm$ and $p^\pm$ over 252 trading days in 2019. Entries with the same color should be of a similar magnitude to have our model assumptions validated. \bigskip}
	\label{tab1}
\end{table}

\subsubsection{Estimation of the Drift for the Midprice Process.} 
{For simplicity}, we set the fundamental price to be the midprice process. For our implementation, we assume that 
\begin{equation*}
\Delta_{t_j}^{t_k}=\E(S_{t_{j+1}}-S_{t_j}|\mathcal{F}_{t_k})=0, \quad j\geq k+1,
\end{equation*}
because in practice, one could expect $\Delta_{t_{j}}^{t_k}=\E(S_{t_{j+1}}-S_{t_{j}}|\mathcal{F}_{t_k})$ to quickly decrease to $0$ as $j$ is farther away from $k$ (otherwise, statistical arbitrage opportunities are likely appear) and also because the estimation error of the forecasts $\Delta_{t_j}^{t_k}$ increases quickly as $t_{j}$ is farther away from $t_k$. To estimate the one-step ahead forecast $\Delta_\tk = \E(S_\tkk-S_\tk|\F)$,  we {simply} take the average over the last 5 increments:
\begin{equation}\label{OSFN}
    \hat{\Delta}_{t_k}= \frac{1}{5}\sum_{i = 1}^5 (S_{t_{k-i+1}}-S_{t_{k-i}})=\dfrac{S_{t_{k}}-S_{t_{k-5}}}{5}.
\end{equation}

\subsection{Numerical Results} \label{NumRsltSect}

In this subsection we {\Blue illustrate the performance of} the optimal trading strategy using as $g$ function each of the functions in \eqref{eqn:choice:g}. For each of these choices we compute the terminal value of the performance criterion $W_T+S_T I_T -\lambda I_T^2$ under the martingale assumption. {\Blue We also compute the terminal value when applying the placement strategy \eqref{DfnTldL0} with $\Delta_{t_k}^{t_k}$ estimated as \eqref{OSFN} and using the function $g_3$}. As mentioned above, our optimal placement strategy was tested based on the LOB data of MSFT observed in the year 2019. As suggested in the preprint \cite[Section 5.2]{zoe}, we set $\lambda=0.0005$, which gives a good estimate of the average liquidity cost for our sample {data. 
The} cash holding $W_\tk$ and the stock inventory $I_\tk$ were computed as
\begin{align*}
W_{t_{k+1}} - W_\tk = a_\tk \widetilde{Q}_\tkk^+-b_\tk \widetilde{Q}_\tkk^-,\qquad
I_{t_{k+1}}-I_\tk = -\widetilde{Q}_\tkk^++\widetilde{Q}_\tkk^-,
\end{align*}
where $a_\tk$ and $b_\tk$ were the price of selling (buying) LOs placed at time $\tk$ and $\widetilde{Q}_\tkk^+$ ($\widetilde{Q}_\tkk^-$) was the executed volume of selling (buying) LOs calculated by using the actual flow of {MOs} in the market in each one-second interval of each testing day\footnote{{Here, we are again assuming for simplicity that the MM's  LO is ahead of the queue.}}. Since all the parameters of the model, namely the filling probabilities $\pi_\tkk^\pm, \pi_\tkk(1,1)$ and the constants $\mu_{c^mp^n}^{\pm}$, are calibrated using the past 20 days to each testing day, the first testing day is set to be January $30^{\text{th}}$, which was the $21^{\text{st}}$ trading day of 2019.    

\subsubsection{Performance Criterion Distribution.} 
The {sample means and standard deviations of the  end of the day performance criterion $W_T+S_TI_T-\lambda I_T^2$ for the 232 testing days under the four different implementations of our optimal placement strategy} are shown in Table~\ref{tab:TerminalRewardb}. 
For comparison, we also computed the {corresponding values when using} 6 deterministic strategies labeled `Level 1'- `Level 6', where the strategy `Level $i$' is the one where the {MM} always posts her orders $i$-ticks deep in the order book at both {sides. 
To} complement Table~\ref{tab:TerminalRewardb}, we also present Figure \ref{fig4} below, where we display the histogram of performance criterion $W_T+S_TI_T-\lambda I_T^2$ for the 232 testing days. {Table \ref{tab:TerminalRewarda} presents the means and standard deviations of the terminal values $W_T+\bar S_TI_T$, computed using the actual average price $\bar S_T$ per share that the HFM would get when liquidating her inventory $I_T$ with a MO based on the state of the book at time $T$. We refer to $\bar S_T I_T$ as the liquidation proceeds. We do not observe significant differences with the results presented in Table \ref{tab:TerminalRewardb}, which validates our assumption of modeling the liquidation cost as $S_TI_T-\lambda I_T^2$ and the chosen penalization value of $\lambda=0.0005$}.

\begin{table}[t]
	\centering
	\setlength{\tabcolsep}{8pt}
	\resizebox{\textwidth}{!}{%
		\begin{tabular}{@{}ccccccccc@{}}
			\toprule\midrule
			& \multicolumn{2}{c}{\begin{tabular}{c}Optimal Strategy\\ with Martingale Midprice\\ Conditioning on $g_1(\pmb{e}_\tk)$\end{tabular}} & \multicolumn{2}{c}{\begin{tabular}{c}Optimal Strategy\\ with Martingale Midprice\\ Conditioning on $g_2(\pmb{e}_\tk)$\end{tabular}}&
			\multicolumn{2}{c}{\begin{tabular}{c}Optimal Strategy\\ with Martingale Midprice\\ Conditioning on $g_3(\pmb{e}_\tk)$\end{tabular}} &
			\multicolumn{2}{c}{\begin{tabular}{c}Optimal Strategy\\ with General Midprice\\ Conditioning on $g_3(\pmb{e}_\tk)$\end{tabular}}\\ \midrule
			Mean & \multicolumn{2}{c}{$8.36\times10^4$} & \multicolumn{2}{c}{$8.50\times10^4$}& \multicolumn{2}{c}{{ $8.57\times10^4$}} &\multicolumn{2}{c}{{ $8.26\times10^4$}}\\
			\midrule
			Std. & \multicolumn{2}{c}{$1.63\times10^6$} & \multicolumn{2}{c}{$1.63\times10^6$} & \multicolumn{2}{c}{{ $1.62\times10^6$}}  &\multicolumn{2}{c}{{ $1.37\times10^6$}}\\ \toprule\midrule
			& Level 1             & Level 2             & Level 3            & Level 4              & Level 5              & Level 6             & & \\ \midrule
			Mean & $-7.78\times10^{6}$        &  $-9.99\times10^{5}$       &    $	-1.14\times10^5$     &   $-3.64\times10^4$      &     $-5.16\times10^4 $   &$-3.69\times10^4  $                  & &     \\ \midrule
			Std. & $1.52\times10^{7}$        &   $4.49\times10^{6}$      &   $2.01\times10^{6}$      &   $1.07\times10^{6}$      &    $7.16\times10^{5}$     &$4.81\times10^{5}$                     & &     \\ \bottomrule
		\end{tabular}%
	}
	\vspace{1.5mm}
	\caption{{\DRed Sample Mean and Std. Dev. of the performance criterion {\DRed $W_T+S_TI_T-\lambda I_T^2$} over 232 days for the different strategies considered (optimal and fixed-placement) based on LOB MSFT Data in 2019}.}\label{tab:TerminalRewardb}\vspace{.5 cm}
\end{table}

						\begin{table}[t]
							\centering
							\setlength{\tabcolsep}{8pt}
							\resizebox{\textwidth}{!}{%
								\begin{tabular}{@{}ccccccccc@{}}
									\toprule\midrule
									& \multicolumn{2}{c}{\begin{tabular}{c}Optimal Strategy\\ with Martingale Midprice\\ Conditioning on $g_1(\pmb{e}_\tk)$\end{tabular}} & \multicolumn{2}{c}{\begin{tabular}{c}Optimal Strategy\\ with Martingale Midprice\\ Conditioning on $g_2(\pmb{e}_\tk)$\end{tabular}}&
									\multicolumn{2}{c}{\begin{tabular}{c}Optimal Strategy\\ with Martingale Midprice\\ Conditioning on $g_3(\pmb{e}_\tk)$\end{tabular}} &
									\multicolumn{2}{c}{\begin{tabular}{c}Optimal Strategy\\ with General Midprice\\ Conditioning on $g_3(\pmb{e}_\tk)$\end{tabular}}\\ \midrule
									Mean & \multicolumn{2}{c}{$8.23\times10^4$} & \multicolumn{2}{c}{$8.36\times10^4$}& \multicolumn{2}{c}{{ $8.44\times10^4$}} &\multicolumn{2}{c}{{ $8.01\times10^4$}}\\
									\midrule
									Std. & \multicolumn{2}{c}{$1.62\times10^6$} & \multicolumn{2}{c}{$1.62\times10^6$} & \multicolumn{2}{c}{{ $1.61\times10^6$}}  &\multicolumn{2}{c}{{ $1.37\times10^6$}}\\  
\toprule\midrule												& Level 1             & Level 2             & Level 3            & Level 4              & Level 5              & Level 6             & & \\ \midrule
			Mean & $-1.88\times10^{6}$        &  $-3.06\times10^{5}$       &    $	-4.41\times10^4$     &   $-3.12\times10^4$      &     $-5.26\times10^4 $   &$-3.86\times10^4  $                  & &     \\ \midrule
			Std. & $8.18\times10^{6}$        &   $3.86\times10^{6}$      &   $1.94\times10^{6}$      &   $1.05\times10^{6}$      &    $7.08\times10^{5}$     &$4.81\times10^{5}$                     & &     \\ \bottomrule
								\end{tabular}%
							}
							\vspace{1.5mm}
							\caption{{\DRed Sample Mean and Std. Dev. of the Terminal PnL, $W_T+\bar S_TI_T$ (Terminal Cash Holdings plus the Actual Liquidation Proceeds based on the LOB state at expiration), over 232 trading days based on LOB MSFT data in 2019.}}\label{tab:TerminalRewarda}\vspace{.5 cm}
						\end{table}

\begin{figure}
	\centering
	\begin{subfigure}{1\textwidth}
		\centering
		\includegraphics[width=.52\textwidth]{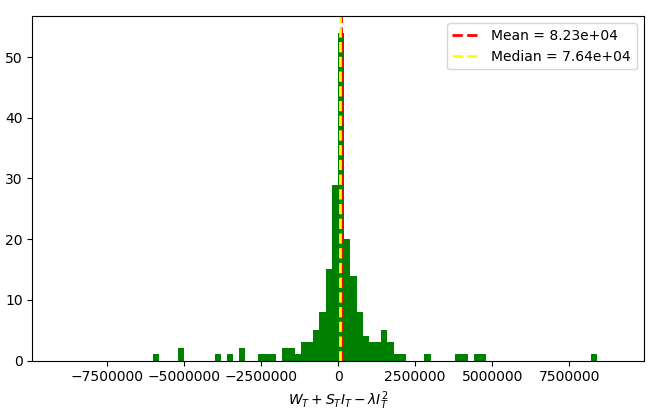}
		\caption{Optimal strategy by using the function $g=g_1$ under the assumption of a martingale price dynamics.}
	\end{subfigure}
	\begin{subfigure}{1\textwidth}
		\centering
		\includegraphics[width=.52\textwidth]{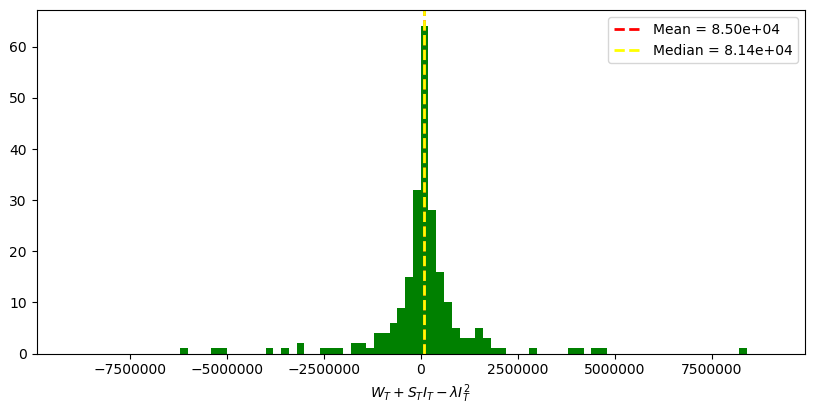}
		\caption{Optimal strategy by using the function $g=g_2$ under the assumption of a martingale price dynamics.}
	\end{subfigure}
	\begin{subfigure}{1\textwidth}
		\centering
		\includegraphics[width=.52\textwidth]{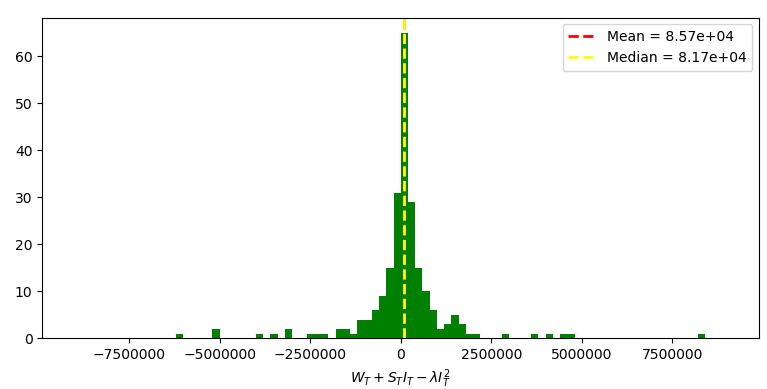}
		\caption{Optimal strategy by using the function $g=g_3$ under the assumption of a martingale price dynamics.}
	\end{subfigure}
	\begin{subfigure}{1\textwidth}
	\centering
	\includegraphics[width=.52\textwidth]{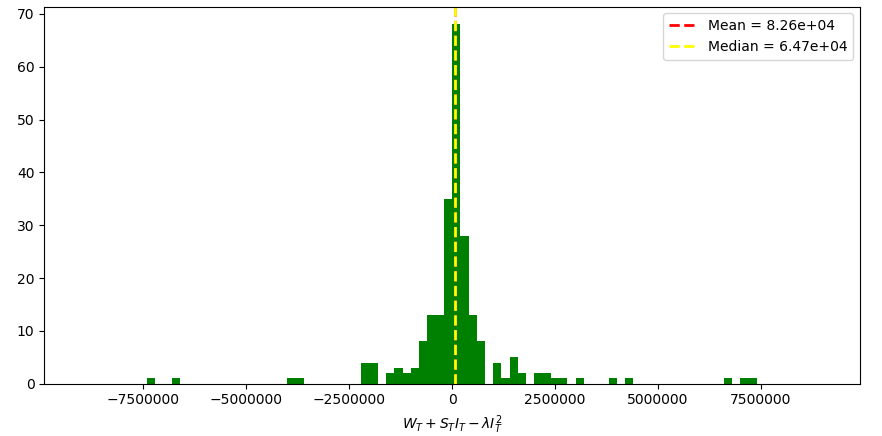}
	\caption{Optimal strategy by using the function $g=g_3$ under the assumption of a general price dynamics.}
\end{subfigure}
\medskip
	\caption{Histogram of the performance criterion obtained from the optimal strategies under different scenarios for 232 trading days of MSFT in 2019.}
	\label{fig4}
\end{figure}

\medskip

Based on the results observed in Tables \ref{tab:TerminalRewardb}\,-\,\ref{tab:TerminalRewarda} and Figure \ref{fig4}, we can conclude that our optimal strategy under any scenario significantly outperforms the fixed-level placement strategies. From {Tables \ref{tab:TerminalRewardb}\,-\,\ref{tab:TerminalRewarda},}  we can observe that the mean and standard deviation for all three policies under the martingale assumption are close to each other but the policy obtained by choosing the function $g$ as $g_3$ has the highest mean and lowest standard deviation. In contrast, when using {a one-step forecast \eqref{OSFN}} and choosing the function $g$ as $g_3$, the average of the performance of the optimal strategy is slightly lower than that under martingale assumption, but the standard deviation is also lower. Furthermore, as shown in Figure \ref{fig4}, the values of the performance criterion for the optimal strategy under all 4 scenarios concentrate around zero, but exhibit some ``outliers", which indicate heavy tails. 

{For comparison, in Tables~{\ref{tab:TerminalRewardc}}\,-\,\ref{tab:TerminalRewardLiquidd}, we report the analogous results} using the optimal placement strategies {from the preprint} \cite{zoe}, in which the probabilities $\pi^\pm_{\tkk}$ and $\pi_{\tkk}(1,1)$ are deterministic quadratic functions of time calibrated  using historical data. While the sample standard deviations of our adaptive placement strategy are slightly larger than those in \cite{zoe}, the sample means of the performance criteria are significantly better.

{\Blue We can further wonder how important adopting random demand is to achieve good PnL. In Table \ref{tab:TerminalRewardaba}-\ref{tab:TerminalRewardbb} below, we compute the {sample means and standard deviations of $W_T+S_TI_T-\lambda I_T^2$ and $W_T+\bar S_TI_T$ based on the 232 day, still assuming adaptive probabilities $\pi$, but now taking $c$ and $p$ at each test day \emph{constant} to their sample averages of the previous 20 trading days. As shown therein, though the average PnL are all positive, they are significantly smaller than those in Tables \ref{tab:TerminalRewardb}-\ref{tab:TerminalRewarda} and even those using nonadaptive $\pi$'s but stochastic demand as illustrated in Tables~{\ref{tab:TerminalRewardc}}-\ref{tab:TerminalRewardLiquidd}.} 

\bigskip
\begin{table}[ht]
\centering
\setlength{\tabcolsep}{8pt}
\resizebox{\textwidth}{!}{%
\begin{tabular}{@{}ccccccc@{}}
\toprule\midrule
 & \multicolumn{2}{c}{\begin{tabular}{c}Optimal Strategy\\ with Non-Martingale {Fundamental Price}\\ {and $\pi_\tk(1,1)\geq 0$}\end{tabular}} & \multicolumn{2}{c}{\begin{tabular}{c}Optimal Strategy\\ with Martingale {Fundamental Price}\\ {and $\pi_\tk(1,1)\geq 0$}\end{tabular}}&
 \multicolumn{2}{c}{\begin{tabular}{c}{Optimal Strategy}\\ {with Non-Martingale Fundamental Price} \\{and $\pi_\tk(1,1)\equiv 0$}\end{tabular}} \\ \midrule
Mean & \multicolumn{2}{c}{$6.13\times10^4$} & \multicolumn{2}{c}{$5.80\times10^4$}& \multicolumn{2}{c}{{ $6.11\times10^4$}}                                               \\ \midrule
Std. & \multicolumn{2}{c}{$1.22\times10^6$} & \multicolumn{2}{c}{$1.30\times10^6$} & \multicolumn{2}{c}{{ $1.22\times10^6$}}                                              \\ \toprule\midrule
\end{tabular}%
}
\vspace{.1mm}
\caption{{\DRed Sample} Mean and Std. of the Terminal Objective {Values} $W_T+S_TI_T-\lambda I_T^2$ over 232 Days. We fix $\lambda = 0.0005$. {We control cash holdings and inventory processes {\Blue assuming stochastic demand functions but deterministic intensity of MO arrivals as \cite{zoe}}.}}\label{tab:TerminalRewardc}
\vspace{.5 cm}
\end{table}

\begin{table}[h]
\centering
\setlength{\tabcolsep}{8pt}
\resizebox{\textwidth}{!}{%
\begin{tabular}{@{}ccccccc@{}}
\toprule\midrule
 & \multicolumn{2}{c}{\begin{tabular}{c}Optimal Strategy\\ with Non-Martingale {Price}\\ {and $\pi_\tk(1,1)\geq 0$}\end{tabular}} & \multicolumn{2}{c}{\begin{tabular}{c}Optimal Strategy\\ with Martingale {Price}\\ {and $\pi_\tk(1,1)\geq 0$}\end{tabular}}&
 \multicolumn{2}{c}{\begin{tabular}{c}{Optimal Strategy}\\ {with Non-Martingale Price} \\{and $\pi_\tk(1,1)\equiv 0$}\end{tabular}} \\ \midrule
Mean & \multicolumn{2}{c}{$6.00\times10^4$}& \multicolumn{2}{c}{$5.56\times10^4$}& \multicolumn{2}{c}{{$5.97\times10^4$}}                                               \\ \midrule
Std. & \multicolumn{2}{c}{$1.22\times10^6$} & \multicolumn{2}{c}{$1.30\times10^6$}& \multicolumn{2}{c}{{$1.22\times10^6$}}                                               \\ \bottomrule
\end{tabular}%
}
\vspace{.1mm}
\caption{{\DRed Sample} Mean and Std. of the Terminal {Values $W_{T}+\bar{S}_{T}I_{T}$ (Terminal Cash Holdings plus {\DRed the Actual Liquidation Proceeds based on the LOB state at expiration}) over 232 Days}. We control cash holdings and inventory processes {\Blue assuming stochastic demands functions but deterministic intensity of MO arrivals as \cite{zoe}}.}
\label{tab:TerminalRewardLiquidd}
\vspace{.5 cm}
\end{table}

\begin{table}[H]
	\centering
	\setlength{\tabcolsep}{8pt}
	\resizebox{\textwidth}{!}{%
		\begin{tabular}{@{}ccccccccc@{}}
			\toprule\midrule
			& \multicolumn{2}{c}{\begin{tabular}{c}Optimal Strategy\\ with Martingale Midprice\\ Conditioning on $g_1(\pmb{e}_\tk)$\end{tabular}} & \multicolumn{2}{c}{\begin{tabular}{c}Optimal Strategy\\ with Martingale Midprice\\ Conditioning on $g_2(\pmb{e}_\tk)$\end{tabular}}&
			\multicolumn{2}{c}{\begin{tabular}{c}Optimal Strategy\\ with Martingale Midprice\\ Conditioning on $g_3(\pmb{e}_\tk)$\end{tabular}} &
			\multicolumn{2}{c}{\begin{tabular}{c}Optimal Strategy\\ with General Midprice\\ Conditioning on $g_3(\pmb{e}_\tk)$\end{tabular}}\\ \midrule
			Mean & \multicolumn{2}{c}{$3.60\times10^4$} & \multicolumn{2}{c}{$3.64\times10^4$}& \multicolumn{2}{c}{{ $3.90\times10^4$}} &\multicolumn{2}{c}{{ $7.15\times10^4$}}\\
			\midrule
			Std. & \multicolumn{2}{c}{$2.07\times10^6$} & \multicolumn{2}{c}{$2.07\times10^6$} & \multicolumn{2}{c}{{ $2.06\times10^6$}}  &\multicolumn{2}{c}{{ $1.59\times10^6$}}\\ \toprule\bottomrule
		\end{tabular}%
	}
	\vspace{1.5mm}
	\caption{Sample Mean and Std. Dev. of the performance criterion $W_T+S_TI_T-\lambda I_T^2$ over 232 days for the different adaptive strategies, but with deterministic demand functions (i.e., $c$ and $p$ are constant to its average values in the 20 days previous to each testing day.}\label{tab:TerminalRewardaba}\vspace{.5 cm}
\end{table}

\begin{table}[H]
	\centering
	\setlength{\tabcolsep}{8pt}
	\resizebox{\textwidth}{!}{%
		\begin{tabular}{@{}ccccccccc@{}}
			\toprule\midrule
			& \multicolumn{2}{c}{\begin{tabular}{c}Optimal Strategy\\ with Martingale Midprice\\ Conditioning on $g_1(\pmb{e}_\tk)$\end{tabular}} & \multicolumn{2}{c}{\begin{tabular}{c}Optimal Strategy\\ with Martingale Midprice\\ Conditioning on $g_2(\pmb{e}_\tk)$\end{tabular}}&
			\multicolumn{2}{c}{\begin{tabular}{c}Optimal Strategy\\ with Martingale Midprice\\ Conditioning on $g_3(\pmb{e}_\tk)$\end{tabular}} &
			\multicolumn{2}{c}{\begin{tabular}{c}Optimal Strategy\\ with General Midprice\\ Conditioning on $g_3(\pmb{e}_\tk)$\end{tabular}}\\ \midrule
			Mean & \multicolumn{2}{c}{$3.26\times10^4$} & \multicolumn{2}{c}{$3.30\times10^4$}& \multicolumn{2}{c}{{ $3.56\times10^4$}} &\multicolumn{2}{c}{{ $6.67\times10^4$}}\\
			\midrule
			Std. & \multicolumn{2}{c}{$2.07\times10^6$} & \multicolumn{2}{c}{$2.07\times10^6$} & \multicolumn{2}{c}{{ $2.06\times10^6$}}  &\multicolumn{2}{c}{{ $1.60\times10^6$}}\\ \toprule\bottomrule
		\end{tabular}%
	}
	\vspace{1.5mm}
	\caption{Sample Mean and Std. Dev. of the Terminal PnL, $W_T+\bar S_TI_T$ (Terminal Cash Holdings plus the Actual Liquidation Proceeds based on the LOB state at expiration), over 232 days for the different adaptive strategies, but with deterministic demand functions (i.e., $c$ and $p$ are constant to its average values in the 20 days previous to each testing day.}\label{tab:TerminalRewardbb}\vspace{.5 cm}
\end{table}

\subsubsection{Inventory and price evolution.} To analyze whether the penalization term in the performance criterion is indeed able to push the MM to lower her inventory towards the end of the trading day, we display in Figures \ref{fig11} and \ref{fig22} two prototypical sample inventory paths throughout the trading day of August 7th when computing the optimal strategy with the choice $g_3$,  under both the martingale (Figure \ref{fig:PricePath1}) and the non-martingale (Figure \ref{fig:PricePath2}) price dynamics. To compare, we also display the inventory process under the fixed-placement strategies `Level 1'-`Level 6'.

Under the general price dynamics assumption, the inventory path can reach a higher level in the middle of the day, though it is able to {bring the inventory down} at the end of the day.
 As we can see from Figure \ref{fig:InvPath1} and Figure \ref{fig:InvPath2}, the optimal prices for both strategies swings between prices for level 1 and level 6. At the end of the trading day, the optimal ask prices under the general midprice assumption is closer to the level 1 price than under the martingale assumption, and the optimal bid prices are close to the level 6 price under both assumptions, which leads to a faster decrease in the inventory.
For {the two} optimal strategies, the end-of-day inventory is lower than the `Level 1'-`Level 6' policies. This shows the effectiveness of the liquidation penalty $-\lambda I_T^2$ in controlling inventory and avoiding large end of the day costs.

Finally, in Figures \ref{fig:PricePath1} and \ref{fig:PricePath2} we present the intraday optimal ask and bid placement prices $a_{t_{k}}$ and $b_{t_{k}}$ for August 7th when choosing the function $g$ as $g_3$ under, both, the martingale and the general midprice assumptions. We also compare them with the `Level 1'- `Level 6' benchmark policies. The price paths are for three 1-minute time intervals at the beginning of the trading day 10:00$-$10:01, in the middle of the trading day 12:45$-$ 12:46, and at the end of the trading day 15:29$-$15:30, respectively. {These graphs illustrate how the optimal placement strategies compare to the the fixed level strategies at different times of the day. In particular, the bid and ask prices behave as one should expect at the end of the day to control the inventory.}}

\section{Conclusions}
{\Blue In this manuscript, we focus on end-of-day inventory control in a market making problem. We assume the demand to be linear with random slope and intercept, which allows for greater flexibility and uncovers novel features of the resulting optimal control policy. We account for simultaneous arrivals of buy and sell MOs between consecutive market making actions, which also lead to novel patterns of the optimal policy. We allow the market maker to incorporate forecasts of the fundamental price in her placement strategy. Finally, we enable the investor to integrate the information on the arrival of market orders throughout the trading.
The performance of the proposed optimal policy is assessed using historical exchange transaction data throughout an entire year. The optimal strategy derived with the novel model specifications mentioned above yields greater flexibility and better results in our empirical study.} 

There are some key areas for future research based on our results:
\begin{itemize}
	\item It is natural to consider the possibility that the features of the demand functions also depend on the history of market orders $\pmb{e}_{t_{k}}$ rather than being assumed constants as in the current framework.
	
	\item It would be important to drop the assumption of independent between the price changes $S_{t_{k+1}}-S_{t_{k}}$ and the vector $(\onep,\onem,c^+_{t_{k+1}},p^+_{t_{k+1}},c^-_{t_{k+1}},p^-_{t_{k+}})$. 
	
	\item It is natural to consider the continuous limit of the model considered here. Such an extension could help us to consider general inventory penalties. 
	
%
\end{itemize}

\begin{figure}
	\centering
	\begin{subfigure}[b]{1\textwidth}
		\centering
		\includegraphics[width=.9\textwidth]{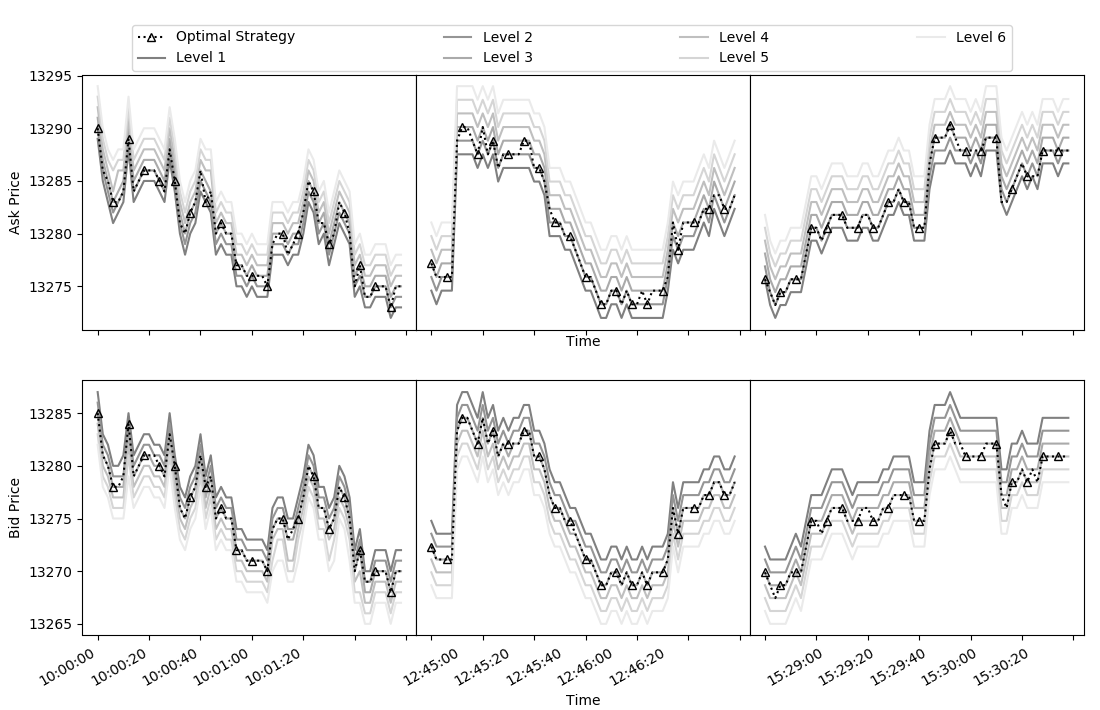}
		\caption{The Intraday Prices Paths.}
		\label{fig:PricePath1} 
	\end{subfigure}
	
	\begin{subfigure}[b]{1\textwidth}
		\centering
		\includegraphics[width=.9\textwidth]{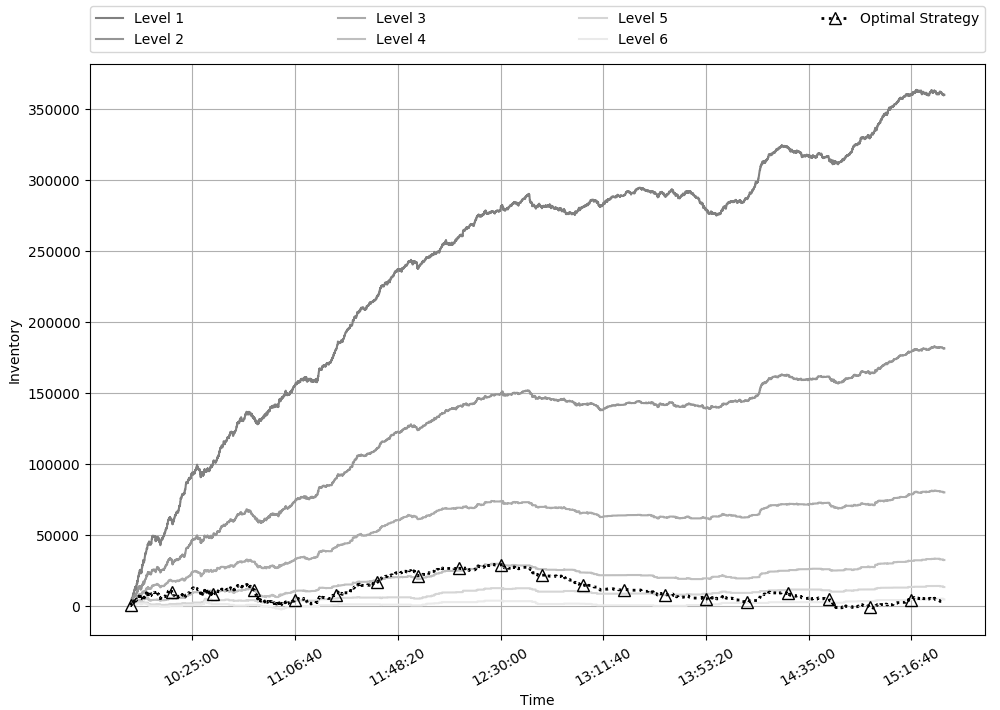}
		\caption{The Intraday Inventory Paths.}
		\label{fig:InvPath1}
	\end{subfigure}
	\vspace{0.3cm}
	\caption[Optimal Paths Martingale Price]{A comparison of the intraday price and inventory paths of the optimal strategy under martingale price dynamics assumption when choosing the function $g$ as $g_3$ and the ones of the benchmark policies on August 7th.}
	\label{fig11}
\end{figure}

\begin{figure}
	\centering
	\begin{subfigure}[b]{1\textwidth}
		\centering
		\includegraphics[width=.9\textwidth]{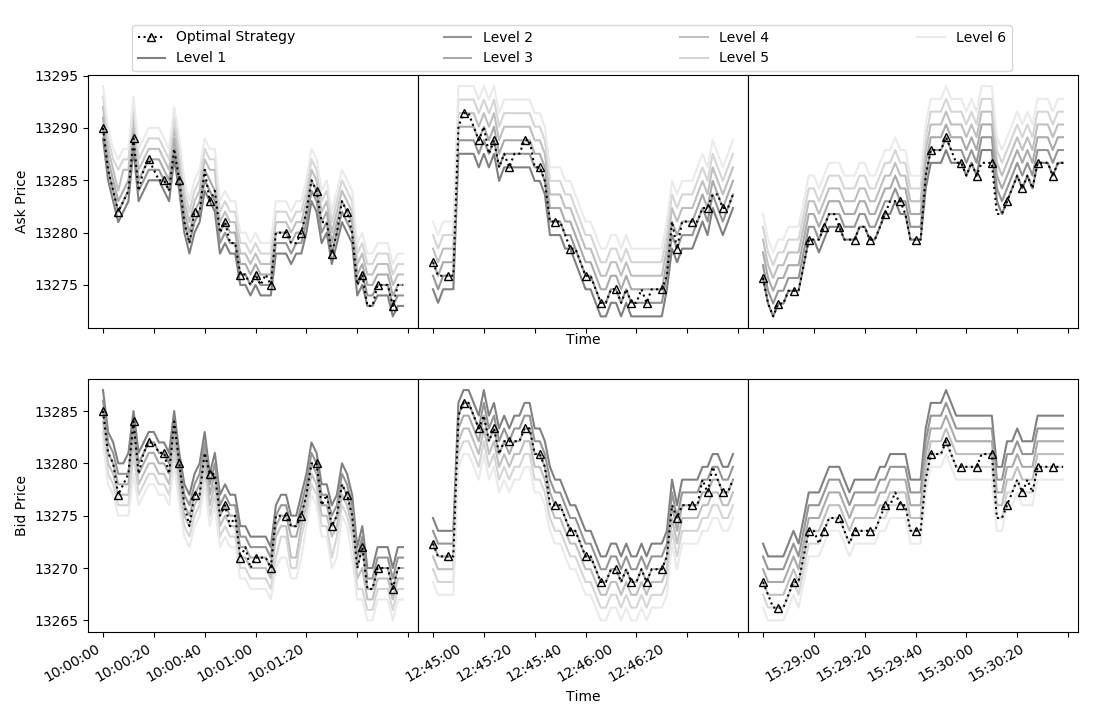}
		\caption{The Intraday Prices Paths.}
		\label{fig:PricePath2} 
	\end{subfigure}
	
	\begin{subfigure}[b]{1\textwidth}
		\centering
		\includegraphics[width=.9\textwidth]{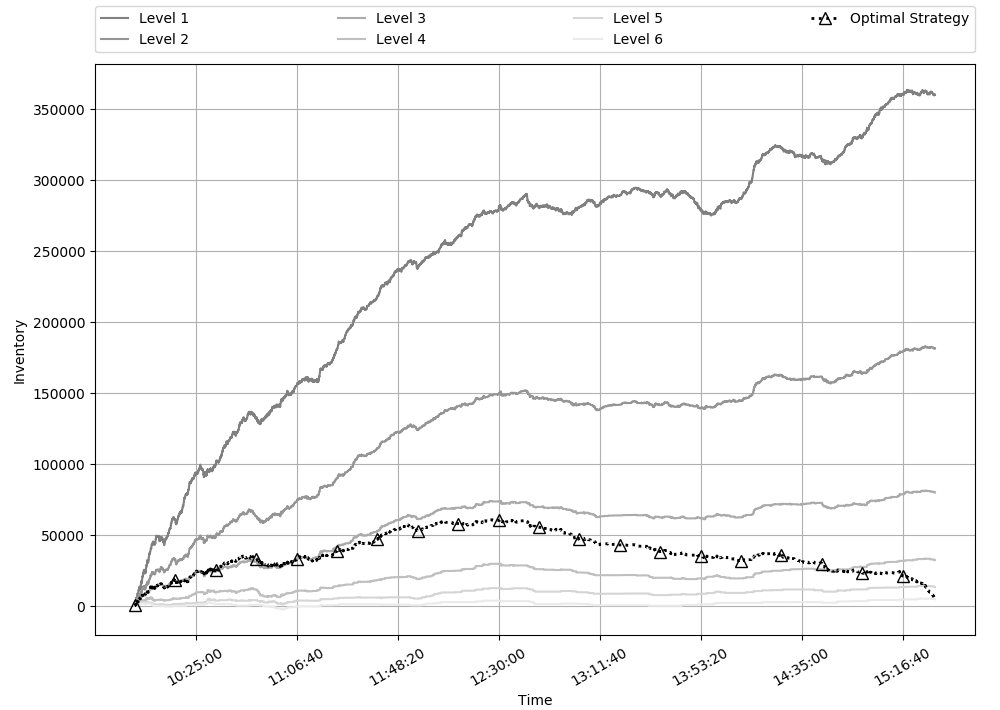}
		\caption{The Intraday Inventory Paths.}
		\label{fig:InvPath2}
	\end{subfigure}
	\vspace{0.3cm}
	\caption[Optimal Paths General Price]{A comparison of the intraday price and inventory paths of the optimal strategy under the general price dynamics assumption when choosing the function $g$ as $g_3$ and the ones of the benchmark policies on August 7th.}
	\label{fig22}
\end{figure}


\appendix

\renewcommand{\theequation}{A-\arabic{equation}}
\setcounter{equation}{0}  

\section{Proofs Of Main Results} \label{appen:section:2}

In this Appendix we provide all the proofs pertaining to Section \ref{ch2}.

\subsection{Proofs of Section \ref{sec:optimal:martingale}: Optimal Strategy for a Martingale Midprice}\label{appdx:martingale:price}

\begin{proof}[Proof of {Theorem} \ref{prop:measurability}]

{The proof is done by backwards induction. First, {note that for $k=N+1$, i.e., at the terminal time $T=t_{N+1}$, the statement {(ii)} is immediate {due to the terminal conditions} $\alpha_{t_{N+1}}=-\lambda$ and ${h}_{t_{N+1}}={g}_{t_{N+1}}=0$. So, it suffices to show the following two assertions:
\begin{enumerate}
	\item[(a)] If the statement (ii) is true for $k=j+1$, then the statements (i) and (iii) are true for $k=j$;
	\item[(b)] If the statement (ii) is true for $k=j+1$ and the statements (i) and (iii) are true for $k=j$, then the statement (ii) is true for $k=j$.
\end{enumerate} 
Let us start to prove the first assertion (a) above.}
To proceed, we consider Equation \eqref{eq113} for $k=j$. Replacing $W_\tjj$ and $I_\tjj$ on the right-hand-side {of \eqref{eq113}} by their corresponding recursive formulas \eqref{eqi11}-\eqref{eqw11}, we obtain
\begin{align}\label{BgnBlEq}
\begin{split}
V_{\tj}&=\sup_{L_{\tj}^\pm\in\mathcal{A}_{\tj}}\mathbb{E}\bigg\{{\Blue W_{t_j}}+\sum_{\delta=\pm}(\sj+\delta\ldj)\delta\onedj\cdj(\Pdj-\ldj)\\
&\qquad\qquad\qquad+\alpha_{\tjj}\big[I_\tj-\sum_{\delta=\pm}\delta\onedj\cdj(\Pdj-\ldj)\big]^2\\
&\qquad\qquad\qquad+\sjj\big[I_\tj-\sum_{\delta=\pm}\delta\onedj\cdj(\Pdj-\ldj)\big]\\
&\qquad\qquad\qquad+ {h}_{\tjj}\big[I_\tj-\sum_{\delta=\pm}\delta\onedj\cdj(\Pdj-\ldj)\big]+{g_{\tjj}}\bigg|\Fj\bigg\}.
\end{split}
\end{align}
Expanding the squares inside the expectation and rearranging terms, {we can write:}\small
\begin{align}
\begin{split}
\label{eq:expNMG-0}
V_{\tj}=\sup_{L_{\tj}^\pm\in\mathcal{A}_{\tj}}\Ex\Bigg\{&{\Blue W_{t_j}}+\sum_{\delta=\pm}\onedj\big[-\cdj(\ldj)^2+(\cdj\Pdj-\delta\cdj\sj)\ldj+\delta\cdj\Pdj\sj\big]\\
&+\alpha_{\tjj}\bigg\{I_\tj^2+\sum_{\delta=\pm}\onedj\Big[(\cdj)^2(\ldj)^2+\big(2\delta I_\tj\cdj-2(\cdj)^2\Pdj\big)\ldj\\
&\qquad\qquad\quad\qquad\qquad\qquad+(\cdj\Pdj)^2-2\delta I_\tj\cdj\Pdj\Big]\\
&\qquad\qquad+2\onepj\onemj\cpj\cmj(-\lpj\lmj+\Ppj\lmj+\Pmj\lpj-\Ppj\Pmj)\bigg\}\\
&+\sjj\bigg[I_\tj+\sum_{\delta=\pm}\onedj(-\delta\cdj\Pdj+\delta\cdj\ldj)\bigg]
\\
&+ {h}_{\tjj}I_\tj+\sum_{\delta=\pm}\onedj(-\delta {h}_{\tjj}\cdj\Pdj+\delta {h}_{\tjj}\cdj\ldj)+{g_{\tjj}}\Bigg|\Fj\Bigg\}.
\end{split}
\end{align}
\normalsize
We need to compute the conditional {expectation of each term}  above. Recall from Assumption \ref{assump:cp} that $\Gtj=\sigma(\Fj, \onepj,\onemj)$ and that, by our backwards induction hypothesis, {$\alpha_{t_{j+1}},h_{t_{j+1}}\in\mathcal{H}_{t_{j+1}}^\varpi=\sigma(\pmb{e}_{t_{j+1}})\subset\ \Gtj$}.  The idea is to apply the law of iterated expectations, $\Ex[\cdot|\Fj]=\Ex\big[\Ex[\cdot|\Gtj]\big|\Fj\big]$.  We can then pull out all the $\Gtj$-measurable factors (e.g., $\onepj$, $\onemj$, $\alpha_{t_{j+1}}$, {$h_{t_{j+1}}$},  $L_{t_j}^\pm$, $I_{t_j}$, $S_{t_j}$, etc.) from the inside expectation $\Ex[\cdot |\;\Gtj]$. We also use the fact that $(\cpj,p_{t_{j+1}}^+)$ and $(\cmj,p_{t_{j+1}}^-)$ are conditionally independent given $\Gtj$ (see Assumption \ref{assump:cp}), in addition to Eq.~\eqref{eqn:mucp}. 
As an example, we will explicitly show the computations of two terms in (\ref{eq:expNMG-0}). The remaining terms follow similar arguments. Consider $A:=\mathbb{E}\big[\alpha_\tjj\onepj\onemj\cpj\cmj\Ppj{\Pmj}\big|\Fj\big]${:}
\begin{alignat*}{3}
	A&=\mathbb{E}\bigg[\;\Ex\big[\alpha_\tjj\onepj\onemj\cpj\cmj\Ppj\Pmj\big|\Gtj\big]\bigg|\Fj\bigg]\\
	&=\mathbb{E}\bigg[\onepj\onemj\alpha_\tjj\mathbb{E}\big[\cpj\Ppj\cmj\Pmj\big|\Gtj\big]\Big|\Fj\bigg]&&\\
& =\mathbb{E}\bigg[\onepj\onemj\alpha_\tjj\mathbb{E}\big[\cpj\Ppj\big|\Gtj\big]\mathbb{E}\big[\cmj{\Pmj}\big|\Gtj\big]\Big|\Fj\bigg]&&\\
&
=\mathbb{E}\Big[\onepj\onemj\alpha_\tjj\mathbb{E}\big[\cpj\Ppj\big|\onepj\big]\mathbb{E}\big[\cmj{\Pmj}\big|\onemj\big]\Big|\Fj\Big]&&.
\end{alignat*}
Now, since $\alpha_{t_{j+1}}\in\mathcal{H}_{t_{j+1}}^\varpi$, we have 
\begin{equation}\label{ExprAlphaNd}
	\alpha_{t_{j+1}}=\Phi(\mathbbm{1}_{t_{j+1}}^+,\mathbbm{1}_{t_{j+1}}^-,\dots,\mathbbm{1}_{t_{j+2-\varpi}}^+,\mathbbm{1}_{t_{j+2-\varpi}}^-),
\end{equation} 
for some function $\Phi$. Using the well-know property 
\begin{equation}\label{eqn:prop:cond:exp}
\mathbb{E}(f(X,Y)|\mathcal{F})=\sum_x f(x,Y)\mathbb{P}(X=x|\mathcal{F})
\end{equation}
for a discrete r.v. $X$ and a $\mathcal{F}$-measurable variable $Y$, we can write:
\begin{alignat*}{3}
	A&=\mathbb{E}\Big[\onepj\onemj\alpha_\tjj\mathbb{E}\big[\cpj\Ppj\big|\onepj\big]\mathbb{E}\big[\cmj{\Pmj}\big|\onemj\big]\Big|\Fj\Big]&&\\
	&=\sum_{i,\ell\in\{0,1\}} i\ell \Phi(i,\ell,\mathbbm{1}_{t_{j}}^\pm,\dots,\mathbbm{1}_{t_{j+2-\varpi}}^\pm)\mathbb{E}\big[\cpj\Ppj\big|\onepj=i\big]\mathbb{E}\big[\cmj{\Pmj}\big|\onemj=\ell\big]\\
	&\qquad\qquad\times\mathbb{P}\big[\onepj=i,\onemj=\ell\big|\Fj\big]&\\
	&=\mu_{cp}^{+}\;\mu_{cp}^{-}\Phi(1,1,\mathbbm{1}_{t_{j}}^\pm,\dots,\mathbbm{1}_{t_{j+2-\varpi}}^\pm)\mathbb{P}\big[\onepj=1,\onemj=1\big|\Fj\big]&\\
& =\mu_{cp}^{+}\;\mu_{cp}^{-}\;\alpha_\tjj^{1,1}\pi_{t_{j+1}}(1,1),&&
\end{alignat*}
where we used the definition $\pi_{\tjj}(1,1) = \mathbbm{P}(\onepj=1,\onemj=1|\F)$ and that
\begin{align*}
\alpha_\tjj^{1,1}&=\mathbb{E}[\alpha_\tjj \,|\,\Fj,\mathbbm{1}^+_\tjj=1,\mathbbm{1}^-_\tjj=1]\\
&=\mathbb{E}[\Phi(\mathbbm{1}_{t_{j+1}}^+,\mathbbm{1}_{t_{j+1}}^-,\mathbbm{1}_{t_{j}}^\pm,\dots,\mathbbm{1}_{t_{j+2-\varpi}}^\pm)\,\big|\,\Fj,\mathbbm{1}^+_\tjj=1,\mathbbm{1}^-_\tjj=1]\\
&=\mathbb{E}[\Phi(1,1,\mathbbm{1}_{t_{j}}^\pm,\dots,\mathbbm{1}_{t_{j+2-\varpi}}^\pm)\,\big|\,\Fj,\mathbbm{1}^+_\tjj=1,\mathbbm{1}^-_\tjj=1]
=\Phi(1,1,\mathbbm{1}_{t_{j}}^\pm,\dots,\mathbbm{1}_{t_{j+2-\varpi}}^\pm).
\end{align*}

Similarly, consider $B:=\mathbb{E}\big[S_{t_{j+1}}\onedj\cdj\Pdj\big|\Fj\big]$, {for $\delta\in\{+,-\}$}. Then, by Assumption \ref{assump:price} and the martingale condition $\E[S_{t_{j+1}}-S_{t_j}|\Fj]=0$:
\begin{align*}
B& =\mathbb{E}\bigg[(S_\tjj-S_\tj)\onedj\cdj\Pdj\big|\Fj\bigg]+S_\tj \mathbb{E}\bigg[\onedj\cdj\Pdj\big|\Fj\bigg]\\
&=\mathbb{E}\bigg[S_\tjj-S_\tj\big|\Fj\bigg]\mathbb{E}\bigg[\onedj\cdj\Pdj\big|\Fj\bigg]+S_\tj \mathbb{E}\bigg[\onedj\cdj\Pdj\big|\Fj\bigg]\\
&=S_\tj \mathbb{E}\bigg[\onedj\cdj\Pdj\big|\Fj\bigg].
\end{align*}
We then proceed as before:
\begin{align*}
B&=S_\tj \mathbb{E}\bigg[\mathbb{E}\big[\onedj\cdj\Pdj\big|\Gtj\big]\big|\Fj\bigg]=S_\tj \mathbb{E}\bigg[\onedj\mathbb{E}\big[\cdj\Pdj\big|\onedj\big]\big|\Fj\bigg]\\
&=S_{t_j}\sum_{\ell=0}^1 \ell \mathbb{E}\big[\cdj\Pdj\big|\onedj=\ell\big]\mathbb{P}\big[\onedj=\ell \big|\Fj\big]=S_{t_{j}}\mu_{cp}^{\delta} \pi_\tjj^\delta.
\end{align*}
After computing the conditional expectations therein  and 
substituting $V_{t_k}$ in the LHS of \eqref{eq:expNMG-0}, we get the equation:
\begin{align}\label{eqn:Quadratic:L}
\begin{split}
&\alpha_{\tj}I_{t_j}^2+S_{t_j} I_{t_j}+{h}_{\tj}I_{t_j}+{g_{\tj}}\\
&=\sup_{L_{t_j}^\pm\in\mathcal{A}_{t_j}}\ \Big[ \sum_{\delta=\pm}{\pi^\delta_\tjj}\bigg\{(\alpha^{1\delta}_{\tjj}\mut^\delta-\muo^\delta)(\ldj)^2+{\alpha^{1\delta}_{\tjj}(\mutt^\delta-2\delta\muoo^\delta I_\tj)-\delta {h}^{1\delta}_{\tjj}\muoo^\delta}\\
&\qquad\qquad\qquad\qquad\qquad+\big[\muoo^\delta+\delta {h}^{1\delta}_{\tjj}\muo^\delta+\alpha^{1\delta}_{\tjj}(2\delta\muo^\delta I_\tj-2\muto^\delta)\big]\ldj\bigg\}\\ 
&\qquad\qquad\qquad+2\alpha^{1,1}_{\tjj}{\pi_\tjj(1,1)}\Big(\muoo^+\muo^-\lmj+\muo^+\muoo^-\lpj-\muo^+\muo^-\lpj\lmj-\muoo^+\muoo^-\Big)\\
&\qquad\qquad\qquad+\alpha^0_{\tjj}I_\tj^2+I_\tj\sj+ {h}^0_{\tjj} I_\tj+{g^{0}_{\tjj}}\Big].
\end{split}
\end{align}
The expression inside the outer brackets on the right-hand side of the above equation is a quadratic function in $(L_{t_j}^+,L_{t_j}^-)$. Simple partial differentiation  and some simplifications (see {\eqref{PDWRTL}} below) show that the unique \emph{stationary points} of the quadratic functional are given by}
	\begin{align*}
	\begin{split}
	L_{\tj}^{+,*}={}^{\scaleto{(1)}{5pt}}\!A^+_{\tj}I_\tj+{}^{\scaleto{(2)}{5pt}}\!A^+_{\tj}+{}^{\scaleto{(3)}{5pt}}\!A^+_{\tj},\qquad 
	L_{\tj}^{-,*}=-{}^{\scaleto{(1)}{5pt}}\!A^-_{\tj}I_\tj-{}^{\scaleto{(2)}{5pt}}\!A^-_{\tj}+{}^{\scaleto{(3)}{5pt}}\!A^-_{\tj},
	\end{split}
	\end{align*}
{where} ${}^{\scaleto{(1)}{5pt}}\!A^\pm_{\tj}$, ${}^{\scaleto{(2)}{5pt}}\!A^\pm_{\tj}${,} and ${}^{\scaleto{(3)}{5pt}}\!A^\pm_{\tj}$ {satisfy} the relations given in \eqref{eq:A1} {with} the auxiliary quantities {$\rho^{\pm}_{t_j}$, $\psi_{t_j}^\pm${,} and $\gamma_{t_j}$} satisfying the relations given in \eqref{eqn:gamma:eta}.
{It remains to show that $L_{\tj}^{+,*}$ and $L_{\tj}^{-,*}$ are in fact the global maxima points of the quadratic function inside the outer brackets on the right-hand side of \eqref{eqn:Quadratic:L}. This is deferred to Corollary \ref{cor:candidates}. 
Substituting the values of $\big(L_{\tj}^{+,*}, L_{\tj}^{-,*}\big)$ above into Equation {\Blue \eqref{eqn:Quadratic:L}} and equating the coefficients of $I_{t_j}^2$, $I_{t_j}$, {and the remaining terms that do not depend on $I_{\tj}$ or $S_{\tj}$,} on both sides of the Equation \ref{eqn:Quadratic:L}, 
we obtain Equations \eqref{eqn:alphak}-\eqref{eq:g}, which proves {parts (i) and (iii)} of the theorem for $k=j$.

We now prove {the assertion (b) stated at the beginning of the proof; i.e., we show that {if statement (ii) is true for $k=j+1$ and statements (i) and (iii) are} true for $k=j$, then {statement (ii)} is true for $k=j$ (that is, 
$\alpha_{t_j},h_{t_j},{g_{t_j}}\in\Hjj$)}. {We show the details for $\alpha_{t_j}\in\Hjj$ (we can similarly prove $h_{t_j},g_{t_j}\in\Hjj$).} 
First, notice that by Assumption \ref{assump:cp}-(iv),  $\pi_{t_{j+1}}^\pm,\pi_{t_{j+1}}(1,1)\in\Hjj$. Furthermore, by Assumption \ref{assump:cp}-(iv) again and the representation \eqref{ExprAlphaNd}, which follows from our backward induction assumption $\alpha_{t_{j+1}}\in\mathcal{H}_{t_{j+1}}^\varpi$, we can compute the variables $\alpha_\tjj^0$, $\alpha_\tjj^{1+}$, and $\alpha_\tjj^{1,1}$, defined as in (\ref{Dfnalh0}), as follows:\footnote{{Note that Assumption \ref{assump:cp}-(iv) actually implies that $\mathbbm{P}(\onep=i,\onem=j|\F)=\mathbbm{P}(\onep=i,\onem=j|\Hkk)$ for all $i,j$ and not only for $i=j=1$.}}
\begin{equation}\label{CndCmpAlph}
\begin{aligned}
\alpha_\tjj^0
&=\sum_{i,\ell\in\{0,1\}}\Phi(i,\ell,\mathbbm{1}_{t_{j}}^\pm,\dots,\mathbbm{1}_{t_{j+2-\varpi}}^\pm)\mathbb{P}\big[\onepj=i,\onemj=\ell\big|\Fj\big]\\
&=\sum_{i,\ell\in\{0,1\}}\Phi(i,\ell,\mathbbm{1}_{t_{j}}^\pm,\dots,\mathbbm{1}_{t_{j+2-\varpi}}^\pm)\mathbb{P}\big[\onepj=i,\onemj=\ell\big|\Hjj\big]\\
\alpha_\tjj^{1+} 
&=\sum_{\ell\in\{0,1\}}\Phi(1,\ell,\mathbbm{1}_{t_{j}}^\pm,\dots,\mathbbm{1}_{t_{j+2-\varpi}}^\pm)\mathbb{P}\big[\onemj=\ell\big|\Fj,\onepj=1\big]\\
&=\sum_{\ell\in\{0,1\}}\Phi(1,\ell,\mathbbm{1}_{t_{j}}^\pm,\dots,\mathbbm{1}_{t_{j+2-\varpi}}^\pm)\frac{\mathbb{P}\big[\onepj=1,\onemj=\ell\big|\Hjj\big]}{\mathbb{P}\big[\onepj=1\big|\Hjj\big]}\\
\alpha_\tjj^{1,1} 
&=\Phi(1,1,\mathbbm{1}_{t_{j}}^\pm,\dots,\mathbbm{1}_{t_{j+2-\varpi}}^\pm).
\end{aligned}
\end{equation}
Therefore, it is now clear that 
$\alpha_\tjj^0,\alpha_\tjj^{1\pm},\alpha_\tjj^{1,1}\in\Hjj$. 
Using an identical argument, we can conclude that $h_\tjj^0,h_\tjj^{1\pm},h_\tjj^{1,1}\in\Hjj$.
Then, since $\mu_{c^mp^n}^\pm$ are constants, we can easily see that ${}^{\scaleto{(1)}{5pt}}\!A^\pm_{\tj},{}^{\scaleto{(2)}{5pt}}\!A^\pm_{\tj}$ and ${}^{\scaleto{(3)}{5pt}}\!A^\pm_{\tj}$ are $\Hjj$-measurable. {From Eq.~\eqref{eqn:alphak}}, we conclude that $\alpha_{t_j}$ is $\Hjj$-measurable random variables. Finally, we conclude the validity of the statement {(ii)} of the theorem for $k=j$.}
\end{proof}


\begin{proof}[Proof of Lemma \ref{lemma:alpha}] 

Substituting the value of ${}^{\scaleto{(1)}{5pt}}\!{A}^\pm_{\tk}$, defined in Eq.~(\ref{eq:A1}), into the recursive Eq.~(\ref{eqn:alphak}), we get that 
\begin{equation}\label{eqn:alpha:decompos}
\alpha_\tk=\alpha_{t_{k+1}}^0+N_k/\gamma_{t_k},
\end{equation}
with 
\begin{equation}\label{eqn:Nk:Dk}
\begin{aligned}
N_k&= (\alpha_\tkk^{1+}\muo^+\pi_\tkk^+)^2 \pi_\tkk^- (\alpha_\tkk^{1-}\mut^- - \muo^-) +(\alpha_\tkk^{1-}\muo^-\pi_\tkk^-)^2 \pi_\tkk^+ (\alpha_\tkk^{1+}\mut^+ - \muo^+)\\
&\quad -2\alpha_\tkk^{1+}\alpha_\tkk^{1-}{\pi^+_{\tkk}}{\pi^-_{\tkk}}\pi_\tkk(1,1)\alpha^{1,1}_{\tkk}(\muo^+\muo^-)^2,\\
\gamma_{t_k}&=\big[{\pi_\tkk(1,1)}\alpha^{1,1}_{\tkk}\muo^+\muo^-\big]^2-{\pi^+_{\tkk}}{ \pi^-_{\tkk}}(\alpha^{1+}_{\tkk}\mut^+-\muo^+)(\alpha^{1-}_{\tkk}\mut^--\muo^-).
\end{aligned}
\end{equation}
Note that for $k=N+1$, we have that $\alpha_{t_{N+1}}=\alpha_T=-\lambda<0$ and thus $\alpha_{t_{N+1}}<0$, 
as $\lambda$ is a positive constant.
{By} backwards induction, to prove the lemma it suffices to show that $N_k/\gamma_{t_k} \in (0,-\alpha_\tkk^0)$, 
whenever $\alpha_\tkk <0$. The {\Blue remaining proof} is then divided into three smaller subparts: proving that, for $k\in\{0,1,\ldots,N\}$, (i) $\gamma_{t_k}<0$, (ii) $N_k<0$, and that (iii) $N_k/\gamma_{t_k}<-\alpha_{t_{k+1}}^0$.
\begin{itemize}
	\item[(i)] {\Blue Clearly,}
	\begin{equation} \label{eqn:alphaplus}
\begin{aligned}
\alpha_\tkk^{1+} &=\Ex[\alpha_\tkk |\F,\onep=1]\\
			&=\Ex[\alpha_\tkk |\F,\onep=1,\onem=1]\Px[\onem=1|\F,\onep=1] \\
			&\quad+ \Ex[\alpha_\tkk |\F,\onep=1,\onem=0]\Px[\onem=0|\F,\onep=1]\\
			&=\alpha_\tkk^{1,1}\frac{\pi_\tkk(1,1)}{\pi_\tkk^+}+ {\alpha_\tkk^{1,0}}\left(1-\frac{\pi_\tkk(1,1)}{\pi_\tkk^+}\right){\Blue ,}
\end{aligned}
\end{equation}
and, therefore, 
\begin{equation}\label{eqn:alpha1p}
\alpha_\tkk^{1+}\pi_\tkk^+ =\alpha_\tkk^{1,1}\pi_\tkk(1,1) + {\alpha_\tkk^{1,0}}\Big(\pi_\tkk^+-\pi_\tkk(1,1)\Big).
\end{equation}
Analogously, we have 
\begin{align}\label{eqn:alpha1m}
\alpha_\tkk^{1-}\pi_\tkk^- &=\alpha_\tkk^{1,1}\pi_\tkk(1,1) + {\alpha_\tkk^{0,1}}\Big(\pi_\tkk^--\pi_\tkk(1,1)\Big).
\end{align}
On the other hand, since by assumption $\alpha_{t_{k+1}}<0$,
\begin{align*}
&\alpha_\tkk^{1,1}=\mathbb{E}[\alpha_\tkk |\F,\onep=1,\onem=1]\leq 0,\\
&\Ex[\alpha_\tkk |\F,\onep=1,\onem=0]\leq{}0,\quad 
\mathbb{E}[\alpha_\tkk |\F,\onep=0,\onem=1]\leq 0, 
\\
&
\pi_\tkk(1,1)=\mathbbm{P}(\onep=1,\onem=1|\F)\leq
\mathbbm{P}(\mathbbm{1}_\tkk^\pm=1|\F)=\pi^\pm_{\tkk},
\end{align*}
and, thus, by Eqs.~\eqref{eqn:alpha1p}-\eqref{eqn:alpha1m}, 
\begin{equation}\label{eqn:comparison:alpha}
\alpha_\tkk^{1+} \pi_\tkk^+  \leq \alpha_\tkk^{1,1} \pi_\tkk(1,1) \leq 0,\qquad
\alpha_\tkk^{1-} \pi_\tkk^-  \leq \alpha_\tkk^{1,1} \pi_\tkk(1,1) \leq 0. 
\end{equation}
Since $\mut^\pm\geq(\muo^\pm)^2$, these equations imply that,
\begin{equation}\label{eq:Dneg}
\begin{aligned}
\gamma_{t_k}  &\leq {\pi^+_{\tkk}}{\pi^-_{\tkk}}\alpha_{\tkk}^{1+}\alpha_{\tkk}^{1-}(\muo^+\muo^-)^2 -  {\pi^+_{\tkk}}{ \pi^-_{\tkk}}(\alpha^{1+}_{\tkk}\mut^+-\muo^+)(\alpha^{1-}_{\tkk}\mut^--\muo^-) \\
&\leq {\pi^+_{\tkk}}{\pi^-_{\tkk}}\alpha_{\tkk}^{1+}\alpha_{\tkk}^{1-}\mut^+\mut^- -  {\pi^+_{\tkk}}{ \pi^-_{\tkk}}(\alpha^{1+}_{\tkk}\mut^+-\muo^+)(\alpha^{1-}_{\tkk}\mut^--\muo^-) \\
&= {\pi^+_{\tkk}}{\pi^-_{\tkk}}\big[\alpha_{\tkk}^{1+}\mut^+\muo^-+\alpha_\tkk^{1-}{\Blue \mut^-\muo^+} -\muo^+\muo^-\big]< 0,
\end{aligned}
\end{equation}	
where for the second- and last inequality above we used that $\alpha_{\tkk}^{1\pm}=\mathbb{E}[\alpha_\tkk \,|\,\F,\mathbbm{1}^\pm_\tkk=1]\leq{}0$ since, by assumption, $\alpha_\tkk<0$.

\item[(ii)] Let us first rearrange the terms in (\ref{eqn:Nk:Dk}) as follows:
\begin{equation}\label{eqappen3a}
\begin{aligned}
N_k&= (\alpha_\tkk^{1+}\muo^+\pi_\tkk^+)^2 \pi_\tkk^- (\alpha_\tkk^{1-}\mut^- - \muo^-) +(\alpha_\tkk^{1-}\muo^-\pi_\tkk^-)^2 \pi_\tkk^+ (\alpha_\tkk^{1+}\mut^+ - \muo^+)\\
&\quad\qquad -2\alpha_\tkk^{1+}\alpha_\tkk^{1-}{\pi^+_{\tkk}}{\pi^-_{\tkk}}\pi_\tkk(1,1)\alpha^{1,1}_{\tkk}(\muo^+\muo^-)^2 \\
&= \alpha_{\tkk}^{1+}\alpha_{\tkk}^{1-} \pi_\tkk^+ \pi_\tkk^- (\muo^+)^2 \big[\alpha_{\tkk}^{1+} \pi_\tkk^+ \mut^- - \alpha_\tkk^{1,1} \pi_\tkk(1,1)(\muo^-)^2       \big] \\
&\quad\qquad +
\alpha_{\tkk}^{1+}\alpha_{\tkk}^{1-} \pi_\tkk^+ \pi_\tkk^- (\muo^-)^2 \big[\alpha_{\tkk}^{1-} \pi_\tkk^- \mut^+ - \alpha_\tkk^{1,1} \pi_\tkk(1,1)(\muo^+)^2       \big] \\
&\quad\qquad -
(\alpha_\tkk^{1+}\muo^+\pi_\tkk^+)^2{\pi_\tkk^- }\muo^- - (\alpha_\tkk^{1-}\muo^-\pi_\tkk^-)^2{\pi_\tkk^+} \muo^+   .\end{aligned}
\end{equation}
Then, using once more the inequalities \eqref{eqn:comparison:alpha},
\begin{equation}\label{eqappen3}
\begin{aligned}
N_k&\leq 
\alpha_{\tkk}^{1+}\alpha_{\tkk}^{1-} \pi_\tkk^+ \pi_\tkk^- (\muo^+)^2\pi_\tkk(1,1)\alpha^{1,1}_{\tkk}[\mut^- - (\muo^-)^2] \\
&\quad\qquad +
\alpha_{\tkk}^{1+}\alpha_{\tkk}^{1-} \pi_\tkk^+ \pi_\tkk^- (\muo^+)^2\pi_\tkk(1,1)\alpha^{1,1}_{\tkk}[\mut^+ - (\muo^+)^2] \\
&\quad\qquad -
(\alpha_\tkk^{1+}\muo^+\pi_\tkk^+)^2{\pi_\tkk^-}\muo^- - (\alpha_\tkk^{1-}\muo^-\pi_\tkk^-)^2{\pi_\tkk^+}\muo^+ \\
&\leq 
-(\alpha_\tkk^{1+}\muo^+\pi_\tkk^+)^2{\pi_\tkk^-}\muo^- - (\alpha_\tkk^{1-}\muo^-\pi_\tkk^-)^2{\pi_\tkk^+}\muo^+ <0,
\end{aligned}
\end{equation}
where in the second inequality we used once more that $\max\{\alpha_\tkk^{1,1},\alpha_\tkk^{+1},\alpha_\tkk^{-1}\}\leq{}0$.

	\item[(iii)] Note that, since we already proved that $\gamma_{t_k}<0$, proving that $N_k/\gamma_{t_k} <-\alpha_\tkk^0$ a.s. whenever $\alpha_\tkk <0$ is equivalent to showing that $\alpha_\tkk^0 \gamma_{t_k} +N_k >0$ a.s. whenever $\alpha_\tkk <0$. Define
	\begin{align*}
	\varphi(\bx, \by):=&\; \alpha_{t_{k+1}}^0 \Big(\big[{\pi_\tkk(1,1)}\alpha^{1,1}_{\tkk}\muo^+\muo^-\big]^2-{\pi^+_{\tkk}}{ \pi^-_{\tkk}}(\alpha^{1+}_{\tkk}\bx-\muo^+)(\alpha^{1-}_{\tkk}\by-\muo^-)\Big)\\
	&\qquad + (\alpha_\tkk^{1+}\muo^+\pi_\tkk^+)^2 \pi_\tkk^- (\alpha_\tkk^{1-}\by - \muo^-) +(\alpha_\tkk^{1-}\muo^-\pi_\tkk^-)^2 \pi_\tkk^+ (\alpha_\tkk^{1+}\bx - \muo^+)\\
&\qquad -2\alpha_\tkk^{1+}\alpha_\tkk^{1-}{\pi^+_{\tkk}}{\pi^-_{\tkk}}\pi_\tkk(1,1)\alpha^{1,1}_{\tkk}(\muo^+\muo^-)^2
	\end{align*}
	Notice that $\varphi(\mut^+,\mut^-)=\alpha_\tkk^0 \gamma_{t_k} +N_k$. Taking partial derivatives,
\begin{equation}\label{eqn:partial:varphi:x}
\begin{split}
\frac{\partial \varphi}{\partial \bx} &= -\alpha_\tkk^0 \pi_\tkk^+ \pi_\tkk^- \Big[\alpha_{\tkk}^{1+}\alpha_{\tkk}^{1-}\by - \alpha_{\tkk}^{1+}\muo^-\Big] + \Big(\alpha_\tkk^{1-}\muo^-\pi_\tkk^-\Big)^2 \pi_\tkk^+ \alpha_\tkk^{1+}  \\
&=  \alpha_{\tkk}^{1+}\alpha_{\tkk}^{1-} \pi_\tkk^+ \pi_\tkk^- \Big[\alpha_\tkk^{1-}\pi_\tkk^-(\muo^-)^2 - \alpha_\tkk^0 \by\Big] +\alpha_\tkk^0 \pi_\tkk^+ \pi_\tkk^- \alpha_{\tkk}^{1+}\muo^-.
\end{split}
\end{equation}
\begin{equation}\label{eqn:partial:varphi:y}
\begin{split}
\frac{\partial \varphi}{\partial \by} &= -\alpha_\tkk^0 \pi_\tkk^+ \pi_\tkk^- \Big[\alpha_{\tkk}^{1+}\alpha_{\tkk}^{1-}\bx - \alpha_{\tkk}^{1-}\muo^+\Big] + \Big(\alpha_\tkk^{1+}\muo^+\pi_\tkk^+\Big)^2 \pi_\tkk^- \alpha_\tkk^{1-}  \\
&=  \alpha_{\tkk}^{1+}\alpha_{\tkk}^{1-} \pi_\tkk^+ \pi_\tkk^- \Big[\alpha_\tkk^{1+}\pi_\tkk^+(\muo^+)^2 - \alpha_\tkk^0 \bx\Big] +\alpha_\tkk^0 \pi_\tkk^+ \pi_\tkk^- \alpha_{\tkk}^{1-}\muo^+.
\end{split}
\end{equation}
Recall that by assumption $\alpha_{t_{k+1}}<0$ a.s. and thus,
\begin{align} \nonumber
\alpha_\tkk^0 &= \mathbb{E}\big[\alpha_\tkk\big|\F\big] = \mathbb{E}\bigg[\mathbb{E}\big[\alpha_\tkk|\F,\onen^\pm\big] \bigg|\F\bigg] \\ \label{ineq:alpha0}
&= \alpha_\tkk^{1\pm}\pi_\tkk^\pm + \mathbb{E}\big[\alpha_\tkk|\F,\onen^\pm=0\big](1-\pi_\tkk^\pm)\leq \alpha_\tkk^{1\pm}\pi_\tkk^\pm.
\end{align}
Using the last inequality in \eqref{eqn:partial:varphi:x}, we obtain that
\[
\frac{\partial\varphi}{\partial \bx}  \geq \alpha_{\tkk}^{1+}\alpha_{\tkk}^{1-} \pi_\tkk^+ \pi_\tkk^-  \alpha_\tkk^0 [(\muo^-)^2 -\by] + \alpha_\tkk^0 \pi_\tkk^+ \pi_\tkk^- \alpha_{\tkk}^{1+}\muo^-. \nonumber \\
\]
Therefore, using that $(\mu_c^-)^2\leq\mu^{-}_{c^2}$ and $\max\{\alpha_\tkk^0,\alpha_\tkk^{1+},\alpha_\tkk^{1-}\}\leq{}0$, we have
\begin{equation}\label{eqn:partial:varphi:bx}
\begin{split}
\frac{\partial\varphi}{\partial \bx}\big(\;\sbt\;,\mu_{c^2}^-\big) &\geq \alpha_{\tkk}^{1+}\alpha_{\tkk}^{1-} \pi_\tkk^+ \pi_\tkk^-  \alpha_\tkk^0 [(\muo^-)^2 -\mu_{c^2}^-] + \alpha_\tkk^0 \pi_\tkk^+ \pi_\tkk^- \alpha_{\tkk}^{1+}\muo^-\\
& \geq \alpha_\tkk^0 \pi_\tkk^+ \pi_\tkk^- \alpha_{\tkk}^{1+}\muo^-  \geq 0,
\end{split}
\end{equation}
which implies that $\varphi(\;\sbt\;,\mu_{c^2}^-)$ is a non-decreasing  function of the first parameter. Thus, 
\begin{equation}\label{eqn:bound:varphi:1}
\varphi\big((\mu_c^+)^2,\mu_{c^2}^-\big)\leq \varphi(\mut^+,\mut^-)=\alpha_\tkk^0 \gamma_{t_k} +N_k.
\end{equation}
Using \eqref{eqn:partial:varphi:y}, \eqref{ineq:alpha0}, and $\max\{\alpha_\tkk^0,\alpha_\tkk^{1+},\alpha_\tkk^{1-}\}\leq{}0$,
\begin{align*}
\frac{\partial \varphi}{\partial \by}\big((\mu_c^+)^2,\;\sbt\;\big) &=  \alpha_{\tkk}^{1+}\alpha_{\tkk}^{1-} \pi_\tkk^+ \pi_\tkk^- [\alpha_\tkk^{1+}\pi_\tkk^+(\muo^+)^2 - \alpha_\tkk^0 (\muo^+)^2] +\alpha_\tkk^0 \pi_\tkk^+ \pi_\tkk^- \alpha_{\tkk}^{1-}\muo^+ \\
&\geq \alpha_\tkk^0 \pi_\tkk^+ \pi_\tkk^- \alpha_{\tkk}^{1-}\muo^+\geq 0, \nonumber
\end{align*}
which shows that $\varphi\big((\mu_c^+)^2,\;\sbt\;\big)$ is a nondecreasing function in the second argument. Using that $(\mu_c^-)^2\leq  \mu_{c^2}^-$ and \eqref{eqn:bound:varphi:1}, we finally get that
\begin{equation}\label{eqn:bound:varphi:2}
\varphi\big((\mu_c^+)^2, (\mu_c^-)^2\big)\leq \varphi\big((\mu_c^+)^2,\mu_{c^2}^-\big)\leq \alpha_\tkk^0 \gamma_{t_k} +N_k.
\end{equation}
At this point, we only need to prove that $0\leq \varphi\big((\mu_c^+)^2, (\mu_c^-)^2\big)$. Evaluating, we have that
{\small
\begin{align}\label{eqn:varphi:evaluated} \nonumber
\varphi\big((\mu_c^+)^2, (\mu_c^-)^2\big) &=\muo^+\muo^-\bigg\{\alpha_\tkk^0\Big[(\pi_\tkk(1,1)\alpha_\tkk^{1,1})^2\muo^+\muo^- - {\pi^+_{\tkk}}{ \pi^-_{\tkk}}(\alpha^{1+}_{\tkk}\muo^+-1)(\alpha^{1-}_{\tkk}\muo^--1)\Big] \\ \nonumber
& \qquad \qquad 
+\muo^+(\alpha_\tkk^{1+}\pi_\tkk^+)^2\pi_\tkk^-(\alpha^{1-}_{\tkk}\muo^--1)+ \muo^-(\alpha_\tkk^{1-}\pi_\tkk^-)^2\pi_\tkk^+(\alpha^{1+}_{\tkk}\muo^+-1) \\ \nonumber
& \qquad \qquad 
-2\alpha_{\tkk}^{1+}\alpha_{\tkk}^{1-} \pi_\tkk^+ \pi_\tkk^- \pi_\tkk(1,1)\alpha_\tkk^{1,1} \muo^+\muo^-           \bigg\} \\
&=\muo^+\muo^- \ell(\muo^+), 
\end{align}}
where 
\begin{align*}
\ell(\bz)&:=\alpha_\tkk^0\Big[\bz\muo^-(\pi_\tkk(1,1)\alpha_\tkk^{1,1})^2 - {\pi^+_{\tkk}}{ \pi^-_{\tkk}}(\alpha^{1+}_{\tkk}\bz-1)(\alpha^{1-}_{\tkk}\muo^--1)\Big] \\
& \qquad \qquad 
+\bz(\alpha_\tkk^{1+}\pi_\tkk^+)^2\pi_\tkk^-(\alpha^{1-}_{\tkk}\muo^--1)+ \muo^-(\alpha_\tkk^{1-}\pi_\tkk^-)^2\pi_\tkk^+(\alpha^{1+}_{\tkk}\bz-1)   \\
& \qquad \qquad 
-2\,\bz \,\muo^-\alpha_{\tkk}^{1+}\alpha_{\tkk}^{1-} \pi_\tkk^+ \pi_\tkk^- \pi_\tkk(1,1)\alpha_\tkk^{1,1}.
\end{align*}
{Due to \eqref{ineq:alpha0}, we have
\begin{align} \nonumber
\ell(0) &= \alpha_\tkk^0\pi_\tkk^+\pi_\tkk^-(\alpha_\tkk^{1-}\muo^--1) - \muo^-(\alpha_\tkk^{1-}\pi_\tkk^-)^2\pi_\tkk^+ \\ \nonumber
&= \pi_\tkk^+\pi_\tkk^-\alpha_\tkk^{1-}\muo^-(\alpha_\tkk^0-\alpha_\tkk^{1-}\pi_\tkk^-)- \alpha_\tkk^0\pi_\tkk^+\pi_\tkk^- \\ \label{ineq:ell0}
&\geq - \alpha_\tkk^0\pi_\tkk^+\pi_\tkk^-  >0.
\end{align}
We shall prove that  $d\ell/d\bz\geq0$, which will imply that $\ell(\mu_c^+)\geq\ell(0)>0$ and, finally, that $\varphi\big((\mu_c^+)^2, (\mu_c^-)^2\big)>0$ due to Equation \eqref{eqn:varphi:evaluated}.  Finally, by using the previous equation together with \eqref{eqn:bound:varphi:2} we get that
\[
0<\varphi\big((\mu_c^+)^2, (\mu_c^-)^2\big)\leq \alpha_\tkk^0 \gamma_{t_k} +N_k,
\]
 and will conclude the proof. 

It remains to show that $d\ell/d\bz\geq0$. Taking the derivative with respect to $\bz$,
\begin{align} \label{eqn:derivative:ell}\nonumber
\frac{d \ell}{d \bz} &= \alpha_\tkk^0 \big[(\pi_\tkk(1,1)\alpha_\tkk^{1,1})^2 \muo^- -  \pi^+_{\tkk}{ \pi^-_{\tkk}}\alpha^{1+}_{\tkk}(\alpha_\tkk^{1-}\muo^--1) \big]   \\ \nonumber
&\quad
+(\alpha_\tkk^{1+}\pi_\tkk^+)^2\pi_\tkk^-(\alpha^{1-}_{\tkk}\muo^--1) + \muo^-(\alpha_\tkk^{1-}\pi_\tkk^-)^2\pi_\tkk^+\alpha^{1+}_{\tkk}  \\ \nonumber
&\quad -2\alpha_{\tkk}^{1+}\alpha_{\tkk}^{1-} \pi_\tkk^+ \pi_\tkk^- \pi_\tkk(1,1)\alpha_\tkk^{1,1} \muo^- \\
&=\pi^+_{\tkk} \pi^-_{\tkk}\alpha^{1+}_{\tkk}(\alpha_\tkk^0-\alpha_\tkk^{1+}\pi_\tkk^+) + \muo^- \Psi({\alpha_\tkk^{1,1}}\pi_\tkk(1,1)),
\end{align}
where
\begin{equation}\label{FuntPsi0}
\begin{aligned}
\Psi(\bz) &:=\alpha_\tkk^0 \bz^2 -2\bz\alpha_{\tkk}^{1+}\alpha_{\tkk}^{1-} \pi_\tkk^+ \pi_\tkk^- + (\alpha_{\tkk}^{1+}\pi_\tkk^+)^2 \alpha_{\tkk}^{1-}\pi_\tkk^- \\
&\qquad  +(\alpha_{\tkk}^{1-}\pi_\tkk^-)^2 \alpha_{\tkk}^{1+}\pi_\tkk^+ - \alpha_\tkk^0\alpha_{\tkk}^{1+}\alpha_{\tkk}^{1-} \pi_\tkk^+ \pi_\tkk^-.
\end{aligned}
\end{equation}
Due to \eqref{ineq:alpha0}, the first term of \eqref{eqn:derivative:ell} is nonegative.  It is also easy to see that the quadratic function $\Psi$ attaining its minimum value over any interval $[a,b]$ at the endpoints since $\alpha_\tkk^0\leq 0$. On the other hand, observe that $\alpha_\tkk(\onep+\onem-1) \geq \alpha_\tkk \onep\onem$. Then, by taking conditional expectation with respect to $\F$ and using \eqref{eqn:comparison:alpha}, we obtain the bounds
\begin{equation}\label{eqn:bounds:alpha11pi11}
\alpha_\tkk^{1+}\pi_\tkk^+ \vee \alpha_\tkk^{1-}\pi_\tkk^- \leq \alpha_\tkk^{1,1}\pi_\tkk(1,1) \leq (\alpha_\tkk^{1+}\pi_\tkk^+ + \alpha_\tkk^{1-}\pi_\tkk^- - \alpha_\tkk^0) \wedge 0.
\end{equation}
Therefore, to show that \eqref{eqn:derivative:ell} is nonnegative, it suffices to check that 
\[
	{\rm (1)}\;\Psi(\alpha_\tkk^{1+}\pi_\tkk^+ \vee \alpha_\tkk^{1-}\pi_\tkk^-)\geq{}0,\quad {\rm (2)}\;\Psi((\alpha_\tkk^{1+}\pi_\tkk^+ + \alpha_\tkk^{1-}\pi_\tkk^- - \alpha_\tkk^0) \wedge 0)\geq{}0.
\]
To show (1), we consider the cases (a) $\alpha_\tkk^{1+}\pi_\tkk^+ \geq \alpha_\tkk^{1-}\pi_\tkk^-$ or  (b) $\alpha_\tkk^{1+}\pi_\tkk^+ \leq \alpha_\tkk^{1-}\pi_\tkk^-$. To show (2) above, we consider the cases  (c) $(\alpha_\tkk^{1+}\pi_\tkk^+ + \alpha_\tkk^{1-}\pi_\tkk^- - \alpha_\tkk^0)\geq{}0$; or (d) $(\alpha_\tkk^{1+}\pi_\tkk^+ + \alpha_\tkk^{1-}\pi_\tkk^- - \alpha_\tkk^0)\leq{}0$. 

Suppose that (a) $\alpha_\tkk^{1+}\pi_\tkk^+ \geq \alpha_\tkk^{1-}\pi_\tkk^-$. Then, in light of \eqref{ineq:alpha0},
{\small
\begin{align*}
\Psi(\alpha_\tkk^{1+}\pi_\tkk^+) &=\alpha_\tkk^{1+}\pi_\tkk^+\Big[(\alpha_\tkk^{1-}\pi_\tkk^-)^2 - (\alpha_\tkk^{1-}\pi_\tkk^-)(\alpha_\tkk^{1+}\pi_\tkk^+)  + \alpha_\tkk^0(\alpha_\tkk^{1+}\pi_\tkk^+-\alpha_\tkk^{1-}\pi_\tkk^-) \Big] \\
&=\alpha_\tkk^{1+}\pi_\tkk^+(\alpha_\tkk^{1-}\pi_\tkk^- - \alpha_\tkk^{1+}\pi_\tkk^+)(\alpha_\tkk^{1-}\pi_\tkk^- - \alpha_\tkk^0)\geq 0.
\end{align*}}
Similarly, (b) if $\alpha_\tkk^{1-}\pi_\tkk^- \geq \alpha_\tkk^{1+}\pi_\tkk^+$, then $\Psi(\alpha_\tkk^{1-}\pi_\tkk^-)\geq 0$. 

Now, suppose that (c) $\alpha_\tkk^{1+}\pi_\tkk^+ + \alpha_\tkk^{1-}\pi_\tkk^- - \alpha_\tkk^0 \geq 0$. Then, 
\[
\Psi(0) =\alpha_\tkk^{1+}\alpha_\tkk^{1-}\pi_\tkk^+\pi_\tkk^- (\alpha_\tkk^{1+}\pi_\tkk^+ + \alpha_\tkk^{1-}\pi_\tkk^- - \alpha_\tkk^0) \geq 0.
\]
The only case left is when  (d) $\alpha_\tkk^{1+}\pi_\tkk^+ + \alpha_\tkk^{1-}\pi_\tkk^- - \alpha_\tkk^0 \leq 0$. Together with \eqref{ineq:alpha0}, we will have
\begin{equation}\label{eqn:bounds:alpha1p}
\alpha_\tkk^0 \leq \alpha_\tkk^{1+}\pi_\tkk^+ \leq \alpha_\tkk^0 - \alpha_\tkk^{1-}\pi_\tkk^-,
\end{equation}
and, by rearranging terms,
\begin{align*}
\Psi(\alpha_\tkk^{1+}\pi_\tkk^+ + \alpha_\tkk^{1-}\pi_\tkk^- - \alpha_\tkk^0) &=(\alpha_\tkk^0-\alpha_\tkk^{1-}\pi_\tkk^-)(\alpha_\tkk^{1+}\pi_\tkk^+)^2 + \alpha_\tkk^0(\alpha_\tkk^0-\alpha_\tkk^{1-}\pi_\tkk^-)^2 \\
&\quad - (2\alpha_\tkk^0-\alpha_\tkk^{1-}\pi_\tkk^-)(\alpha_\tkk^0-\alpha_\tkk^{1-}\pi_\tkk^-)(\alpha_\tkk^{1+}\pi_\tkk^+) \\
&=:\Xi(\alpha_\tkk^{1+}\pi_\tkk^+),
\end{align*}
where 
\[
\Xi(\bz)=(\alpha_\tkk^0-\alpha_\tkk^{1-}\pi_\tkk^-)\bz^2 - (2\alpha_\tkk^0-\alpha_\tkk^{1-}\pi_\tkk^-)(\alpha_\tkk^0-\alpha_\tkk^{1-}\pi_\tkk^-)\bz + \alpha_\tkk^0(\alpha_\tkk^0-\alpha_\tkk^{1-}\pi_\tkk^-)^2
\]
is a quadratic function, opening downward due to \eqref{ineq:alpha0}. Furthermore, at the lower and upper bounds of \eqref{eqn:bounds:alpha1p}, we have  $\Xi(\alpha_\tkk^0) =\Xi(\alpha_\tkk^0 - \alpha_\tkk^{1-}\pi_\tkk^-) =0$.
Therefore,  $\Xi(\bz) \geq 0$, for all $\bz \in \big[\alpha_\tkk^0,(\alpha_\tkk^0 - \alpha_\tkk^{1-}\pi_\tkk^-)\big]$. Finally, we conclude that $\Psi(\alpha_\tkk^{1+}\pi_\tkk^+ + \alpha_\tkk^{1-}\pi_\tkk^- - \alpha_\tkk^0)=\Xi(\alpha_\tkk^{1+}\pi_\tkk^+)\geq{}0$ in the case (iv) as desired.} 
\end{itemize}
\end{proof}

\begin{proof}[Proof of Corollary \ref{cor:candidates}] 

Refer to the proof of {Theorem} \ref{prop:measurability}. Denote the right hand side of Equation (\ref{eqn:Quadratic:L}) as $\sup_{L_{t_k}^\pm}{f}(\lp,\lm)$. It is easy to observe that ${f}(\lp,\lm)$ is a quadratic function of $\lp$ and $\lm$. Setting the partial derivatives with respect to $\lp$ and $\lm$ equal to $0$, we have that
\begin{align}\label{PDWRTL}
\begin{split}
0=\frac{\partial f}{\partial \lp}&=2{\pi^+_{\tkk}}(\alpha^{1+}_{\tkk}\mut^+-\muo^+)\lp
+{\pi^+_{\tkk}}\big[\muoo^++ {h}_{\tkk}^{1+}\muo^++\alpha^{1+}_{\tkk}(2\muo^+I_\tk-2\muto^+)\big]\\
&-2\alpha^{1,1}_{\tkk}{\pi_\tkk(1,1)}\muo^+\muo^-\lm+2\alpha^{1,1}_{\tkk}{\pi_\tkk(1,1)}\muo^+\muoo^-\\
0=\frac{\partial f}{\partial \lm}&=2{\pi^-_{\tkk}}(\alpha^{1-}_{\tkk}\mut^--\muo^-)\lm
+{\pi^-_{\tkk}}\big[\muoo^-- {h}_{\tkk}^{1-}\muo^-+\alpha^{1-}_{\tkk}(-2\muo^-I_\tk-2\muto^-)\big]\\
&-2\alpha^{1,1}_{\tkk}{\pi_\tkk(1,1)}\muo^+\muo^-\lp+2\alpha^{1,1}_{\tkk}{\pi_\tkk(1,1)}\muo^-\muoo^+.
\end{split}
\end{align}
Solving for $\lp$ and $\lm$, we get that {the unique stationary points ${L}_{\tk}^{+,*}$ and ${L}_{\tk}^{-,*}$ are given by
\begin{align}\label{PDWRTLSol}
	\begin{split}
	L_{\tk}^{+,*}&={}^{\scaleto{(1)}{5pt}}\!A^+_{\tk}I_\tk+{}^{\scaleto{(2)}{5pt}}\!A^+_{\tk}+{}^{\scaleto{(3)}{5pt}}\!A^+_{\tk},\\
	L_{\tk}^{-,*}&=-{}^{\scaleto{(1)}{5pt}}\!A^-_{\tk}I_\tk-{}^{\scaleto{(2)}{5pt}}\!A^-_{\tk}+{}^{\scaleto{(3)}{5pt}}\!A^-_{\tk},
	\end{split}
	\end{align}
as pointed out in Eq.~(\ref{eqn:OptimalL}). To prove that the {stationary} point $\Big({L}_{\tk}^{+,*},{L}_{\tk}^{-,*}\Big)$ is the  maximum of $f$, we apply the second derivative test. Indeed, 
\[
\left(\frac{\partial^2 f}{\partial \left(L_{t_k}^\pm\right)^2}\right)=2\pi_\tkk^\pm(\alpha^{1\pm}_{\tkk}\mut^\pm -\muo^\pm)<0,
\]
because, due to Lemma \ref{lemma:alpha}, we have $\alpha_{t_k}<0$ and, thus, $\alpha_\tk^{1\pm} =\mathbb{E}[\alpha_\tk \,|\,\mathcal{F}_{t_{k-1}},\mathbbm{1}^\pm_\tk=1]\leq{}0$.
It remains to show that
\begin{align}\nonumber
B_k& = \left(\frac{\partial^2 f}{\partial \left(\lp\right)^2}\right)\left(\frac{\partial^2 f}{\partial \left(\lm\right)^2}\right)-\left(\frac{\partial^2 f}{\partial \lp\partial \lm}\right)^2 \\
&= 4{\pi^+_{\tkk}}{\pi^-_{\tkk}}(\alpha^{1+}_{\tkk}\mut^+-\muo^+)(\alpha^{1-}_{\tkk}\mut^--\muo^-)-4\big[{\pi_\tkk(1,1)}\alpha^{1,1}_{\tkk}\muo^+\muo^-\big]^2>0.\label{ScndInqH}
\end{align}
However, $B_k$ above is just $-4\gamma_{t_k}$, with $\gamma_{t_k}$ defined as in (\ref{eqn:Nk:Dk}), and it was shown in the proof of Lemma \ref{lemma:alpha} that $\gamma_{t_k}<0$ (see (\eqref{eq:Dneg}). The proof is now complete.}
\end{proof}

\begin{proof}[Proof of Theorem~\ref{VeriThrm1}] \label{ProofVerifyH}
	
{Throughout, ${W}_{t_{i}},{I}_{t_{i}}$, for $i=k,\dots,N+1$, are  the cash holding and inventory processes resulting from {adopting an} admissible placement strategy ${L}_{t_{i}}^{\pm}$, $i=k,\dots,N$. {Similarly}, for $i = k+1, \dots, N+1$, ${W}_{t_{i}}^{*},{I}_{t_{i}}^{*}$ are the resulting cash holding and inventory processes, starting from time $t_k$ at the initial states $W_\tk,{I_\tk}$,  when setting {$L_{t_{i}}^{\pm}={L}_{t_{i}}^{\pm,{*}}$}. 
	First note that, for an arbitrary admissible placement strategy $L_{t_{i}}^{\pm}$, 
	$\{v(t_i,S_{t_i},W_{t_i},I_{t_i})\}_{i=k,\dots,N+1}$ is a supermartingale since
	\begin{align}\nonumber
		\mathbbm{E}\big[v(t_{i+1},S_{t_{i+1}},W_{t_{i+1}},I_{t_{i+1}})|\mathcal{F}_{t_{i}}\big]&\leq\sup_{\widehat{L}^\pm_{t_{i}}}\mathbbm{E}\big[v(t_{i+1},S_{t_{i+1}},\widehat{W}_{t_{i+1}},\widehat{I}_{t_{i+1}})|\mathcal{F}_{t_{i}}\big]\\
		\label{EqDmH}
		&=v(t_{i},S_{t_{i}},W_{t_{i}},I_{t_{i}}),
	\end{align}
{where above $\widehat{W}_{t_{i+1}}$ and $\widehat{I}_{t_{i+1}}$ are the time-$t_{i+1}$ cash holding and inventory for an arbitrary admissible placement strategy  $\widehat{L}_{t_{i}}^{\pm}$ when $\widehat{W}_{t_{i}}={W}_{t_{i}}$ and $\widehat{I}_{t_{i}}={I}_{t_{i}}$.} 
	The equation in \eqref{EqDmH} follows from~(\ref{eq113}) and {Corollary} \ref{cor:candidates}. That is, {$\alpha_{t_{i}},h_{t_{i}},g_{t_{i}}$
	 are picked in order for (\ref{EqDmH}) to hold true. 
	 
	From the supermartingale condition}, we then have that 
	\begin{align}
		\begin{split}
			v(t_k,\sk,W_\tk,I_\tk)&\geq \sup_{{(L^\pm_{t_i})_{k\leq i\leq N}}}\E[v(T,S_T,W_T,I_T)|\F]\\
			&=\sup_{{(L^\pm_{t_i})_{k\leq i\leq N}}}\E[W_T+S_T I_T-\lambda I_T^2|\F]\\
			&=V_\tk.\label{eq:vlarge}
		\end{split}
	\end{align}
The first equality in Eq.~(\ref{eq:vlarge}) holds because $v(T,S_T,W_T,I_T)=W_T+S_T I_T-\lambda I_T^2$ by the terminal conditions $\alpha_T=-\lambda,g_T=0,h_T=0$.
	
	Next we prove that $v(t_k,\sk,W_\tk,I_\tk)\leq V_\tk$. To this end, recall from~(\ref{eq113}) and {Corollary} \ref{cor:candidates} that {${L}_{t_{i}}^{\pm,{*}}$ are chosen so that}
%
	\[
	v(t_{i},S_{t_{i}},{{W}_{t_{i}}^{*}},{{I}_{t_{i}}^{*}})=\E[v(t_{i+1},S_{t_{i+1}},{{W}^{*}_{t_{i+1}},{I}^{*}_{t_{i+1}}})|\mathcal{F}_{t_{i}}],
	\]
	for all $i=k,\dots,{N}$. Hence, {recalling that we set ${W}^{*}_\tk=W_\tk$ and ${I}^{*}_\tk=I_\tk$, by induction,}
	\begin{align*}
		v(t_k,\sk,W_\tk,I_\tk)=v(t_k,{\sk,{W}^{*}_\tk,{I}^{*}_\tk})&=\E[v(t_{N+1},S_{t_{N+1}},{{W}^{*}_{t_{N+1}},{I}^{*}_{t_{N+1}}})|\mathcal{F}_{t_{k}}]\\
		&=\E[{{W}^{*}_T+S_T {I}^{*}_T-\lambda ({I}^{*}_T)^2}|\F].
	\end{align*}
It also trivially follows that
	\[
	\E{[{W}^{*}_T+S_T {I}^{*}_T-\lambda ({I}^{*}_T)^2}|\F]\leq{}\sup_{{(L^\pm_{t_i})_{k\leq i\leq N}}}\E[W_T+S_T I_T-\lambda I_T^2|\F]=V_{t_{k}}.
	\]
	We then conclude that} $v(t_k,\sk,W_\tk,I_\tk)\leq{}V_{t_{k}}$, which combined with (\ref{eq:vlarge}) implies that \\$v(t_k,\sk,W_\tk,I_\tk)={}V_{t_{k}}$.
\end{proof}

\subsection{Proofs of Section \ref{Sec:General:process}: Optimal Strategy for a General Midprice}\label{appdx:general:price}

\begin{proof}[Proof of Theorem \ref{thm:General:Price:controls}]

{\DRed For simplicity, we write $\rho_k^{\pm}=\rho_{t_{k}}^{\pm}$ and $\psi_k^{\pm}=\psi_{t_{k}}^{\pm}$}. {For future reference, define $\widetilde{h}_\tkk^0$, $\widetilde{h}_\tkk^{1\pm}$, and $\widetilde{h}_\tkk^{1,1}$ using the notation \eqref{Dfnalh0}.
Let us start by writing the optimization problem \eqref{eqn:optimalprob} in terms of the ansatz (\ref{eq:V_NMGb}):
\begin{equation}\label{eq1133}
\begin{aligned}
W_\tk+\alpha_{\tk}I_\tk^2+&\sk I_\tk+\widetilde{h}_{\tk}I_\tk =\sup_{L^\pm_{t_k} \in \mathcal{A}_{t_k}} \mathbb{E}[W_\tkk+\alpha_{\tkk}I_\tkk^2+\skk I_\tkk +\widetilde{h}_{\tkk}I_\tkk \,|\,\F].
\end{aligned}
\end{equation}
We will prove the result by backwards induction.}  
Consider the following statements:
\begin{itemize}
	\item[(i)] 
	For $\delta\in\{+,-\}$, {we have:}
\begin{align}\label{TrickyNdInd1}
\mathbb{E}\big[\widetilde{h}_{t_{k+1}}\mathbbm{1}_{t_{k+1}}^{\delta}c_{t_{k+1}}^{\delta}\big|\mathcal{F}_{t_{k}}\big]&=
\widetilde{h}_{t_{k+1}}^{1\delta}\pi_{t_{k+1}}^\delta\muo^\delta,\\
\label{TrickyNdInd2}
\mathbb{E}\big[\widetilde{h}_{t_{k+1}}\mathbbm{1}_{t_{k+1}}^{\delta}c_{t_{k+1}}^{\delta}p_{t_{k+1}}^{\delta}\big|\mathcal{F}_{t_{k}}\big]&=
\widetilde{h}_{t_{k+1}}^{1\delta}\pi_{t_{k+1}}^\delta\muoo^\delta.
\end{align}
	\item[(ii)] The optimal controls $\widetilde{L}_{\tk}^{\pm,*}$ that solve (\ref{eq1133}) under {dynamics \eqref{eqi11} with terminal conditions $\alpha_{t_{N+1}}=-\lambda$ and $\widetilde{h}_{t_{N+1}}=0$} are given by
\begin{equation}\label{DfnOptLbb}
\begin{aligned}
\widetilde{L}_{\tk}^{+,*}={}^{\scaleto{(1)}{5pt}}\!A^+_{\tk}I_\tk+{}^{\scaleto{(2)}{5pt}}\!\widetilde{A}^+_{\tk}
+{}^{\scaleto{(3)}{5pt}}\!\widetilde{A}^+_{\tk},\qquad
\widetilde{L}_{\tk}^{-,*}=-{}^{\scaleto{(1)}{5pt}}\!A^-_{\tk}I_\tk-^{\scaleto{(2)}{5pt}}\!\!\widetilde{A}^-_{\tk}{+}{}\,
{}^{\scaleto{(3)}{5pt}}\!\widetilde{A}^-_{\tk},
\end{aligned}
\end{equation}
where 
\begin{align}
\label{eq:A1NMGb}
&{}^{\scaleto{(2)}{5pt}}\!\widetilde{A}^\pm_{\tk}=\frac{\widetilde{h}_\tkk^{1\pm}\rho_k^\pm-\widetilde{h}_\tkk^{1\mp}\psi_k^\pm}{2\gamma_{t_{k}}},\qquad
{}^{\scaleto{(3)}{5pt}}\!\widetilde{A}^\pm_{\tk}={}^{\scaleto{(3)}{5pt}}\!A^\pm_{\tk}  \pm \frac{\rho_k^\pm-\psi_k^\pm}{2\gamma_\tk}\Delta_{t_k}^{t_k},
\end{align}
whereas $\Aone^\pm$ and ${}^{\scaleto{(3)}{5pt}}\!A^\pm_{\tk}$ are the same as in {Theorem \ref{prop:measurability}}.
	\item[(iii)]  The random variables $\widetilde{h}_{\tk}$ satisfy the following iterative equation,
\begin{align}\label{eq177}
\begin{split}
\widetilde{h}_{\tk}= \widetilde{h}^0_{\tkk}&+\sum_{\delta=\pm} \pi^\delta_\tkk\Bigg\{2\Big[\alpha^{1\delta}_{\tkk}\mut^\delta-\muo^\delta\Big] \Big[{}^{\scaleto{(1)}{5pt}}\!A^\delta_{\tk}\big({\delta\;{}^{\scaleto{(3)}{5pt}}\!\widetilde{A}^\delta_{\tk}+ {}^{\scaleto{(2)}{5pt}}\!\widetilde{A}^\delta_{\tk}}\big)\Big]+2\alpha^{1\delta}_{\tkk}\muo^\delta\big({\delta\,{}^{\scaleto{(3)}{5pt}}\!\widetilde{A}^\delta_{\tk}+ {}^{\scaleto{(2)}{5pt}}\!\widetilde{A}^\delta_{\tk}}\big)\\
&\qquad\qquad\qquad -2\delta\alpha^{1\delta}_{\tkk}\muoo^\delta
+\delta {}^{\scaleto{(1)}{5pt}}\!A^\delta_{\tk}\Big(\muoo^\delta+\delta \widetilde{h}^{1\delta}_{\tkk}\muo^\delta-2\alpha^{1\delta}_{\tkk}\muto^\delta\Big)\Bigg\}\\
&-2\alpha^{1,1}_{\tkk} \pi_{t_{k+1}}(1,1)\muo^+\muo^-\Bigg[{}^{\scaleto{(1)}{5pt}}\!A^+_{\tk}\Big({}^{\scaleto{(3)}{5pt}}\!\widetilde{A}^-_{\tk}-{}^{\scaleto{(2)}{5pt}}\!\widetilde{A}^-_{\tk}\Big)-{}^{\scaleto{(1)}{5pt}}\!A^-_{\tk}\Big({}^{\scaleto{(2)}{5pt}}\!\widetilde{A}^+_{\tk}+{}^{\scaleto{(3)}{5pt}}\!\widetilde{A}^+_{\tk}\Big)\\
&\qquad +{}^{\scaleto{(1)}{5pt}}\!A^-_{\tk}\frac{\muoo^+}{\muo^+}-{}^{\scaleto{(1)}{5pt}}\!A^+_{\tk}\frac{\muoo^-}{\muo^-}\Bigg] +\Delta_{t_k}^{t_k}\Big({}^{\scaleto{(1)}{5pt}}\!A^+_{\tk}\pi_\tkk^+\muo^+ + {}^{\scaleto{(1)}{5pt}}\!A^-_{\tk}\pi_\tkk^-\muo^-+1\Big).
\end{split}
\raisetag{4\normalbaselineskip}
\end{align}
while
\begin{align}\label{eq:gNMGb}
\begin{split}
        \widetilde{g}_{\tk}&=\widetilde{g}_{\tkk}^{\;0}+\sum_{\delta=\pm}{\pi^\delta_\tkk}\Bigg\{\Big(\alpha_{\tkk}^{1\delta}\mut^\delta-\muo^\delta\Big)\Big({{}^{\scaleto{(3)}{5pt}}\!\widetilde{A}^\delta_{\tk}+\delta\, {}^{\scaleto{(2)}{5pt}}\!\widetilde{A}^\delta_{\tk}}\Big)^2+\alpha_{\tkk}^{1\delta}\mutt^\delta-\delta\, \widetilde{h}_{\tkk}\muoo^\delta\\
        &\quad\qquad\qquad\qquad\qquad+\Big(\muoo^\delta+\delta \widetilde{h}_{\tkk}\muo^\delta-2\alpha_{\tkk}^{1\delta}\muto^\delta\Big)\Big({{}^{\scaleto{(3)}{5pt}}\!\widetilde{A}^\delta_{\tk}+{(\delta\, {}^{\scaleto{(2)}{5pt}}\!\widetilde{A}^\delta_{\tk}})}\Big)\Bigg\}\\
        &\qquad\qquad-2\alpha_{\tkk}^{1,1}{\pi_\tkk(1,1)}\muo^+\muo^-\Bigg[\Big({}^{\scaleto{(2)}{5pt}}\!\widetilde{A}^+_{\tk}+{}^{\scaleto{(3)}{5pt}}\!\widetilde{A}^+_{\tk}\Big)\Big({}^{\scaleto{(3)}{5pt}}\!\widetilde{A}^-_{\tk}-{}^{\scaleto{(2)}{5pt}}\!\widetilde{A}^-_{\tk}\Big)\\
        &\quad\qquad\qquad\qquad\qquad-\frac{\muoo^+}{\muo^+}\Big({{}^{\scaleto{(3)}{5pt}}\!\widetilde{A}^-_{\tk}-{}^{\scaleto{(2)}{5pt}}\!\widetilde{A}^-_{\tk}}\Big)-\frac{\muoo^-}{\muo^-}\Big({}^{\scaleto{(2)}{5pt}}\!\widetilde{A}^+_{\tk}+{}^{\scaleto{(3)}{5pt}}\!\widetilde{A}^+_{\tk}\Big)+\frac{\muoo^+\muoo^-}{\muo^-\muo^+}\Bigg]\\
        &\qquad\qquad+\Delta_{t_{k}}^{t_k}\bigg[\Big({}^{\scaleto{(3)}{5pt}}\!\widetilde{A}^\delta_{\tk}+ {}^{\scaleto{(2)}{5pt}}\!\widetilde{A}^\delta_{\tk}\Big)\pi_\tkk^+\muo^+-\Big({}^{\scaleto{(3)}{5pt}}\!\widetilde{A}^\delta_{\tk}- {}^{\scaleto{(2)}{5pt}}\!\widetilde{A}^\delta_{\tk}\Big)\pi_\tkk^-\muo^--\pi_\tkk^+\muoo^++\pi_\tkk^-\muoo^-\bigg]
\end{split}
\raisetag{4\normalbaselineskip}
\end{align}
	\item[(iv)] Equation \eqref{eq:LtildeNMGbb} holds true (at time $t_k$) and the random variables $\rho_k^\pm$ and $\psi_k^\pm$, as defined in Equations {\eqref{eqn:gamma:eta}}, are $\mathcal{F}_{{t_k}}-$measurable while the random variables $\xi_{k+1}$ given by Eq.~\eqref{eqn:rho:k} are  $\mathcal{F}_{{t_{k+1}}}-$ measurable.
\end{itemize}

{To start, note that the statement (i) is immediate for $k=N$ due to the terminal condition on $\widetilde{h}$. 
The strategy that we will employ to finish the proof is the following: we will show that if  statement (i) holds true for $k=j,j+1,\dots,N$, then statements (ii), (iii), and (iv) will hold true for $k=j$. In the final step, we prove that the statement (i) holds true for $k=j-1$ if (i)-(iv) holds for $k=j$.}

\medskip

Let us start with the first step described in the previous paragraph. 
Assume that statement (i) is true for $k=j, j+1,\dots,N$. By substituting the values of $W_{t_{j+1}}$ and $I_{t_{j+1}}$ given by Equation \eqref{eqi11}-\eqref{eqw11} into the optimization problem \eqref{eq1133} with $k=j$ we get,
{\small
\begin{align}
\begin{split}
\label{eqn:expanded:general:1}
V_{t_{j}}=\sup_{L_{t_j}^\pm\in\mathcal{A}_{t_j}}\Ex\Bigg\{&\sum_{\delta=\pm}\onedj\big[-\cdj(\ldj)^2+(\cdj\Pdj-\delta\cdj\sj)\ldj+\delta\cdj\Pdj\sj\big]\\
&+\alpha_{\tjj}\bigg\{I_\tj^2+\sum_{\delta=\pm}\onedj\Big[(\cdj)^2(\ldj)^2+\big(2\delta I_\tj\cdj-2(\cdj)^2\Pdj\big)\ldj\\
&\qquad\qquad\quad\qquad\qquad\qquad+(\cdj\Pdj)^2-2\delta I_\tj\cdj\Pdj\Big]\\
&\qquad\qquad+2\onepj\onemj\cpj\cmj(-\lpj\lmj+\Ppj\lmj+\Pmj\lpj-\Ppj\Pmj)\bigg\}\\
&+\sjj\bigg[I_\tj+\sum_{\delta=\pm}\onedj(-\delta\cdj\Pdj+\delta\cdj\ldj)\bigg]
\\
&+ \widetilde{h}_{\tjj}I_\tj+\sum_{\delta=\pm}\onedj(-\delta {\widetilde{h}_{\tjj}}\cdj\Pdj+\delta {\widetilde{h}_{\tjj}}\cdj\ldj)+{\Purple \widetilde{g}_{\tjj}}\Bigg|\Fj\Bigg\}.
\end{split}
\end{align}}
Computing the conditional expectations on the right-hand side using statement (i) and the techniques used in the proof of Theorem \ref{prop:measurability}, we get that
\begin{align}\label{eqn:Quadratic:L:General}
\begin{split}
&\alpha_{t_j}I_{t_j}^2+S_{t_j} I_{t_j}+\widetilde{h}_{t_j}I_{t_j}+\widetilde{g}_{t_j}\\
&=\sup_{L_{t_j}^\pm\in\mathcal{A}_{t_j}}\ \Bigg[ \sum_{\delta=\pm}{\pi^\delta_{t_{j+1}}}\bigg\{(\alpha^{1\delta}_{t_{j+1}}\mut^\delta-\muo^\delta)(L_{t_j}^\delta)^2	+\big[\muoo^\delta+\delta\muo^\delta\big( \widetilde{h}^{1\delta}_{t_{j+1}}+\Delta_{t_j}^{t_j}\big)+\alpha^{1\delta}_{t_{j+1}}(2\delta\muo^\delta I_{t_j}-2\muto^\delta)\big]L_{t_j}^\delta\\
&\qquad\qquad \qquad\qquad\quad+\alpha^{1\delta}_{t_{j+1}}(\mutt^\delta-2\delta\muoo^\delta I_{t_j})-\delta\muoo^\delta\big( \widetilde{h}^{1\delta}_{t_{j+1}}+\Delta_{t_j}^{t_j}\big)\bigg\}\\ 
&\qquad\qquad\qquad+2\alpha^{1,1}_{t_{j+1}}{\pi_{t_{j+1}}(1,1)}\Big(\muoo^+\muo^-L_{t_j}^-+\muo^+\muoo^-L_{t_j}^+-\muo^+\muo^-L_{t_j}^+L_{t_j}^- -\muoo^+\muoo^-\Big)\\
&\qquad\qquad\qquad+\alpha^0_{t_{j+1}}I_{t_j}^2+I_{t_j}S_{t_j}+ \widetilde{h}^0_{t_{j+1}} I_{t_j}+\Delta_{t_j}^{t_j}I_{t_j}+\widetilde{g}^{\;0}_{t_{j+1}}\Bigg].
\end{split}
\raisetag{3\baselineskip}
\end{align}
Then, since the function of the right-hand side is a quadratic function of the controls $L_{t_j}^\pm$, it can be shown, as in the proof Theorem \ref{prop:measurability}, that the optimal controls are the ones given in statement (ii) with $k=j$. Furthermore, after plugging in the optimal controls; equating the coefficients of $I_{t_j}$ on both sides; and the independent terms on both sides, we obtain statement (iii). 

\medskip

To obtain statement (iv), notice that by plugging the expression for {${}^{\scaleto{(2)}{5pt}}\!A^\pm_{t_j}$}, given by Equation \eqref{eq:A1}, into the Equation (\ref{eqn:htk}), we can rewrite {$h_\tj$} as:
\begin{align}\label{h1}
\nonumber
h_\tj&=\mathbb{E}(h_{t_{j+1}}\xi_{j+1}|\mathcal{F}_{t_j})+Z_j\\
&=\mathbb{E}(h_{t_{N+1}}\prod_{u=j}^{N}\xi_{u+1}|\F) +\sum_{u=j}^{N}\mathbb{E}(Z_u\prod_{i=j+1}^{u}\xi_i|\mathcal{F}_{t_j})
\end{align}
where $\{Z_k\}_{k=1}^N$ is a collection of $\Hkk-$measurable random variables. In the same fashion, by plugging the quantities defined in Equation \eqref{eq:A1NMGb} into Equation (\ref{eq177}), we can rewrite $\widetilde{h}_{\tj}$ as:
\begin{align}\label{h2}
\widetilde{h}_{\tj}&=\mathbb{E}\big[(h_{t_{j+1}}+\Delta_{t_j}^{t_j})\xi_{j+1}\big|\mathcal{F}_{t_j}\big]+Z_j=h_\tu+\sum_{u=j}^{N}\Delta_{t_u}^\tj\mathbb{E}\left[\prod_{i=j+1}^{u+1}\xi_i|\mathcal{F}_{t_j}\right].
\end{align}
Next, by substituting the value of  $\widetilde{h}_{\tj}$ in Equation (\ref{h2}) into the term  ${}^{\scaleto{(2)}{5pt}}\!\widetilde{A}^\pm_{\tj}$, as defined in Equation (\ref{eq:A1NMGb}), and then plugging this equivalent formula along with ${}^{\scaleto{(3)}{5pt}}\!\widetilde{A}^\pm_{\tu}$, as given by Equation \eqref{eq:A1NMGb} into Equation (\ref{DfnOptLbb}), we obtain Equation (\ref{eq:LtildeNMGbb}). To verify the measurability of $\rho_j^\pm$ and $\psi_j^\pm$, notice that by definition (see Equations \eqref{eqn:mucp}-\eqref{eqn:gfunc:pi}), the random variables $\pi^\pm_{t_{j+1}},\pi_{t_{j+1}}(1,1)$ and $\mu_{c^mp^n}^\pm$ are $\mathcal{F}_j$-measurable. Also, by definition, (see Equation \eqref{Dfnalh0}) the random variables $\{\alpha_{t_{j+1}}^0\},\{\alpha_{t_{j+1}}^{1\pm}\}$ and $\{\alpha_{t_{j+1}}^{1,1}\}$ are also $\mathcal{F}_{t_j}$-measurable. All of this implies that $\{\rho_j^\pm\}$ and $\{\psi_j^\pm\}$ are $\mathcal{F}_{t_j}-$measurable. However, due to the presence of the random variables $\onepj,\onemj$, the random variables $\xi_{j+1}$ cannot be $\mathcal{F}_{t_{j}}$-measurable but they are certainly $\mathcal{F}_{t_{j+1}}-$measurable.

\medskip

Finally, all what is left is to prove that if statements (i)-(iv) hold true for $k=j+1,j+2,\ldots,N$, then statement (i) holds true for $k=j$. Indeed, since Equations (\ref{TrickyNdInd1}) and (\ref{TrickyNdInd2}) hold for $k=j+1,\dots,N$, then Equation (\ref{h2}) holds for $k=j+1$. That is,
\[
\widetilde{h}_{t_{j+1}}=h_{t_{j+1}}+\sum_{i=j+1}^{N}\Delta_{t_i}^{t_{j+1}}\mathbb{E}(\prod_{u=j+2}^{i+1}\xi_u|\mathcal{F}_{t_{j+1}}).
\]
Multiplying both sides by $\onedj c_{t_{j+1}}^\delta$ and taking the conditional expectation with respect to $\mathcal{F}_{t_j}$, we have that
\begin{align}
\nonumber
\mathbb{E}\big[\widetilde{h}_{t_{j+1}}\onedj c_{t_{j+1}}^\delta\big|\mathcal{F}_{t_j}\big]&=\mathbb{E}\Bigg\{\left[h_{t_{j+1}}+\sum_{i=j+1}^{N}\Delta_{t_i}^{t_{j+1}}\mathbb{E}\left.\left(\prod_{u=j+2}^{i+1}\xi_u\right|\mathcal{F}_{t_{j+1}}\right)  \right]    \onedj c_{t_{j+1}}^\delta\Bigg|\mathcal{F}_{t_j}\Bigg\}\\
\nonumber
&=h_{t_{j+1}}^{1\delta}\pi_{t_{j+1}}^\delta\muo^\delta+\pi_{t_{j+1}}^\delta\muo^\delta \sum_{i=j+1}^{N}\Delta_{t_i}^{t_j}\mathbb{E}\Bigg[\mathbb{E}\bigg(\prod_{u=j+2}^{i+1}\xi_u\bigg|\mathcal{F}_{t_{j+1}}\bigg)\Bigg|\mathcal{F}_{t_j},\onedj=1\Bigg]\\
\nonumber
&=\widetilde{h}_{t_{j+1}}^{1\delta}\pi_{t_{j+1}}^\delta\muo^\delta.
\end{align}
where we used that $\mathbb{E}\left[\prod_{u=j+2}^{i+1}\xi_u|\mathcal{F}_{t_{j+1}}\right]$ is an $\mathcal{H}_{t_{j+1}}^\varpi$-measurable random variable. Thus, we have that Equation (\ref{TrickyNdInd1}) holds for {\DRed $k=j$}. Similarly we can prove Equation (\ref{TrickyNdInd2}) holds for $k=j$, which concludes the proof.
\end{proof}

\subsection{Proofs of Section \ref{sec:admissibility}: Admissibility of the Optimal Strategy}\label{App:proof:admissible}

\begin{proof}[Proof of Proposition \ref{prop:admissibility}] {Let us define the numerators of ${}^{\scaleto{(1)}{5pt}}\!A^+_{\tk}$ and ${}^{\scaleto{(2)}{5pt}}\!A^+_{\tk}$ in \eqref{eq:A1} as $\beta_{t_{k}}^{\pm}$ and $\eta_{t_{k}}^{\pm}$, respectively.} First, we will prove the result when the price process is a martingale. In this case, all we need to show is that 
\[
L_{\tk}^{+,*}+L_{\tk}^{-,*} = ({}^{\scaleto{(1)}{5pt}}\!A^+_{\tk}-{}^{\scaleto{(1)}{5pt}}\!A^-_{\tk})I_\tk+({}^{\scaleto{(2)}{5pt}}\!A^+_{\tk}-{}^{\scaleto{(2)}{5pt}}\!A^-_{\tk})+({}^{\scaleto{(3)}{5pt}}\!A^+_{\tk}+{}^{\scaleto{(3)}{5pt}}\!A^-_{\tk}) >0.
\]
First we prove that, under the Condition (2) (Equation \eqref{Cnd2PosSpr}) and Condition (4) of Proposition \ref{prop:admissibility}, $\alpha_\tkk^{1+}=\alpha_\tkk^{1-}$. Indeed, by Eqs.~\eqref{eqn:alpha1p} and \eqref{eqn:alpha1m},
\begin{align*}
\alpha_\tkk^{1+} &=\alpha_\tkk^{1,1}\dfrac{\pi_\tkk(1,1)}{\pi_\tkk^+} + \Ex\big[\alpha_\tkk |\F,\onep=1,\onem=0\big]\left(1-\dfrac{\pi_\tkk(1,1)}{\pi_\tkk^+}\right),\\
\alpha_\tkk^{1-} &=\alpha_\tkk^{1,1}\dfrac{\pi_\tkk(1,1)}{\pi_\tkk^-} + \mathbb{E}\big[\alpha_\tkk |\F,\onep=0,\onem=1\big]\left(1-\dfrac{\pi_\tkk(1,1)}{\pi_\tkk^-}\right).
\end{align*}
Now, recall that $\alpha_\tkk$ is $\Hkkk$-measurable (see Theorem \ref{prop:measurability}), and, by assumption, $\alpha_\tkk$ depends on $\mathbbm{1}_{t_{k+1}}^+$ and $\mathbbm{1}_{t_{k+1}}^-$ only through $\mathbbm{1}_{t_{k+1}}^++\mathbbm{1}_{t_{k+1}}^-$. This means that $\alpha_\tkk=\Phi\big(\mathbbm{1}_{t_{k+1}}^++\mathbbm{1}_{t_{k+1}}^-,\mathbbm{1}_{t_k}^\pm,\ldots,\mathbbm{1}_{t_{k+2-\varpi}}^\pm\big)$ for some function $\Phi$. It follows that
\begin{align*}
\Ex\big[\alpha_\tkk |\F,\onep=1,\onem=0\big]&=\Phi\big(1+0,\mathbbm{1}_{t_k}^\pm,\ldots,\mathbbm{1}_{t_{k+2-\varpi}}^\pm\big)
=\Ex\big[\alpha_\tkk |\F,\onep=0,\onem=1\big].
\end{align*}
Since $\pi_\tkk^+=\pi_\tkk^-$, we then conclude that $\alpha_\tkk^{1+}=\alpha_\tkk^{1-}$.
By doing the same computations as in the set of Equations \ref{eqn:alphaplus}, we can obtain the analogous relations to Equations \eqref{eqn:alpha1p} and \eqref{eqn:alpha1m} for $h_{t_{k+1}}$ and these can be used to show that $h_{t_k}^{1+}=h_{t_k}^{1-}$ in the same way as above.

Let $\alpha_\tkk^{1}:=\alpha_\tkk^{1\pm}$ and $h_\tkk^{1}:=h_\tkk^{1\pm}$.
Since $\pi_\tkk^+=\pi_\tkk^-$, it follows that 
\begin{alignat*}{3}
\beta_\tk^+-\beta_\tk^-&=
{\pi^+_{\tkk}}{\pi^-_{\tkk}}(\alpha^1_{\tkk})^2(\mu_{c}^+\mu_{c^2}^{-}-\mu_{c}^{-}\mu_{c^{2}}^{+})-\alpha^1_{\tkk}{\pi_\tkk(1,1)}\alpha^{1,1}_\tkk\mu_{c}^{-}\mu_{c}^{+}(\pi_{t_{k+1}}^{-}\mu_{c}^{-}-
\pi_{t_{k+1}}^{+}\mu_{c}^{+})&=0.\\
\eta_\tk^+-\eta_\tk^-&=
{\pi^+_{\tkk}}{\pi^-_{\tkk}}\alpha^1_{\tkk}h_\tkk^1(\mu_{c}^+\mu_{c^2}^{-}-\mu_{c}^{-}\mu_{c^{2}}^{+})-h^1_{\tkk}{\pi_\tkk(1,1)}\alpha^{1,1}_\tkk\mu_{c}^{-}\mu_{c}^{+}(\pi_{t_{k+1}}^{-}\mu_{c}^{-}-
\pi_{t_{k+1}}^{+}\mu_{c}^{+})&=0,
\end{alignat*}
implying that ${}^{\scaleto{(1)}{5pt}}\!A^+_{\tk}-{}^{\scaleto{(1)}{5pt}}\!A^-_{\tk}=0$ and ${}^{\scaleto{(2)}{5pt}}\!A^+_{\tk}-{}^{\scaleto{(2)}{5pt}}\!A^-_{\tk}=0$.
Thus, it suffice to show that ${}^{\scaleto{(3)}{5pt}}\!A^+_{\tk}+{}^{\scaleto{(3)}{5pt}}\!A^-_{\tk}>0$.  By  assumption (4) of Proposition \ref{prop:admissibility}, $\muoo^{\pm}=\muo^{\pm}\mu_p^\pm$ and $\muto^{\pm}=\mut^{\pm}\mu_p^\pm$ and, thus, we can write ${}^{\scaleto{(3)}{5pt}}\!A^+_{\tk}+{}^{\scaleto{(3)}{5pt}}\!A^-_{\tk}$ as
\begin{align*}
	{}^{\scaleto{(3)}{5pt}}\!A^+_{\tk}+{}^{\scaleto{(3)}{5pt}}\!A^-_{\tk}=\frac{1}{2\gamma_\tk}\left(\mu_p^+\Gamma^+\big(\mut^+,\mut^-\big)+\mu_p^-\Gamma^-\big(\mut^+,\mut^-\big)\right),
\end{align*}
where 
{\small
\begin{align*}
\Gamma^+(\mathbf{x},\mathbf{y})&= \pi^+_{\tkk}\pi^-_{\tkk}(\alpha^1_{\tkk}\mathbf{y}-\muo^-)(\muo^+-2\alpha^1_{\tkk}\mathbf{x})+2\big[\alpha^{1,1}_{\tkk}{ \pi_\tkk(1,1)}\muo^+\muo^-\big]^2-{ \pi^+_{\tkk}}{ \pi_\tkk(1,1)}\alpha^{1,1}_{\tkk}(\muo^+)^2\muo^-,\\
\Gamma^-(\mathbf{x},\mathbf{y})&=\pi^+_{\tkk}\pi^-_{\tkk}(\alpha^1_{\tkk}\mathbf{x}-\muo^+)(\muo^--2\alpha^1_{\tkk}\mathbf{y})+2\big[\alpha^{1,1}_{\tkk}{ \pi_\tkk(1,1)}\muo^+\muo^-\big]^2-{ \pi^-_{\tkk}}{\pi_\tkk(1,1)}\alpha^{1,1}_{\tkk}(\muo^-)^2\muo^+.
\end{align*}
}
Recall from \eqref{eq:Dneg} that $\gamma_{t_{k}}<0$ and, thus, it remains to show that the numerator $N({}^{\scaleto{(3)}{5pt}}\!A^+_{\tk}+{}^{\scaleto{(3)}{5pt}}\!A^-_{\tk})=\mu_p^+\Gamma^+\big(\mut^+,\mut^-\big)+\mu_p^-\Gamma^-\big(\mut^+,\mut^-\big)$ is also negative. Since $\Gamma^+$ is a linear function of $\mathbf{y}$, $\mut^-\geq(\muo^-)^2$ and, by Lemma~\ref{lemma:alpha}, $\alpha^1_{t_k+1}<0$, we have that
\[
\left.\frac{\partial}{\partial \mathbf{y}}\Gamma^+(\mathbf{x},\mathbf{y})\right|_{\mathbf{x}=\mut^+}=\pi^+_{\tkk}\pi^-_{\tkk}\alpha^1_\tkk(\muo^+-2\alpha^1_\tkk\mut^+)<0.
\]
From here, it follows that for every $\mathbf{x}\in\R$, $\Gamma^+\Big(\mathbf{x},\mut^-\Big)\leq\Gamma^+\Big(\mathbf{x},(\muo^-)^2\Big)$.
Similarly, 
\[
\frac{\partial}{\partial \mathbf{x}}\Gamma^+(\mathbf{x},(\muo^-)^2)=-2\alpha^1_\tkk\pi^+_{\tkk}\pi^-_{\tkk}\big[\alpha^1_\tkk(\muo^-)^2-\muo^-\big]<0,
\]
implying that $\Gamma^+\Big(\mut^+,\mut^-\Big)\leq \Gamma^+\Big((\muo^+)^2,(\muo^-)^2\Big)$. Note 
\begin{align*}
\Gamma^+\Big((\muo^+)^2,(\muo^-)^2\Big)
&=\muo^+\muo^-\bigg\{2\Big[\underbrace{(\alpha^{1,1}_\tkk\pi_\tkk(1,1))^2-(\alpha^1_\tkk)^2\pi^+_\tkk\pi^-_\tkk}_{\text{(I)}}\Big]\muo^+\muo^-+\underbrace{2\alpha^1_\tkk\pi^+_\tkk\pi^-_\tkk\muo^+}_{\text{(II)}}\\
&\qquad\qquad\quad+\pi^+_\tkk\Big[\underbrace{\alpha^1_\tkk\pi^-_\tkk\muo^--\alpha_\tkk^{1,1}\pi_\tkk(1,1)\muo^+}_{\text{(III)}}\Big]-\pi^+_\tkk\pi^-_\tkk\bigg\}.
\end{align*}
By Equation \eqref{eqn:comparison:alpha},
$(\alpha^{1,1}_\tkk\pi_\tkk(1,1))^2\leq(\alpha^1_\tkk)^2\pi^+_\tkk\pi^-_\tkk$,
implying that the term (I) above is negative. Further, by Lemma~\ref{lemma:alpha}, we know that $\alpha_\tkk<0$ and, thus, the term (II) above is also negative.
Finally, by using our assumption \eqref{Cnd1PosSpr} and Eq.~\eqref{eqn:comparison:alpha}, we can conclude that
the term (III) above also negative. It follows that
$\Gamma^+\Big(\mut^+,\mut^-\Big)\leq \Gamma^+\Big((\muo^+)^2,(\muo^-)^2\Big)<0$. Similarly, we can show that $\Gamma^-\Big(\mut^+,(\muo^-)^2\Big)<0$ and conclude that $N{({}^{\scaleto{(3)}{5pt}}\!A^+_{\tk}+{}^{\scaleto{(3)}{5pt}}\!A^-_{\tk})}<0$.

\medskip

For the general midprice dynamics, we need to recall Eqs.~\eqref{eqn:gamma:eta}. 
Then, under our conditions (1)-(4),
 we have trivially that $\rho_k^+=\rho_k^-$ and $\psi_k^+=\psi_k^-$. 
It then follows from Eq.~\eqref{eq:LtildeNMGbb} that
$\widetilde{L}_{\tk}^{+,*} + \widetilde{L}_{\tk}^{-,*}={L}_{\tk}^{+,*}+{L}_{\tk}^{-,*}>0$, as we have already shown. This concludes the general dynamics of the midprice process.
\end{proof}

\begin{proof}[Proof of Corollary \ref{cor:cond2:alt}]
The proof follows along the same lines as the one of Proposition \ref{prop:admissibility}. 
As before, since $\alpha_\tkk$ {depends} on $\mathbbm{1}_{t_{k}}^+$ and $\mathbbm{1}_{t_{k}}^-$ only through $\mathbbm{1}_{t_{k}}^++\mathbbm{1}_{t_{k}}^-$, 
\[
\alpha_\tkk^{1,0}=\Ex\big[\alpha_\tkk |\F,\onep=1,\onem=0\big]=\mathbb{E}\big[\alpha_\tkk |\F,\onep=0,\onem=1\big]=\alpha_\tkk^{0,1},
\]
and, thus, {recalling (\ref{eqn:alpha1p})-(\ref{eqn:alpha1m}),
it} follows that $\alpha_\tkk^{1+}=\alpha_\tkk^{1-}$ when $\pi_\tkk(1,1)=0$. Similarly, $h_\tkk^{1+}=h_\tkk^{1-}$ and we can denote $\alpha_\tkk^{1}:=\alpha_\tkk^{1\pm}$ and $h_\tkk^{1}:=h_\tkk^{1\pm}$.
Since $\pi_\tkk(1,1)=0$, $\mu_c^+=\mu_c^-$ and $\mu_{c^2}^+=\mu_{c^2}^-$, it also follows that $\beta_\tk^+-\beta_\tk^-=0$ and $\eta_\tk^+-\eta_\tk^-=0$, 
implying that ${}^{\scaleto{(1)}{5pt}}\!A^+_{\tk}-{}^{\scaleto{(1)}{5pt}}\!A^-_{\tk}=0$ and ${}^{\scaleto{(2)}{5pt}}\!A^+_{\tk}-{}^{\scaleto{(2)}{5pt}}\!A^-_{\tk}=0$.
We are only left to prove that ${}^{\scaleto{(3)}{5pt}}\!A^+_{\tk}+{}^{\scaleto{(3)}{5pt}}\!A^-_{\tk}>0$, but as the denominator is negative, we only need to prove that its numerator, denoted by $N({}^{\scaleto{(3)}{5pt}}\!A^+_{\tk}+{}^{\scaleto{(3)}{5pt}}\!A^-_{\tk})$, is also negative. However, under our conditions, we have {that
\begin{align*}
N({}^{\scaleto{(3)}{5pt}}\!A^+_{\tk}+{}^{\scaleto{(3)}{5pt}}\!A^-_{\tk})&=\pi^+_{\tkk}\pi^-_{\tkk}\big(\alpha^1_{\tkk}\mu_{c^2}-\mu_c\big)\bigg[\big(\mu_{cp}^+-2\alpha^1_{\tkk}\mu_{c^2p}^+\big) + \big(\mu_{cp}^--2\alpha^1_{\tkk}\mu_{c^2p}^-\big)\bigg],
\end{align*}
which is negative because $\alpha^1_{\tkk}\mu_{c^2}-\mu_c<0$ (since $\alpha^1_{t_{k+1}}\leq 0$), and each of the terms inside the brackets above are positive.} 
For the general midprice dynamics notice that our conditions and the definitions (\ref{eqn:gamma:eta}) readily imply that 
$\rho_k^+=\rho_k^-$ and $\psi_k^+=\psi_k^-=0$,
and the proof follows as in the proof of Proposition \ref{prop:admissibility}.
\end{proof}

\begin{proof}[Proof of Lemma \ref{lemma:gfunc}]
The proof will done by backwards induction. We {only give the details for $\{\alpha_{t_k}\}_{k=1}^{N+1}$  (the proof for $\{h_{t_k}\}_{k=1}^{N+1}$ is very similar). 
To start, note that at time $k=N+1$,  $\alpha_{t_{N+1}}=-\lambda$ and, hence, {it trivially satisfies the condition}. For the inductive step, assume that the lemma holds for $j=k+1$. We will now proceed to prove that $\alpha_{t_k}$ depends on $\pmb{e}_{t_k}^+,\pmb{e}_{t_k}^-$ only thorugh $\pmb{e}_{t_k}^++\pmb{e}_{t_k}^-$.
By Equation \eqref{eqn:alpha:decompos}, $\alpha_\tk=\alpha_{t_{k+1}}^0+N_k/\gamma_{t_k}$
with 
\begin{equation}\label{IJNRTN}
\begin{aligned}
N_k&= (\alpha_\tkk^{1+}\muo^+\pi_\tkk^+)^2 \pi_\tkk^- (\alpha_\tkk^{1-}\mut^- - \muo^-) +(\alpha_\tkk^{1-}\muo^-\pi_\tkk^-)^2 \pi_\tkk^+ (\alpha_\tkk^{1+}\mut^+ - \muo^+)\\
&\quad -2\alpha_\tkk^{1+}\alpha_\tkk^{1-}{\pi^+_{\tkk}}{\pi^-_{\tkk}}\pi_\tkk(1,1)\alpha^{1,1}_{\tkk}(\muo^+\muo^-)^2,\\
\gamma_{t_k}&=\big[{\pi_\tkk(1,1)}\alpha^{1,1}_{\tkk}\muo^+\muo^-\big]^2-{\pi^+_{\tkk}}{ \pi^-_{\tkk}}(\alpha^{1+}_{\tkk}\mut^+-\muo^+)(\alpha^{1-}_{\tkk}\mut^--\muo^-).
\end{aligned}
\end{equation}
By assumption, $\pi_\tkk^+$, $\pi_\tkk^-$, and $\pi_{t_{k+1}}(1,1)$ are functions of $\pmb{e}_{t_k}^++\pmb{e}_{t_k}^-$. Then, we only need to show that $\alpha_\tkk^{0}$, $\alpha_\tkk^{1+}$, $\alpha_\tkk^{1-}$, and $\alpha_\tkk^{1,1}$ have the same property.
By the induction hypothesis, we have that $\alpha_{t_{k+1}}$ is of the form:
\[
\alpha_{t_{k+1}}=\Phi(\pmb{e}_{t_{k+1}}^++\pmb{e}_{t_{k+1}}^-)=\Phi\big(\mathbbm{1}_{t_{k+1}}^++\mathbbm{1}_{t_{k+1}}^-,\ldots,\mathbbm{1}_{t_{k+2-\varpi}}^++\mathbbm{1}_{t_{k+2-\varpi}}^-\big),
\]
for a function $\Phi: \{0,1,2\}^{\varpi}\to\mathbb{R}$. Thus, using \eqref{eqn:prop:cond:exp} and \eqref{eqn:gfunc:pi},
\begin{align*}
\alpha_{t_{k+1}}^0&=\mathbb{E}[\alpha_{t_{k+1}} \,|\,\F]\\
		&= \sum\limits_{u=0}^1\sum\limits_{v=0}^1 \Phi\left(u+v,\mathbbm{1}_{t_k}^++\mathbbm{1}_{t_k}^-\ldots,\mathbbm{1}_{t_{k+2-\varpi}}^++\mathbbm{1}_{t_{k+2-\varpi}}^-\right)\Px\left[\mathbbm{1}_{t_{k+1}}^+=u,\mathbbm{1}_{t_{k+1}}^-=v\,\Big|\,\Hkk\right],
\end{align*}
implying that $\alpha_{t_{k+1}}^0$ is a function of $\pmb{e}_{t_k}^++\pmb{e}_{t_k}^-$ in light of \eqref{eq1234}.
We can similarly deal with $\alpha_\tkk^{1+}$, $\alpha_\tkk^{1-}$, and $\alpha_\tkk^{11}$ using the expressions in (\ref{CndCmpAlph}). For instance,
\begin{align*}
\alpha_\tkk^{1+} 
&=\mathbb{E}[\alpha_{t_{k+1}} \,|\,\F,\mathbbm{1}_{t_{k+1}}^+=1]\\
&=\sum_{\ell\in\{0,1\}}\Phi(1+\ell,\mathbbm{1}_{t_{k}}^++\mathbbm{1}_{t_{k}}^-,\dots,\mathbbm{1}_{t_{k+2-\varpi}}^++\mathbbm{1}_{t_{k+2-\varpi}}^-)\mathbb{P}\big[\mathbbm{1}_{t_{k+1}}^-=\ell\big|\Fj,\mathbbm{1}_{t_{k+1}}^+=1\big]\\
&=\sum_{\ell\in\{0,1\}}\Phi(1+\ell,\mathbbm{1}_{t_{k}}^++\mathbbm{1}_{t_{k}}^-,\dots,\mathbbm{1}_{t_{k+2-\varpi}}^++\mathbbm{1}_{t_{k+2-\varpi}}^-)\frac{\mathbb{P}\big[\mathbbm{1}_{t_{k+1}}^+=1,\mathbbm{1}_{t_{k+1}}^-=\ell\big|\Hjj\big]}{\mathbb{P}\big[\mathbbm{1}_{t_{k+1}}^+=1\big|\Hkk\big]},
\end{align*}
which now is evident that is a function of $\pmb{e}_{t_k}^++\pmb{e}_{t_k}^-$.}
\end{proof}

\subsection{Proofs of Section \ref{sec:invent:analysis}: Inventory Analysis of the optimal strategy}\label{App:proof:inventory}

\begin{proof}[Proof of Proposition \ref{ExpFutInv}]
{For simplicity, we omit `*' when referring to the optimal strategies $L_{\tk}^{\pm,*}$. 
By Eq.~\eqref{eqi11} and Corollary \ref{cor:candidates}, we have that 
\[
I_{t_{k+1}}=I_{t_{k}}-\onep\cp(\Pp-\lp) +\onem\cm(\Pm-\lm)
\]
with $L_{\tk}^{+}={}^{\scaleto{(1)}{5pt}}\!A^+_{\tk}I_\tk+{}^{\scaleto{(2)}{5pt}}\!A^+_{\tk}+{}^{\scaleto{(3)}{5pt}}\!A^+_{\tk}$ and $L_{\tk}^{-}=-\,{}^{\scaleto{(1)}{5pt}}\!A^-_{\tk}I_\tk-{}^{\scaleto{(2)}{5pt}}\!A^-_{\tk}+{}^{\scaleto{(3)}{5pt}}\!A^-_{\tk}$. Under the condition $\pi_{t_{k+1}}(1,1)=0$,
the coefficients simplify to 
\[
	{}^{\scaleto{(1)}{5pt}}\!A^\pm_{\tk}=\frac{\muo^\pm\alpha^{1\pm}_{\tkk}}{\muo^\pm-\alpha^{1\pm}_{\tkk}\mut^\pm},\quad{}^{\scaleto{(2)}{5pt}}\!A^\pm_{\tk}=\frac{h^{1\pm}_{\tkk}}{2}\frac{\muo^\pm}{\muo^\pm-\alpha^{1\pm}_{\tkk}\mut^\pm},\quad
	{}^{\scaleto{(3)}{5pt}}\!A^\pm_{\tk}=\frac{\alpha^{1\pm}_{\tkk}}{2}\frac{\mu_{cp}^\pm-2\alpha^{1\pm}_{\tkk}\mu_{c^2p}^\pm}{\muo^\pm-\alpha^{1\pm}_{\tkk}\mut^\pm}.
\]
As explained in the proof of Corollary \ref{cor:cond2:alt}, ${}^{\scaleto{(1)}{5pt}}\!A^+_{\tk}={}^{\scaleto{(1)}{5pt}}\!A^-_{\tk}$, ${}^{\scaleto{(2)}{5pt}}\!A^+_{\tk}={}^{\scaleto{(2)}{5pt}}\!A^-_{\tk}$, $\alpha_\tkk^{1+}=\alpha_\tkk^{1-}$, and $h_\tkk^{1+}=h_\tkk^{1-}$. In particular, we also have ${}^{\scaleto{(3)}{5pt}}\!A^+_{\tk}={}^{\scaleto{(3)}{5pt}}\!A^-_{\tk}$. Hereafter, we omit $\pm$ in the superscripts of ${}^{\scaleto{(\ell)}{5pt}}\!A^\pm_{\tk}$, $\alpha_\tkk^{1\pm}$, $h_\tkk^{1\pm}$,  etc. Furthermore, since $\pi_{t_{k+1}}=\pi^+_{t_{k+1}}=\pi^{-}_{t_{k+1}}$, we can see that \eqref{eqn:htk} simplifies to 
	\begin{align*}\nonumber
	h_{\tk}&= h^0_{\tkk}+2\pi_\tkk \Bigg\{2\Big(\alpha^{1}_{\tkk}\mut-\muo\Big)\;{}^{\scaleto{(1)}{5pt}}\!A_{\tk}{}^{\scaleto{(2)}{5pt}}\!A_{\tk}+2\alpha^{1}_{\tkk}\muo{}^{\scaleto{(2)}{5pt}}\!A^\delta_{\tk}+{}^{\scaleto{(1)}{5pt}}\!A_{\tk} h^{1}_{\tkk}\mu_c\Bigg\}\\
	&= h^0_{\tkk}+2\pi_\tkk \frac{\alpha^{1}_{\tkk}\muo^2}{\muo-\alpha^{1}_{\tkk}\mut}h^{1}_{\tkk},
	\end{align*}  
which, combined with the condition $h_{t_{N+1}}=0$, implies that $h_{\tk}\equiv 0$ and, thus, ${}^{\scaleto{(2)}{5pt}}\!A^+_{\tk}\equiv 0$, for all $k$.

We can then write
\[
	I_{t_{k+1}}=\xi_{t_{k+1}}+\Big(1+\eta_{t_{k+1}}{}^{\scaleto{(1)}{5pt}}\!A_{\tk}\Big)I_{t_{k}},
\]
where 
\begin{align*}
	\eta_{t_{k+1}}&:=\onep\cp+\onem\cm,\\
	\xi_{t_{k+1}}&:=(\onep\cp-\onem\cm){}^{\scaleto{(3)}{5pt}}\!A_{\tk}-(\onep\cp\Pp-\onem\cm \Pm).
\end{align*}
Since $\mathbb{E}[\xi_{t_{k+1}}|\mathcal{F}_{t_{k}}]=0$ and $\mathbb{E}[\eta_{t_{k+1}}|\mathcal{F}_{t_{k}}]=2\pi_{t_{k+1}}\mu_c$, we can then conclude \eqref{Krel1a}-(i).

For \eqref{Krel1a}-(ii), first note that, by simple induction, we have that, when $I_0=0$, 
\begin{align}\label{ReprsnInv}
	I_{t_{k+1}}=\sum_{j=1}^{k+1}\xi_{t_{j}}\prod_{\ell=j}^{k}(1+
	\eta_{t_{\ell+1}}{}^{\scaleto{(1)}{5pt}}\!A_{t_{\ell}}),
\end{align}
under the usual convention that $\prod_{\ell=k+1}^{k}=1$. Next, we claim that, for any $i$, there is a function $\Phi_i:\{0,1,2\}^{\varpi}\to\mathbb{R}$ such that 
\begin{align}\label{ICEWi}
	\mathbb{E}\left[\left.\prod_{\ell=i}^{k}(1+
	\eta_{t_{\ell+1}}{}^{\scaleto{(1)}{5pt}}\!A_{t_{\ell}})\right|\mathcal{F}_{t_{i}}\right]=\Phi_i\left(\pmb{e}_{t_i}^++\pmb{e}_{t_i}^-\right).
\end{align}
Indeed, as shown in the proof of Lemma \ref{lemma:gfunc}, $\alpha_\tkk^{1\pm}$ are functions of $\pmb{e}_{t_k}^++\pmb{e}_{t_k}^-$ and, thus, ${}^{\scaleto{(1)}{5pt}}\!A_{t_{k}}=\Psi_k(\pmb{e}_{t_k}^++\pmb{e}_{t_k}^-)$, for some function $\Psi_k$. Then, by conditioning on $\mathbbm{1}_{t_{k+1}}^+$ and $\mathbbm{1}_{t_{k+1}}^-$,
\[
	\mathbb{E}\left[\left.(1+
	\eta_{t_{k+1}}{}^{\scaleto{(1)}{5pt}}\!A_{t_{k}})\right|\mathcal{F}_{t_{k}}\right]=
	1+2\mu_c\pi_{t_{k+1}}\Psi_k(\pmb{e}_{t_k}^++\pmb{e}_{t_k}^-)=:\Phi_{k}\left(\pmb{e}_{t_k}^++\pmb{e}_{t_k}^-\right). 
\]
Suppose \eqref{ICEWi} holds for $i=j+1$. Then,
\begin{align*}
	\mathbb{E}\left[\left.\prod_{\ell=j}^{k}(1+
	\eta_{t_{\ell+1}}{}^{\scaleto{(1)}{5pt}}\!A_{t_{\ell}})\right|\mathcal{F}_{t_{j}}\right]&=\mathbb{E}\left[\left.(1+
	\eta_{t_{j+1}}{}^{\scaleto{(1)}{5pt}}\!A_{t_{j}})\mathbb{E}\left[\left.\prod_{\ell=j+1}^{k}(1+
	\eta_{t_{\ell+1}}{}^{\scaleto{(1)}{5pt}}\!A_{t_{\ell}})\right|\mathcal{F}_{t_{j+1}}\right]\right|\mathcal{F}_{t_{j}}\right]\\
	&=\mathbb{E}\left[\left.(1+
	\eta_{t_{j+1}}{}^{\scaleto{(1)}{5pt}}\!A_{t_{j}})\Phi_{j+1}\left(\pmb{e}_{t_{j+1}}^++\pmb{e}_{t_{j+1}}^-\right)\right|\mathcal{F}_{t_{j}}\right]\\
	&=2(1+2\mu_c\Psi_j(\pmb{e}_{t_j}^++\pmb{e}_{t_j}^-))
	\Phi_{j+1}(1,\mathbbm{1}_{t_{j}}^++\mathbbm{1}_{t_{j}}^-,\dots,\mathbbm{1}_{t_{j+2-\varpi}}^++\mathbbm{1}_{t_{j+2-\varpi}}^-)\pi_{t_{j+1}}\\
	&\quad +
	\Phi_{j+1}(0,\mathbbm{1}_{t_{j}}^++\mathbbm{1}_{t_{j}}^-,\dots,\mathbbm{1}_{t_{j+2-\varpi}}^++\mathbbm{1}_{t_{j+2-\varpi}}^-)(1-2\pi_{t_{j+1}}),
\end{align*}
which is clearly of the form $\Phi_j\left(\pmb{e}_{t_j}^++\pmb{e}_{t_j}^-\right)$. This shows the validity of \eqref{ICEWi}. 

We are now ready to show the result:
\begin{align*}
	&\mathbb{E}\left[\left.\xi_{t_{j}}\prod_{\ell=j}^{k}(1+
	\eta_{t_{\ell+1}}{}^{\scaleto{(1)}{5pt}}\!A_{t_{\ell}})\right|\mathcal{F}_{t_{j-1}}\right]\\
	&=\mathbb{E}\left[\left.\xi_{t_{j}}\mathbb{E}\left[\left.\prod_{\ell=j}^{k}(1+
	\eta_{t_{\ell+1}}{}^{\scaleto{(1)}{5pt}}\!A_{t_{\ell}})\right|\mathcal{F}_{t_{j}}\right]\right|\mathcal{F}_{t_{j-1}}\right]\\
	&=\mathbb{E}\left[\left.\xi_{t_{j}}\Phi_{j}\left(\pmb{e}_{t_{j}}^++\pmb{e}_{t_{j}}^-\right)\right|\mathcal{F}_{t_{j-1}}\right]\\
	&=\left(\mathbb{E}\left[\left.\xi_{t_{j}}\right|\mathcal{F}_{t_{j-1}},\mathbbm{1}_{t_{j}}^+=1,\mathbbm{1}_{t_{j}}^-=0\right]
	+\mathbb{E}\left[\left.\xi_{t_{j}}\right|\mathcal{F}_{t_{j-1}},\mathbbm{1}_{t_{j}}^+=0,\mathbbm{1}_{t_{j}}^-=1\right]\right)\\
	&\quad \quad \times
	\Phi_{j}(1,\mathbbm{1}_{t_{j-1}}^++\mathbbm{1}_{t_{j-1}}^-,\dots,\mathbbm{1}_{t_{j+1-\varpi}}^++\mathbbm{1}_{t_{j+1-\varpi}}^-)\pi_{t_{j}}\\
	&\quad +\mathbb{E}\left[\left.\xi_{t_{j}}\right|\mathcal{F}_{t_{j-1}},\mathbbm{1}_{t_{j}}^+=0,\mathbbm{1}_{t_{j}}^-=0\right]
	\Phi_{j}(0,\mathbbm{1}_{t_{j-1}}^++\mathbbm{1}_{t_{j-1}}^-,\dots,\mathbbm{1}_{t_{j+1-\varpi}}^++\mathbbm{1}_{t_{j+1-\varpi}}^-)(1-2\pi_{t_{j+1}}),
\end{align*}
which, recalling that $\xi_{t_{j}}:=(\mathbbm{1}_{t_{j}}^+c_{t_{j}}^+-\mathbbm{1}_{t_{j}}^-c_{t_{j}}^-){}^{\scaleto{(3)}{5pt}}\!A_{t_{j-1}}-(\mathbbm{1}_{t_{j}}^+c_{t_{j}}^+p_{t_{j}}^+-\mathbbm{1}_{t_{j}}^-c_{t_{j}}^-p_{t_{j}}^-)$, equals to $0$ because $\mathbb{E}\left[\left.\xi_{t_{j}}\right|\mathcal{F}_{t_{j-1}},\mathbbm{1}_{t_{j}}^+=0,\mathbbm{1}_{t_{j}}^-=0\right]=0$ and 
\begin{align*}
\mathbb{E}\left[\left.\xi_{t_{j}}\right|\mathcal{F}_{t_{j-1}},\mathbbm{1}_{t_{j}}^+=1,\mathbbm{1}_{t_{j}}^-=0\right]
&=\mathbb{E}\left[\left.c^+_{t_j}\right|\mathcal{F}_{t_{j-1}},\mathbbm{1}_{t_{j}}^+=1\right]{}^{\scaleto{(3)}{5pt}}\!A_{t_{j-1}}-\mathbb{E}\left[\left.c^+_{t_j}p^+_{t_j}\right|\mathcal{F}_{t_{j-1}},\mathbbm{1}_{t_{j}}^+=1\right]\\
&=\mu_{c}{}^{\scaleto{(3)}{5pt}}\!A_{t_{j-1}}-\mu_{cp}\\
\mathbb{E}\left[\left.\xi_{t_{j}}\right|\mathcal{F}_{t_{j-1}},\mathbbm{1}_{t_{j}}^+=0,\mathbbm{1}_{t_{j}}^-=1\right]
&=-\mathbb{E}\left[\left.c^-_{t_j}\right|\mathcal{F}_{t_{j-1}},\mathbbm{1}_{t_{j}}^-=1\right]{}^{\scaleto{(3)}{5pt}}\!A_{t_{j-1}}+\mathbb{E}\left[\left.c^-_{t_j}p^-_{t_j}\right|\mathcal{F}_{t_{j-1}},\mathbbm{1}_{t_{j}}^-=1\right]\\
&=-\mu_{c}{}^{\scaleto{(3)}{5pt}}\!A_{t_{j-1}}+\mu_{cp}.
\end{align*}
We then conclude that $\mathbb{E}\left[\xi_{t_{j}}\prod_{\ell=j}^{k}(1+
	\eta_{t_{\ell+1}}{}^{\scaleto{(1)}{5pt}}\!A_{t_{\ell}})\right]=0$ and, thus, in light of the representation \eqref{ReprsnInv}, we conclude that $\mathbb{E}\left[I_{t_{k+1}}\right]=0$, for any $k$.} 
\end{proof}

\subsection{Proofs of Section \ref{sec:invent:runningpen}: Running inventory penalization}\label{App:run:inv}
\begin{proof}[Proof of Theorem \ref{lemma:runn:penalty}]
Plugging the ansatz \eqref{eqn:ansatzPhi} into \refeq{eqn:new:DPP}, we arrive to the equation 
\begin{equation}\label{eq:new:recursion}
\begin{aligned}
&W_\tk+{}^{\scaleto{(\phi)}{5pt}}\!\alpha_{\tk}I_\tk^2+\sk I_\tk+{}^{\scaleto{(\phi)}{5pt}}\!h_{\tk}I_\tk +{}^{\scaleto{(\phi)}{5pt}}\!g_{\tk}\\
&\quad =\sup_{L^\pm_{t_k} \in \mathcal{A}_{t_k}} \mathbb{E}\Big[W_\tkk+{}^{\scaleto{(\phi)}{5pt}}\!\alpha_{\tkk}I_\tkk^2+\skk I_\tkk +{\Blue {}^{\scaleto{(\phi)}{5pt}}\!h_{\tkk}I_\tkk} +{}^{\scaleto{(\phi)}{5pt}}\!g_{\tkk}-\phi I_{t_{k+1}}^2\,\Big|\,\F\Big].\\
&\quad =\sup_{L^\pm_{t_k} \in \mathcal{A}_{t_k}} \mathbb{E}\Big[W_\tkk+({}^{\scaleto{(\phi)}{5pt}}\!\alpha_{\tkk}-\phi)I_\tkk^2+\skk I_\tkk +{\Blue {}^{\scaleto{(\phi)}{5pt}}\!h_{\tkk}I_\tkk} +{}^{\scaleto{(\phi)}{5pt}}\!g_{\tkk}\,\Big|\,\F\Big].
\end{aligned}
\end{equation}
The above expression has the same structure as \eqref{eq113}, but with $\alpha_{\tkk}$ replaced by ${}^{\scaleto{(\phi)}{5pt}}\!\alpha_{\tkk}-\phi$. This is the reason why the results of Theorem \ref{prop:measurability} and Corollary \ref{cor:candidates} follow along the same steps as in Theorem \ref{prop:measurability} and Corollary \ref{cor:candidates} with the term $\alpha_{\tkk}$ (and its variants) {\Blue replaced} by ${}^{\scaleto{(\phi)}{5pt}}\!\alpha_{\tkk}-\phi$. The proof of Lemma \ref{lemma:alpha} also follows though it needs a bit more of care. 
Indeed, this proof is heavily based on the inequalities \eqref{eqn:comparison:alpha}, \eqref{ineq:alpha0}, \eqref{eqn:bounds:alpha11pi11}, which still hold for ${}^{\scaleto{(\phi)}{5pt}}\!\alpha_{\tkk}^{1\pm}$ and ${}^{\scaleto{(\phi)}{5pt}}\!\alpha_{\tkk}^{1,1}$, but we need them to hold for ${}^{\scaleto{(\phi)}{5pt}}\!\alpha_{\tkk}^{1\pm}-\phi$ and ${}^{\scaleto{(\phi)}{5pt}}\!\alpha_{\tkk}^{1,1}-\phi$. Indeed, for \eqref{eqn:comparison:alpha}, since $ \pi_\tkk(1,1)\leq \pi_\tkk^\pm$, then $ -\phi\pi_\tkk^\pm\leq -\phi\pi_\tkk(1,1)$, which together with ${}^{\scaleto{(\phi)}{5pt}}\!\alpha_\tkk^{1\pm} \pi_\tkk^\pm  \leq {}^{\scaleto{(\phi)}{5pt}}\!\alpha_\tkk^{1,1} \pi_\tkk(1,1)$ implies $({}^{\scaleto{(\phi)}{5pt}}\!\alpha_\tkk^{1\pm}-\phi) \pi_\tkk^+  \leq ({}^{\scaleto{(\phi)}{5pt}}\!\alpha_\tkk^{1,1}-\phi) \pi_\tkk(1,1)$. The proof of \eqref{ineq:alpha0} is similar. For the second inequality in \eqref{eqn:bounds:alpha11pi11}, since $\pi_\tkk^+ + \pi_\tkk^- - \pi_\tkk(1,1)\leq 1$, we have $-\phi\pi_\tkk(1,1)\leq{}-\phi\pi_\tkk^+ -\phi \pi_\tkk^- +\phi$, which, together with ${}^{\scaleto{(\phi)}{5pt}}\!\alpha_\tkk^{1,1}\pi_\tkk(1,1) \leq {}^{\scaleto{(\phi)}{5pt}}\!\alpha_\tkk^{1+}\pi_\tkk^+ + {}^{\scaleto{(\phi)}{5pt}}\!\alpha_\tkk^{1-}\pi_\tkk^- - \alpha_\tkk^0$, implies 
\begin{equation*}
({}^{\scaleto{(\phi)}{5pt}}\!\alpha_\tkk^{1,1}-\phi)\pi_\tkk(1,1) \leq ({}^{\scaleto{(\phi)}{5pt}}\!\alpha_\tkk^{1+}-\phi)\pi_\tkk^+ + ({}^{\scaleto{(\phi)}{5pt}}\!\alpha_\tkk^{1-}-\phi)\pi_\tkk^- - (\alpha_\tkk^0-\phi).
\end{equation*}
The rest of the proof follows the same arguments as in the proof of Lemma \ref{lemma:alpha}.
\end{proof}

\bibliographystyle{jtbnew}



\end{document}